\newcommand{\words}{\{0,1\}^*}
\newcommand{\cantor}{\{0,1\}^\mathbb{N}}
\newcommand{\mix}{\{0,1\}^{\leq\omega}}
\newcommand{\segment}{\!\upharpoonright \!}
\theoremstyle{plain}
\newtheorem{theorem}{Theorem}[section]
\newtheorem{proposition}[theorem]{Proposition}
\newtheorem{corollary}[theorem]{Corollary}
\newtheorem{definition}[theorem]{Definition}
\newtheorem*{theorem*}{Theorem}
\newtheorem{example}[theorem]{Example}
\theoremstyle{remark}
\newtheorem{remark}[theorem]{{\bf Remark.}}
\title{{\bf\huge Kolmogorov Complexity in perspective}
\bigskip\bigskip
\\{\bf Part I: Information Theory and Randomness}\footnote{%
Published in Synthese, 2008.
                                 }%
}
\author{Marie Ferbus-Zanda\\
{\footnotesize LIAFA, CNRS \& Universit\'e Paris 7}\\
{\footnotesize case 7014}\\
{\footnotesize 75205 Paris Cedex 13 France}\\
{\footnotesize Marie.Ferbus@liafa.jussieu.fr}
\and Serge Grigorieff\\
{\footnotesize LIAFA, CNRS \& Universit\'e Paris 7}\\
{\footnotesize case 7014}\\
{\footnotesize 75205 Paris Cedex 13 France}\\
{\footnotesize seg@liafa.jussieu.fr}}
\date{}
\begin{document}
\maketitle
%

\begin{abstract}
%
%
\noindent                                                               
We survey diverse approaches to the notion of information:
from Shannon entropy to Kolmogorov complexity.
Two of the main applications of Kolmogorov complexity are presented:
randomness and classification.
The survey is divided in two parts in the same volume.
\medskip
\\
Part I is dedicated to information 
theory and the mathematical
formalization of randomness based on Kolmogorov complexity.
This last application goes back to the 60's and 70's with the work
of Martin-L\"of, Schnorr, Chaitin, Levin,
and has gained new impetus in the last years.
%
%
\bigskip
\\
\noindent{\bf  Keywords:} Logic, Computer Science, Algoritmmic 
Information Theory, Shannon Information Theory, Kolmogorov Complexity, 
Randomness.
%
\end{abstract}
%
%
{\footnotesize\tableofcontents}
\normalsize
%
%
\newpage
%
\noindent {\bf Note.}
Following Robert Soare's recommendations (\cite{soareBSL}, 1996),
which have now gained large agreement, we write
{\em computable} and {\em computably enumerable}
in place of the old fashioned
{\em recursive} and {\em recursively enumerable}.
\medskip
\\\noindent {\bf Notation.}
By $\log x$ (resp. $\log_s x$) we mean the logarithm of $x$
in base $2$ (resp. base $s$ where $s\geq2$).
The ``floor" and ``ceil" of a real number $x$ are denoted by
$\lfloor x\rfloor$ and $\lceil x\rceil$:
they are respectively the largest integer $\leq x$
and the smallest integer $\geq x$.
Recall that, for $s\geq2$,
the length of the base $s$ representation of an integer $k$
is $\ell\geq1$ if and only if $s^{\ell-1} \leq k < s^\ell$.
Thus, the length of the base $s$ representation of an integer $k$
is
$1+\lfloor\log_s k\rfloor=1+\lfloor\frac{\log k}{\log s}\rfloor$.
\\
The number of elements of a finite family $\+F$ is denoted
by $\sharp\+F$.
\\
The length of a word $u$ is denoted by $|u|$.
%
\section{Three approaches to a quantitative definition
of information}
%
A title borrowed from Kolmogorov' seminal paper (\cite{kolmo65}, 1965).
%
%
\subsection{Which information?}
%
\subsubsection{About anything...}
About anything can be seen as conveying information.
As usual in mathematical modelization,
we retain only a few features of some real entity or process,
and associate to them some finite or infinite mathematical objects.
For instance,
\begin{itemize}
\item[\textbullet]
- an integer or a rational number or a word in some alphabet,
\\- a finite sequence or a finite set of such objects,
\\- a finite graph,...
\item[\textbullet]
- a real,
\\- a finite or infinite sequence of reals or a set of reals,
\\- a function over words or numbers,...
\end{itemize}
This is very much as with probability spaces. For instance,
to modelize the distributions of $6$ balls into $3$ cells,
(cf. Feller, ~\cite{feller}, \S I.2, II.5)
we forget everything about the nature of balls and cells
and of the distribution process,
retaining only two questions:
``how many balls in each cell?" and
``are the balls and cells distinguishable or not?".
Accordingly, the modelization considers
\\
- either the $729=3^6$ maps from the set of balls
into the set of cells
in case the balls are distinguishable and so are the cells
(this is what is done in Maxwell-Boltzman statistics),
\\
- or the $28=\left(\begin{array}{c} 6+(3-1) \\6 \end{array}\right)$ 
triples of non negative integers
with sum\footnote{This value is easily obtained by identifying such
a triple with a binary word
with six letters $0$ for the six balls
and two letters $1$ to mark the partition in the three cells.}
 $6$
in case the cells are distinguishable but not the balls
(this is what is done in Bose-Einstein statistics)
\\
- or the $7$ sets of at most $3$ integers with sum $6$
in case the balls are undistinguishable and so are the cells.
%
%
\subsubsection{Especially words}\label{sss:words}
In information theory, special emphasis is made on information
conveyed by words on finite alphabets.
I.e., on {\em sequential information} as opposed to the obviously
massively parallel and interactive distribution of information
in real entities and processes.
A drastic reduction which allows for mathematical developments
(but also illustrates the Italian saying
``traduttore, traditore!").
\medskip\\
As is largely popularized by computer science, any finite alphabet
with more than two letters can be reduced to one with exactly
two letters.
For instance, as exemplified by the ASCII code
(American Standard Code for Information Interchange),
any symbol used in written English
-- namely the lowercase and uppercase letters, the decimal digits,
the diverse punctuation marks, the space, apostrophe, quote,
left and right parentheses --
together with some simple typographical commands
-- such as tabulation, line feed, carriage return or
``end of file" --
can be coded by binary words of length $7$
(corresponding to the $128$ ASCII codes).
This leads to a simple way to code any English text by
a binary word (which is $7$ times longer)\footnote{For
other European languages which have a lot of
diacritic marks, one has to consider the $256$
codes of Extended ASCII which have length $8$.
And for non European languages, one has to turn to the
$65~536$ codes of Unicode which have length $16$.}.
\medskip\\
Though quite rough, the length of a word is the basic measure
of its information content.
Now, a fairness issue faces us:
richer the alphabet, shorter the word.
Considering groups of $k$ successive letters as new letters
of a super-alphabet, one trivially divides the length by $k$.
For instance, a length $n$ binary word becomes a length
$\lceil \frac{n}{256}\rceil$ word with the usual packing of bits
by groups of $8$ (called bytes) which is done in computers.
\\
This is why all considerations about the length of words
will always be developed relative to binary alphabets.
A choice to be considered as a {\em normalization of length}.
\medskip\\
Finally, we come to the basic idea to measure the information
content of a mathematical object $x$ :
\medskip\\\medskip\centerline{\em
\begin{tabular}{|rcl|}
\hline
information content of $x$
&=&
\begin{tabular}{l}
length of a shortest binary word\smallskip\\
which ``encodes" $x$
\end{tabular}
\\\hline
\end{tabular}}
What do we mean precisely by ``encodes" is the crucial question.
Following the trichotomy pointed by Kolmogorov in
\cite{kolmo65}, 1965, we survey three approaches.
%
%
\subsection{Combinatorial approach: entropy}\label{s:combinatorial}
%
\subsubsection{Constant-length codes}\label{sss:constantlength}
Let us consider the family $A^n$ of length $n$ words in an alphabet
$A$ with $s$ letters $a_1,...,a_s$.
Coding the $a_i$'s by binary words $w_i$'s all of length
$\lceil\log s\rceil$, to any word $u$ in $A^n$ we can associate
the binary word $\xi$ obtained by substituting the $w_i$'s to the
occurrences of the $a_i$'s in $u$.
Clearly, $\xi$ has length $n\lceil\log s\rceil$.
Also, the map $u\mapsto\xi$ from the set $A^*$ of words in alphabet
$A$ to the set $\{0,1\}^*$ of binary words is very simple. Mathematically, considering on $A^*$ and $\{0,1\}^*$
the algebraic structure of monoid
given by the concatenation product of words,
this map $u\mapsto\xi$ is a morphism
since the image of a concatenation $uv$ is the concatenation
of the images of $u$ and $v$.
%
%
\subsubsection{Variable-length prefix codes}\label{sss:varlength}
Instead of coding the $s$ letters of $A$ by binary words of length
$\lceil\log s\rceil$, one can code the $a_i$'s by binary words $w_i$'s
having different lengthes so as to associate
short codes to most frequent letters and long codes to rare ones.
This is the basic idea of compression.
Using such codes, the substitution of the $w_i$'s to the
occurrences of the $a_i$'s in a word $u$ gives a binary word $\xi$.
And the map $u\mapsto\xi$ is again very simple.
It is still a morphism from the monoid of words on alphabet $A$
to the monoid of binary words
and can also be computed by a finite automaton.
\medskip
\\
Now, we face a problem: can we recover $u$ from $\xi$ ?
i.e., is the map $u\mapsto \xi$ injective?
In general the answer is no.
However, a simple sufficient condition to ensure decoding
is that the family $w_1,...,w_s$ be a so-called
{\em prefix-free code} (or {\em prefix code}).
Which means that if $i\neq j$ then $w_i$ is not a prefix of $w_j$.
\begin{quote}
This condition insures that there is a unique $w_{i_1}$
which is a prefix of $\xi$. 
Then, considering the associated suffix $\xi_1$ of $v$
(i.e., $v=w_{i_1}\xi_1$) there is a unique $w_{i_2}$
which is a prefix of $\xi_1$, i.e., $u$ is of the form
$u=w_{i_1}w_{i_2}\xi_2$.
And so on.
\end{quote}
\noindent
Suppose the numbers of occurrences in $u$ of the letters
$a_1,...,a_s$ are $m_1,...,m_s$,
so that the length of $u$ is $n=m_1+...+m_s$.
Using a prefix-free code $w_1,...,w_s$, the binary word $\xi$
associated to $u$ has length $m_1|w_1|+...+m_s|w_s|$.
A natural question is, given $m_1,...,m_s$,
{\em how to choose the prefix-free code $w_1,...,w_s$
so as to minimize the length of $\xi$~?}
\\
Huffman (\cite{huffman}, 1952) found a very efficient
algorithm (which has linear time complexity if the frequencies
are already ordered).
This algorithm (suitably modified to keep its top efficiency
for words containing long runs of the same data)
is nowadays used in nearly every application
that involves the compression and transmission of data:
fax machines, modems, networks,...
%
\subsubsection{Entropy of a distribution of frequencies}
\label{sss:entropy}
The intuition of the notion of entropy in information theory
is as follows.
Given natural integers $m_1,...,m_s$, consider the family
$\+F_{m_1,...,m_s}$ of length $n=m_1+...+m_s$ words
of the alphabet $A$ in which there are exactly $m_1,...,m_s$
occurrences of letters $a_1,...,a_s$.
How many binary digits are there in the binary representation
of the number of words in $\+F_{m_1,...,m_s}$~?
It happens (cf. Proposition \ref{p:H})
that this number is essentially linear in
$n$, the coefficient of $n$ depending solely on the frequencies
$\frac{m_1}{n},...,\frac{m_s}{n}$.
It is this coefficient which is called the entropy $H$ of the
distribution of the frequencies $\frac{m_1}{n},...,\frac{m_s}{n}$.
\begin{definition}[Shannon, \cite{shannon48}, 1948]
Let $f_1,...,f_s$ be a distribution of frequencies,
i.e., a sequence of reals in $[0,1]$ such that $f_1+...+f_s=1$.
The entropy of $f_1,...,f_s$ is the real
$$
H = -(f_1\log(f_1) +...+f_s\log(f_s))
$$
\end{definition}
\begin{proposition}[Shannon, \cite{shannon48}, 1948]\label{p:H}
Let $m_1,...,m_s$ be natural integers and $n=m_1+...+m_s$.
Then, letting $H$ be the entropy of the distribution of frequencies
$\frac{m_1}{n},...,\frac{m_s}{n}$,
the number $\sharp\+F_{m_1,...,m_s}$ of words in
$\+F_{m_1,...,m_s}$ satisfies
$$
\log(\sharp\+F_{m_1,...,m_s}) = nH + O(\log n)
$$
where the bound in $O(\log n)$ depends solely on $s$
and not on $m_1,...,m_s$.
\end{proposition}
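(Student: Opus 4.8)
The plan is to express $\sharp\+F_{m_1,\dots,m_s}$ exactly as the multinomial coefficient $\binom{n}{m_1,\dots,m_s} = \frac{n!}{m_1!\cdots m_s!}$, since a word in $\+F_{m_1,\dots,m_s}$ is precisely a choice of which positions among $\{1,\dots,n\}$ receive each letter. Taking logarithms, the task reduces to showing $\log(n!) - \sum_{i=1}^s \log(m_i!) = nH + O(\log n)$ with the implied constant depending only on $s$. First I would bring in Stirling's formula in the crude form $\log(k!) = k\log k - k\log e + O(\log k)$ valid for $k \geq 1$ (and noting $\log(0!) = 0$, so letters with $m_i = 0$ contribute nothing and can be dropped, which is why only $s$ and not the individual $m_i$ matters for the error term).

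Next I would substitute and collect terms. The $-k\log e$ contributions telescope: $-n\log e + \sum_i m_i \log e = 0$ because $\sum_i m_i = n$. The main terms give $n\log n - \sum_i m_i\log m_i$, and I would rewrite this using $m_i = n f_i$ (where $f_i = m_i/n$), so that $m_i \log m_i = n f_i(\log n + \log f_i)$, hence $\sum_i m_i\log m_i = n\log n \sum_i f_i + n\sum_i f_i\log f_i = n\log n + n\sum_i f_i \log f_i$. Subtracting, the $n\log n$ cancels and we are left with $-n\sum_i f_i\log f_i = nH$ by the definition of entropy. This handles the leading term exactly.

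Finally I would bound the error. Each application of Stirling contributes an $O(\log k)$ term; there is one for $n!$ and at most $s$ for the $m_i!$ with $m_i \geq 1$. Since each $m_i \leq n$, every such term is $O(\log n)$, and there are at most $s+1$ of them, so the total error is $O(s\log n)$, i.e.\ $O(\log n)$ with a constant depending only on $s$. A small point to check carefully is the edge case where some $m_i$ equals $0$: there the convention $0\log 0 = 0$ must be used in the definition of $H$ (consistent with $\lim_{x\to 0^+} x\log x = 0$), and such terms drop out of both the combinatorial count and the entropy sum, so they cause no difficulty.

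The main obstacle — really the only subtlety — is keeping the error analysis uniform in $m_1,\dots,m_s$ while only the number of parts $s$ is allowed into the constant. The way to ensure this is exactly the observation above: the number of Stirling applications is bounded by $s+1$ regardless of how the mass is distributed among the letters, and each individual error is at most $O(\log n)$ since $1 \leq m_i \leq n$. One should also confirm that the crude Stirling bound $|\log(k!) - (k\log k - k\log e)| \leq C\log(k+1)$ holds with an absolute constant $C$ for all $k \geq 0$, which is standard; with that in hand the proposition follows by the direct computation sketched above.
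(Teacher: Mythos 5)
Your proof is correct and follows essentially the same route as the paper: identify $\sharp\+F_{m_1,\dots,m_s}$ with the multinomial coefficient $\frac{n!}{m_1!\cdots m_s!}$, apply Stirling, observe that the leading terms $n\log n-\sum_i m_i\log m_i$ equal $nH$ exactly, and check that the leftover error is $O(\log n)$ with a constant depending only on $s$. The only (immaterial) difference is that the paper uses the explicit form $x!=\sqrt{2\pi}\,x^{x+\frac12}e^{-x+\theta/12}$ and bounds the residual $\frac12\log\bigl(\frac{n}{m_1\cdots m_s}\bigr)$ via $n^{1-s}\leq\frac{n}{m_1\cdots m_s}\leq n$, whereas you package each Stirling application as a crude $O(\log k)$ error and count that there are at most $s+1$ of them.
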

\begin{quote}
\begin{proof}
The set $\+F_{m_1,...,m_s}$ contains
$\frac{n!}{m_1!\times...\times m_s!}$ words.
Using Stirling's approximation of the factorial function
(cf.~\cite{feller}), namely
$x! = \sqrt{2\pi} \ x^{x+\frac{1}{2}} \ e^{-x+\frac{\theta}{12}}$
where $0<\theta<1$, and equality $n=m_1+...+m_S$, we get
\begin{eqnarray*}
\log(\frac{n!}{m_1!\times...\times m_s!})
&=& (\sum_i m_i)\log(n) - (\sum_i m_i\log m_i)
\\&& 
+\frac{1}{2} \log(\frac{n}{m_1\times...\times m_s})
-(s-1)\log\sqrt{2\pi} + \alpha
\end{eqnarray*}
where $|\alpha|\leq\frac{s}{12}\log e$.
The difference of the first two terms is equal to
$n[\sum_i\frac{m_i}{n}\log(\frac{m_i}{n})] = nH$
and the remaining sum is $O(\log n)$ since
$n^{1-s} \leq \frac{n}{m_1\times...\times m_s} \leq n$.
\end{proof}
\end{quote}
{\em $H$ has a striking significance in terms of information
content and compression.}
Any word $u$ in $\+F_{m_1,...,m_s}$ is uniquely
characterized by its rank in this family
(say relatively to the lexicographic ordering on words
in alphabet $A$).
In particular, the binary representation of this rank ``encodes"
$u$.
Since this rank is $<\sharp\+F_{m_1,...,m_s}$,
its binary representation has length $\leq nH$ up to
an $O(\log n)$ term.
Thus, $nH$ can be seen as an upper bound of the information
content of $u$.
Otherwise said, the $n$ letters of $u$ are encoded by $nH$
binary digits.
In terms of compression
(nowadays so popular with the zip-like softwares),
{\em $u$ can be compressed to $nH$ bits},
i.e.,
{\em the mean information content
(which can be seen as the compression size in bits)
of a letter of $u$ is $H$}.
\\
Let us look at two extreme cases.
\\
\textbullet\
If all frequencies $f_i$ are equal to $\frac{1}{s}$
then the entropy is $\log(s)$, so that the mean information content
of a letter of $u$ is $\log(s)$,
i.e., there is no better  (prefix-free) coding than that described in
\S\ref{sss:constantlength}.
\\
\textbullet\
In case some of the frequencies is $1$ (hence all other ones being $0$),
the information content of $u$ is reduced to its length $n$,
which, written in binary, requires $\log(n)$ bits.
As for the entropy, it is $0$
(with the usual convention $0\log 0=0$, justified by the fact that
$\lim_{x\to0}x\log x=0$).
The discrepancy between $nH=0$ and the true information content
$\log n$ comes from the $O(\log n)$ term in PropositionÊ\ref{p:H}.
%
\subsubsection{Shannon's source coding theorem for symbol codes}
\label{sss:sourcecoding}
The significance of the entropy explained above has been given
a remarkable and precise form by
Claude Elwood Shannon (1916-2001) in his celebrated paper
\cite{shannon48}, 1948.
It's about the length of the binary word $\xi$ associated
to $u$ via a prefix-free code.
Shannon proved
\\- a lower bound of $|\xi|$ valid whatever be
the prefix-free code $w_1,...,w_s$,
\\- an upper bound, quite close to the lower bound,
valid for particular prefix-free codes $w_1,...,w_s$
(those making $\xi$ shortest possible, for instance those given
by Huffman's algorithm).
\begin{theorem}[Shannon, \cite{shannon48}, 1948]\label{thm:H}
Suppose the numbers of occurrences in $u$ of the letters
$a_1,...,a_s$ are $m_1,...,m_s$. Let $n=m_1+...+m_s$.
\medskip\\
1. For every prefix-free sequence of binary words $w_1,...,w_s$
(which are to code the letters $a_1,...,a_s$),
the binary word $\xi$ obtained by substituting $w_i$ to each
occurrence of $a_i$ in $u$ satisfies
$$
nH \leq |\xi|
$$
where
$H\ =\ -(\frac{m_1}{n}\log(\frac{m_1}{n}) +...
+ \frac{m_s}{n}\log(\frac{m_s}{n}))$
is the entropy of the considered distribution of frequencies
$\frac{m_1}{n},...,\frac{m_s}{n}$.
\medskip\\
2. There exists a prefix-free sequence of binary words $w_1,...,w_s$
such that
$$
nH \leq |\xi| < n(H+1)
$$
\end{theorem}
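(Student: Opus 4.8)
The plan is to establish both inequalities via the Kraft inequality, which characterizes the length profiles realizable by prefix-free codes. Recall that a sequence of binary words $w_1,\dots,w_s$ with lengths $\ell_i = |w_i|$ can be arranged into a prefix-free code if and only if $\sum_{i=1}^s 2^{-\ell_i} \leq 1$, and conversely any such code satisfies this inequality. So I would first state and briefly justify Kraft's inequality (the forward direction by viewing each codeword as labelling a subtree of the infinite binary tree, these subtrees being pairwise disjoint for a prefix-free code, and measuring them; the converse by greedily assigning codewords). Since $|\xi| = \sum_i m_i \ell_i$, both parts of the theorem reduce to optimizing $\sum_i m_i \ell_i$ over integer length vectors satisfying Kraft.

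For part 1, the idea is to relax the integrality constraint and minimize $\sum_i m_i \ell_i$ over all \emph{real} $\ell_i \geq 0$ with $\sum_i 2^{-\ell_i} \leq 1$. Writing $f_i = m_i/n$, this is $n\sum_i f_i \ell_i$, and I would show $\sum_i f_i \ell_i \geq H = -\sum_i f_i \log f_i$. The cleanest route is Gibbs' inequality (equivalently, nonnegativity of the Kullback--Leibler divergence): setting $q_i = 2^{-\ell_i}/(\sum_j 2^{-\ell_j})$, one has $\sum_i f_i \log(f_i/q_i) \geq 0$ by concavity of $\log$ (Jensen applied to $\sum_i f_i \log(q_i/f_i) \leq \log \sum_i q_i \le 0$), and expanding this gives $\sum_i f_i \ell_i \geq H + \log\sum_j 2^{-\ell_j} \ge H$ using the Kraft inequality. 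Multiplying by $n$ yields $nH \leq |\xi|$. Since the minimum over real length vectors is already $\ge nH$, it holds a fortiori for the integer lengths of any genuine prefix-free code.

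For part 2, the idea is to exhibit a specific admissible choice: take $\ell_i = \lceil -\log f_i \rceil = \lceil \log(n/m_i) \rceil$ (Shannon--Fano lengths), defined for those $i$ with $m_i > 0$. Then $\ell_i \geq -\log f_i$ gives $2^{-\ell_i} \leq f_i$, so $\sum_i 2^{-\ell_i} \leq \sum_i f_i = 1$, hence by Kraft's converse a prefix-free code with these lengths exists. On the other hand $\ell_i < -\log f_i + 1$, so $\sum_i m_i \ell_i < \sum_i m_i(-\log f_i + 1) = n\sum_i f_i(-\log f_i) + n = n(H+1)$. Combined with part 1 applied to this particular code (which has the stated length profile), we get $nH \leq |\xi| < n(H+1)$. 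One should remark that the prefix-free code produced by Huffman's algorithm, being length-optimal, also satisfies this upper bound.

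The main obstacle, and the only genuinely substantive point, is Kraft's inequality together with the optimization in part 1 --- specifically getting the entropy lower bound cleanly. Everything else (the Shannon--Fano construction for part 2, the arithmetic bounding $\sum m_i \ell_i$) is routine once Kraft is in hand. A minor care point: letters with $m_i = 0$ contribute $0$ to both $|\xi|$ and to $nH$ (under the convention $0\log 0 = 0$), so they may simply be discarded from the code, and one should note this so the logarithms are all well-defined.
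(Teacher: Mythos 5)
Your proof follows essentially the same route as the paper's: Kraft's inequality plus Gibbs' inequality for the lower bound in part 1, and the Shannon--Fano lengths $\ell_i=\lceil-\log(m_i/n)\rceil$ fed back through Kraft for part 2. One sign to fix: expanding Gibbs with $q_i=2^{-\ell_i}/S$ gives $\sum_i f_i\ell_i\ge H-\log S$, not $H+\log S$, and it is precisely this minus sign that makes the final step $\ge H$ go through, since Kraft gives $S\le 1$ and hence $-\log S\ge 0$.
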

\begin{quote}
\begin{proof}
First, we recall two classical results.
\medskip\\
{\em Kraft's inequality.}
Let $\ell_1,...,\ell_s$ be a finite sequence of integers.
Inequality $2^{-\ell_1}+...+2^{-\ell_s} \leq 1$ holds
if and only if there exists a prefix-free sequence of binary words
$w_1,...,w_s$ such that $\ell_1=|w_1|,...,\ell_s=|w_s|$.
\medskip\\
{\em Gibbs' inequality.}
Let $p_1,...,p_s$ and $q_1,...,q_s$ be two probability
distributions, i.e., the $p_i$'s (resp. $q_i$'s) are in $[0,1]$
and have sum $1$.
Then
$-\sum p_i\log(p_i) \leq -\sum p_i\log(q_i)$
with equality if and only if $p_i=q_i$ for all $i$.
\medskip\\
{\em Proof of Point 1 of Theorem \ref{thm:H}}.
Set $p_i=\frac{m_i}{n}$ and $q_i=\frac{2^{-|w_i|}}{S}$
where $S=\sum_i 2^{-|w_i|}$. Then
\medskip\\
$|\xi| = \sum_i m_i|w_i|
=n[\sum_i\frac{m_i}{n}(-\log(q_i) -\log S)]$

\hfill{$\geq n[-(\sum_i\frac{m_i}{n}\log(\frac{m_i}{n}) -\log S]
=n[H-\log S]
\geq nH$}
\medskip\\
The first inequality is an instance of Gibbs' inequality.
For the last one, observe that $S\leq1$.
\medskip\\
{\em Proof of Point 2 of Theorem \ref{thm:H}}.
Set $\ell_i=\lceil-\log(\frac{m_i}{n})\rceil$.
Observe that $2^{-\ell_i} \leq \frac{m_i}{n}$.
Thus, $2^{-\ell_1}+...+2^{-\ell_s} \leq 1$.
Applying Kraft inequality, we see that there exists a prefix-free
family of words $w_1,...,w_s$ with lengthes $\ell_1,...,\ell_s$.
\\
We consider the binary word $\xi$ obtained via this prefix-free
code, i.e., $\xi$ is obtained by substituting $w_i$ to each
occurrence of $a_i$ in $u$.
Observe that
$-\log(\frac{m_i}{n}) \leq \ell_i < -\log(\frac{m_i}{n})+1$.
Summing, we get $nH \leq |\xi| < n(H+1)$.
\end{proof}
\end{quote}
\noindent
In particular cases, the lower bound $nH$ can be achieved.
\begin{theorem}
In case the frequencies $\frac{m_i}{n}$'s are all negative
powers of two (i.e., $\frac{1}{2},\frac{1}{4},\frac{1}{8}$,...)
then the optimal $\xi$ (given by Huffman's algorithm)
satisfies $\xi=nH$.
\end{theorem}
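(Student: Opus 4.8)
The plan is to produce one explicit prefix-free code that attains $|\xi| = nH$, and then to conclude by optimality of Huffman's algorithm. First I would unpack the hypothesis: saying that each frequency $\frac{m_i}{n}$ is a negative power of two means $\frac{m_i}{n} = 2^{-\ell_i}$ for some integer $\ell_i \geq 1$ (the degenerate case $s=1$, where $\frac{m_1}{n}=1$, being trivial since then $H=0$ and $|\xi|=0$ for the empty code). Equivalently, $\ell_i = -\log(\frac{m_i}{n})$ is a nonnegative integer for every $i$. Since $\sum_i 2^{-\ell_i} = \sum_i \frac{m_i}{n} = 1 \leq 1$, Kraft's inequality (recalled inside the proof of Theorem~\ref{thm:H}) furnishes a prefix-free sequence of binary words $w_1,\dots,w_s$ with $|w_i| = \ell_i$ for each $i$.

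Next I would evaluate the length of the word $\xi$ obtained by substituting $w_i$ for each occurrence of $a_i$ in $u$. Directly,
$$
|\xi| \;=\; \sum_i m_i\,|w_i| \;=\; \sum_i m_i\,\ell_i \;=\; -\sum_i m_i\log\Big(\tfrac{m_i}{n}\Big) \;=\; -n\sum_i \tfrac{m_i}{n}\log\Big(\tfrac{m_i}{n}\Big) \;=\; nH .
$$
So this particular prefix-free code achieves exactly the entropy bound.

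Finally I would invoke Point~1 of Theorem~\ref{thm:H}: every prefix-free code yields a substituted word of length at least $nH$. Hence the code $w_1,\dots,w_s$ constructed above is length-minimizing among all prefix-free codes. Since, as recalled in \S\ref{sss:sourcecoding}, Huffman's algorithm outputs a prefix-free code making $|\xi|$ shortest possible, the word it produces also has length $nH$, which is the claim.

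There is really no serious obstacle here: the statement is a corollary of the two halves of Theorem~\ref{thm:H} together with Kraft's inequality. The only points needing a little care are the bookkeeping of degenerate cases (a one-letter alphabet, or a frequency equal to $1$ forcing a zero-length codeword) and the fact that we are taking for granted — as asserted earlier in the text — that Huffman's algorithm does indeed produce a symbol code of minimal total length; granting that, the remainder is the one-line construction from Kraft's inequality plus the elementary identity $\sum_i m_i\log(\frac{m_i}{n}) = -nH$.
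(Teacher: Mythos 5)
Your proof is correct. The paper states this theorem without proof, but your argument is exactly the intended one: since each $\frac{m_i}{n}=2^{-\ell_i}$ with $\ell_i$ a positive integer, the lengths $\ell_i=-\log(\frac{m_i}{n})$ satisfy Kraft's inequality with equality, the resulting prefix-free code gives $|\xi|=\sum_i m_i\ell_i=nH$ exactly, and Point~1 of Theorem~\ref{thm:H} together with the (asserted) optimality of Huffman's algorithm finishes the job. In effect you have observed that the construction in the proof of Point~2 of Theorem~\ref{thm:H} becomes exact when the ceilings $\lceil-\log(\frac{m_i}{n})\rceil$ are superfluous, which is precisely why the upper bound $n(H+1)$ collapses to $nH$ in the dyadic case.
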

%
\subsubsection{Closer to the entropy}\label{sss:closer}
In \S\ref{sss:entropy} and \ref{sss:sourcecoding}, we supposed
the frequencies to be known and did not consider the information
content of these frequencies.
We now deal with that question.
\\
Let us go back to the encoding mentioned at the start of
\S\ref{sss:entropy}.
A word $u$ in the family $\+F_{m_1,...,m_s}$
(of length $n$ words with exactly $m_1,...,m_s$ occurrences
of $a_1,...,a_s$)
can be recovered from the following data:
\\- the values of $m_1,...,m_s$,
\\- the rank of $u$ in $\+F_{m_1,...,m_s}$
    (relative to the lexicographic order on words).
\\
We have seen (cf. Proposition \ref{p:H})
that the rank of $u$ has a binary representation $\rho$
of length $\leq nH+O(\log n)$.
The integers $m_1,...,m_s$ are encoded by their binary
representations $\mu_1,...,\mu_s$ which all have length
$\leq 1+\lfloor\log n\rfloor$.
Now, to encode $m_1,...,m_s$ and the rank of $u$, we cannot
just concatenate $\mu_1,...,\mu_s,\rho$ : how would we know
where $\mu_1$ stops, where $\mu_2$ starts,...,
in the word obtained by concatenation?
Several tricks are possible to overcome the problem,
they are described in \S\ref{sss:codemany}.
Using Proposition \ref{p:code}, we set
$\xi = \langle \mu_1,...,\mu_s,\rho \rangle$
which has length
$|\xi| = |\rho|+O(|\mu_1|+...+|\mu_s|) = nH +O(\log n)$
(Proposition \ref{p:code} gives a much better bound but this is
of no use here).
Then $u$ can be recovered from $\xi$ which is a binary word
of length $nH+O(\log n)$.
Thus, asymptotically, we get a better upper bound than $n(H+1)$,
the one given by Shannon for prefix-free codes
(cf. Theorem \ref{thm:H}).
\\
Of course, $\xi$ is no more obtained from $u$ via a morphism
(i.e., a map which preserves concatenation of words)
between the monoid of words in alphabet $A$ and that of binary words.
\\
Notice that this also shows that prefix-free codes
are not the only way to efficiently encode into a binary word $\xi$
a word $u$ from alphabet $a_1,...,a_s$ for which the numbers
$m_1,...,m_s$ of occurrences of the $a_i$'s are known.

%
\subsubsection{Coding finitely many words with one word}
\label{sss:codemany}
How can we code two words $u,v$ with only one word?
The simplest way is to consider $u\$ v$ where $\$ $
is a fresh symbol outside the alphabet of $u$ and $v$.
But what if we want to stick to binary words?
As said above, the concatenation of $u$ and $v$ does not
do the job: how can one recover the prefix $u$ in $uv$?
A simple trick is to also concatenate the length of $|u|$ in
unary and delimitate it by a zero.
Indeed, denoting by $1^p$ the word
$\stackrel{p~\mbox{\footnotesize{times}}}{\overbrace{1\ldots1}}$,
one can recover $u$ and $v$ from the word $1^{|u|}0uv$ :
the length of the first block of $1$'s tells where to stop in
the suffix $uv$ to get $u$.
In other words, the map $(u,v)\to 1^{|u|}0uv$ is injective
from $\{0,1\}^*\times\{0,1\}^*\to\{0,1\}^*$.
In this way, the code of the pair $(u,v)$ has length $2|u|+|v|+1$.
This can obviously be extended to more arguments using the
map
$(u_1,...,u_s,v)\mapsto
1^{|u_1|}0^{|u_2|}\ldots\varepsilon^{|u_s|}
\varepsilon' u_1\ldots u_s v$
(where $\varepsilon=0$ is $s$ is even
and $\varepsilon=1$ is $s$ is odd
and $\varepsilon'=1-\varepsilon$.
\begin{proposition}\label{p:code}
Let $s\geq1$.
There exists a map
$\langle\ \rangle : (\{0,1\}^*)^{s+1}\to\{0,1\}^*$
which is injective and computable and such that,
for all $u_1,...,u_s,v\in\{0,1\}^*$,
$|\langle u_1,...,u_s,v \rangle|
= 2(|u_1|+...+|u_s|)+|v|+ 1$.
\end{proposition}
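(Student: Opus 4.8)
The plan is to give the map explicitly, following the construction indicated just before the statement, and then to check the three requirements separately: the length identity (an easy count), computability (immediate), and injectivity (the only real point). Write $m_i=|u_i|$. The idea is to prepend to the data $u_1u_2\cdots u_s v$ a ``header'' that records the lengths $m_1,\dots,m_s$ in unary, but using alternating bits for successive blocks so that their boundaries can be located: take $h=1^{m_1}0^{m_2}1^{m_3}\cdots\varepsilon^{m_s}\varepsilon'$, where block number $i$ uses the bit $1$ if $i$ is odd and $0$ if $i$ is even (so the last block uses $\varepsilon$, with $\varepsilon=1$ iff $s$ is odd), and where $\varepsilon'=1-\varepsilon$ is a single ``stop'' bit. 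Then set $\langle u_1,\dots,u_s,v\rangle=h\,u_1u_2\cdots u_s\,v$. (The naive alternative of iterating a binary pairing $\langle a,b\rangle=1^{|a|}0ab$ would only give an overhead of $s$ rather than $1$, which is why one is led to this more economical header.)

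Granting this, two of the three points are routine. The header has length $m_1+\cdots+m_s+1$ and the data part has length $m_1+\cdots+m_s+|v|$, so $|\langle u_1,\dots,u_s,v\rangle|=2(m_1+\cdots+m_s)+|v|+1$, as required; and the map is plainly computable, being built by concatenations from the numbers $m_i$, which are the lengths of the given words. For injectivity I would exhibit the decoding procedure: reading a candidate string $\xi$ from the left, the successive maximal blocks of equal bits (a block of $1$'s, then of $0$'s, then of $1$'s, \dots) read off $m_1,m_2,\dots$ in turn; the stop bit $\varepsilon'$, being the bit opposite to the last length-block, signals the end of the header, so one knows the header length $m_1+\cdots+m_s+1$, hence the position where the data begins; one then cuts the remaining suffix into pieces of the now-known lengths $m_1,\dots,m_s$, followed by the final remainder $v$. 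Since $(u_1,\dots,u_s,v)$ is recovered from $\xi$, the map is injective.

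The delicate point — and the part I expect to have to argue with care — is precisely the unambiguity of this header-parsing: one must be sure that at every stage the decoder knows when the current unary block has ended and, in particular, exactly where the stop bit sits. The alternation of bits between consecutive blocks (and between the last block and the stop bit) is what makes this work, and the verification must track it block by block; the case of length-zero components, where a block degenerates to the empty word, is the subtle one and is where the bookkeeping has to be done most carefully (for $s=1$ the construction reduces to the familiar self-delimiting code $u_1\mapsto 1^{|u_1|}0u_1$, and there is nothing to worry about). Once the parse is pinned down, it follows that the exhibited decoding inverts the encoding, hence that $\langle\ \rangle$ is injective.
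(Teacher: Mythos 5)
Your construction is exactly the one the paper sketches just before the statement, and you have put your finger on the right delicate point; but that point is not a matter of careful bookkeeping to be supplied later --- the map is genuinely \emph{not} injective for $s\geq 2$, because an empty component makes two adjacent unary blocks (or the last block and the stop bit) merge into a single run, and no parse can separate them. Concretely, for $s=2$ one has $\varepsilon=0$ and $\varepsilon'=1$, so the code is $1^{|u_1|}0^{|u_2|}1\,u_1u_2v$; writing $\lambda$ for the empty word, $\langle\lambda,\lambda,11\rangle$ and $\langle 1,\lambda,\lambda\rangle$ are both the string $111$ (header $1$ then data $11$ in the first case, header $11$ then data $1$ in the second). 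Moreover no alternative encoding can rescue the statement as written: count the tuples $(u_1,u_2,v)$ with $2(|u_1|+|u_2|)+|v|+1=L$. Those with $|u_1|+|u_2|=M$ number $(M+1)2^{M}\cdot 2^{L-1-2M}=(M+1)2^{L-1-M}$; the terms $M=0$ and $M=1$ already contribute $2^{L-1}+2^{L-1}=2^{L}$, and $M=2$ adds a further $3\cdot 2^{L-3}>0$ as soon as $L\geq 5$, so strictly more than $2^{L}$ tuples would have to be sent injectively to the $2^{L}$ strings of length $L$. The same pigeonhole obstruction occurs for every $s\geq 2$, so the exact length identity is unachievable there.

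Your proof is complete and correct for $s=1$: the code $1^{|u|}0uv$ is unambiguous even when $u$ is empty, since the first $0$ delimits the unary block, and $s=1$ is the only case the paper actually uses in the proof of the invariance theorem. For $s\geq 2$ the statement should be read as an upper bound $|\langle u_1,\dots,u_s,v\rangle|\leq 2(|u_1|+\cdots+|u_s|)+|v|+O(1)$ with the constant depending on $s$, which is all that the application in \S\ref{sss:closer} requires. This is obtained by giving each length its own terminating $0$, i.e.\ the header $1^{|u_1|}0\,1^{|u_2|}0\cdots 1^{|u_s|}0$, for a total length of $2(|u_1|+\cdots+|u_s|)+|v|+s$; equivalently, by iterating the binary pairing $s$ times. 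In other words, the ``naive alternative'' you set aside is the correct construction, and by the count above its overhead of $s$ rather than $1$ cannot be improved to a constant independent of $s$ while keeping the $2\sum|u_i|+|v|$ main term exact.
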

The following technical improvement will be needed
in Part II \S2.1.
\begin{proposition}\label{p:codeloglog}
There exists a map
$\langle\ \rangle : (\{0,1\}^*)^{s+1}\to\{0,1\}^*$
which is injective and computable and such that,
for all $u_1,...,u_s,v\in\{0,1\}^*$,
\begin{eqnarray*}
|\langle u_1,...,u_s,v \rangle|
&=& (|u_1|+...+|u_s|)
+ (\log|u_1|+...+\log|u_s|)
\\&&
+ 2(\log\log|u_1|+...+\log\log|u_s|) +|v| +O(1)
\end{eqnarray*}
\end{proposition}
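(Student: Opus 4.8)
The plan is to iterate, twice, the self-delimiting prefix device already used in the proof of Proposition~\ref{p:code}, so that the cost of recording each $|u_i|$ drops from $|u_i|$ bits (a unary marker) to about $\log|u_i|+2\log\log|u_i|$ bits.

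First I would fix notation for length markers. For a nonempty word $w$ let $\beta(w)$ be the base-$2$ representation of the integer $|w|$; by the length formula recalled in the Notation paragraph, $|\beta(w)|=1+\lfloor\log|w|\rfloor$, so $|\beta(w)|=\log|w|+O(1)$ and, iterating, $|\beta(\beta(w))|=\log\log|w|+O(1)$. Define three nested self-delimiting encodings of a single word: $c_0(w)=1^{|w|}0\,w$, then $c_1(w)=c_0(\beta(w))\,w$, and finally $c_2(w)=c_1(\beta(w))\,w$, that is $c_2(w)=1^{|\beta(\beta(w))|}0\,\beta(\beta(w))\,\beta(w)\,w$. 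Each $c_j$ is computable and injective: to decode $c_2(w)$ one reads the initial block of $1$'s up to the first $0$ to learn $|\beta(\beta(w))|$, then peels off that many bits to recover $\beta(\beta(w))$ and hence $|\beta(w)|$, then peels off $|\beta(w)|$ bits to recover $\beta(w)$ and hence $|w|$, and finally reads the last $|w|$ bits to recover $w$. Counting lengths, $|c_1(w)|=|w|+2|\beta(w)|+1=|w|+2\log|w|+O(1)$ and $|c_2(w)|=|w|+|\beta(w)|+2|\beta(\beta(w))|+1=|w|+\log|w|+2\log\log|w|+O(1)$.

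Next I would set $\langle u_1,\dots,u_s,v\rangle=c_2(u_1)\,c_2(u_2)\cdots c_2(u_s)\,v$. Because $s$ is a fixed constant, a decoder reads exactly $s$ successive $c_2$-blocks off the front, recovering $u_1,\dots,u_s$ in turn, and takes the remaining suffix to be $v$; hence this map is injective and computable. Adding up the lengths gives $|\langle u_1,\dots,u_s,v\rangle|=\sum_{i=1}^{s}\bigl(|u_i|+\log|u_i|+2\log\log|u_i|+O(1)\bigr)+|v|$, and collapsing the $s$ additive $O(1)$ terms into a single $O(1)$ yields precisely the formula in the statement.

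The proof has no genuine obstacle; the only point demanding care is the behaviour of $\log|u_i|$ and $\log\log|u_i|$ when $|u_i|\in\{0,1\}$, where these expressions are undefined or negative. Since for such short $u_i$ the block $c_2(u_i)$ already has length $O(1)$, the equality survives under the usual reading of $\log$ on small integers, or simply by absorbing these finitely many exceptional cases into the additive constant. Beyond that, the work is just the triple length bookkeeping above, which is where I would be most careful to make the constants match.
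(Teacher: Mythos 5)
Your proof is correct, and it is the natural completion of the paper's own argument. The paper's proof (given only for $s=1$) performs a single iteration of the self-delimiting trick, replacing the unary prefix $1^{|u|}0$ by $1^{|\beta(|u|)|}0\,\beta(|u|)$, which yields the bound $|u|+2\log|u|+|v|+O(1)$ --- strictly weaker than the formula actually asserted in the proposition, which has $\log|u|+2\log\log|u|$ in place of $2\log|u|$. Your second iteration, encoding $|u|$ by $\beta(\beta(u))$ prefixed in unary followed by $\beta(u)$, is exactly the extra step needed to reach the stated bound, and your length bookkeeping $|c_2(w)|=|w|+|\beta(w)|+2|\beta(\beta(w))|+1$ is right. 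Your handling of general $s$ by concatenating $s$ self-delimiting blocks before $v$, and your remark about absorbing the degenerate cases $|u_i|\in\{0,1\}$ into the additive constant, are both sound; the paper leaves both points implicit.
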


\begin{quote}
\begin{proof}
We consider the case $s=1$, i.e., we want to code a pair $(u,v)$.
Instead of putting the prefix $1^{|u|}0$, let us put the binary
representation $\beta(|u|)$ of the number $|u|$ prefixed by
its length.
This gives the more complex code:
$1^{|\beta(|u|)|}0\beta(|u|)uv$
with length
$$
|u|+|v|+2(\lfloor\log|u|\rfloor+1)+1
\leq |u|+|v|+2\log|u| + 3
$$
The first block of ones gives the length of $\beta(|u|)$.
Using this length, we can get $\beta(|u|)$ as the factor following
this first block of ones.
Now, $\beta(|u|)$ is the binary representation of $|u|$, so we get
$|u|$ and can now separate $u$ and $v$ in the suffix $uv$.
\end{proof}
\end{quote}
%
%
\subsection{Probabilistic approach: ergodicity and lossy coding}
%
The abstract probabilistic approach allows for considerable
extensions of the results described in \S\ref{s:combinatorial}.
\medskip\\
First, the restriction to fixed given frequencies can be relaxed.
The probability of writing $a_i$ may depend on what has
already been written. For instance, Shannon's source coding theorem
has been extended to the so called
``ergodic asymptotically mean stationary source models".
\medskip\\
Second, one can consider a lossy coding: some length $n$ words in
alphabet $A$ are ill-treated or ignored.
Let $\delta$ be the probability of this set of words.
Shannon's theorem extends as follows:
\\- whatever close to $1$ is $\delta<1$,
one can compress $u$ only down to $nH$ bits.
\\- whatever close to $0$ is $\delta>0$,
one can achieve compression of $u$ down to $nH$ bits.
%
%
%
\subsection{Algorithmic approach: Kolmogorov complexity}
%
\subsubsection{Berry's paradox}\label{sss:berry}
So far, we considered two kinds of binary codings for a word
$u$ in alphabet $a_1,...,a_s$.
The simplest one uses variable-length prefix-free codes
(\S\ref{sss:varlength}).
The other one codes the rank of $u$ as a member of some set
(\S\ref{sss:closer}).
\\
Clearly, there are plenty of other ways to encode any
mathematical object.
Why not consider all of them? And define the information
content of a mathematical object $x$ as
{\em the shortest univoque description of $x$ (written as
a binary word)}.
Though quite appealing, this notion is ill defined
as stressed by Berry's paradox\footnote{
Berry's paradox is mentioned by Bertrand Russell in 1908
(\cite{russell}, p.222 or 150), who credited G.G. Berry,
an Oxford librarian, for the suggestion.}:
\begin{quote}
Let $N$ be the
{\em lexicographically least binary word which cannot be
univoquely described by any binary word of length less
than $1000$}.
\end{quote}
This description of $N$ contains $106$ symbols of written
English (including spaces) and, using ASCII codes,
can be written as a binary word of length $106\times7=742$.
Assuming such a description to be well defined
would lead to a univoque description of $N$ in $742 $ bits,
hence less than $1000$,
a contradiction to the definition of $N$.
\\
The solution to this inconsistency is clear:
the quite vague notion of univoque description entering
Berry's paradox is used both inside the sentence
describing $N$ and inside the argument to get the
contradiction. A clash between two levels:
\\\indent{\textbullet} the would be formal 
level carrying the description
           of $N$
\\\indent{\textbullet}  and the meta level which 
carries the inconsistency argument.
\\
Any formalization of the notion of description
should drastically reduce its scope and totally forbid
any clash such as the above one.
%
\subsubsection{The turn to computability}\label{sss:turn}
To get around the stumbling block of Berry's paradox
and have a formal notion of description with wide scope,
Andrei Nikolaievitch Kolmogorov (1903--1987) made an
ingenious move: he turned to computability and
replaced {\em description} by {\em computation program}.
Exploiting the successful formalization of this a priori
vague notion which was achieved in the thirties\footnote{
Through the works of Alonzo Church (via lambda calculus),
Alan Mathison Turing (via Turing machines)
and Kurt G\"odel and Jacques Herbrand (via Herbrand-G\"odel
systems of equations)
and Stephen Cole Kleene (via the recursion and minimization
operators).}.
This approach was first announced by Kolmogorov in \cite{kolmo63},
1963, and then developped in \cite{kolmo65}, 1965.
Similar approaches were also independently developed
by Solomonoff in \cite{solo64a}, 1964,
and by Chaitin in \cite{chaitin66, chaitin69}, 1966-1969.
%
%
\subsubsection{Digression on computability theory}
\label{sss:partial}
%
The formalized notion of {\em computable function}
(also called recursive function)
goes along with that of {\em partial computable function}
(also called partial recursive function)
which should rather be called
{\em partially computable partial function},
i.e., the {\em partial} character has to be
distributed\footnote{In French, Daniel Lacombe
(\cite{lacombe60}, 1960) used the expression
{\em semi-fonction semi-r\'ecursive}.}.
\\
So, there are two theories :
\\\indent\textbullet~ {\em the theory of  computable functions},
\\\indent\textbullet~ {\em the theory of partial 
computable functions}.
\\
The ``right" theory, the one with a cornucopia of spectacular
results, is that of partial computable functions.
\begin{quote}
Let us pick up three fundamental results out of the cornucopia,
which we state in terms of computers and programming languages.
Let $\+I$ and $\+O$ be $\mathbb{N}$ or $A^*$
where $A$ is some finite or countably infinite alphabet
(or, more generally, $\+I$ and $\+O$ can be elementary sets,
cf. Definition~\ref{def:elementary}).
\begin{theorem}\label{thm:3thms}
\mbox{}\\
\mbox{}\\ {\em 1. [Enumeration theorem]}
The function which executes programs on their inputs:
\textnormal{(program, input)} $\to$ \textnormal{output}
is itself partial computable.
\\
Formally, this means that there exists a partial computable
function
$$
U: \{0,1\}^*\times\+I \to \+O
$$
such that the family of partial computable function
$\+I \to \+O$ is exactly
$\{U_e \mid e\in \{0,1\}^*\}$ where $U_e(x)=U(e,x)$.
\\
Such a function $U$ is called universal for partial
computable functions $\+I \to \+O$.
\medskip\\
{\em 2. [Parameter theorem (or $s^m_n$ thm)].}
One can exchange input and program
{\em(this is von Neumann's key idea for computers)}.
\\
Formally, this means that, letting $\+I=\+I_1\times \+I_2$,
universal maps $U_{\+I_1\times \+I_2}$ and $U_{\+I_2}$
are such that there exists a computable total map
$s: \{0,1\}^*\times \+I_1 \to \{0,1\}^*$
such that, for all $e\in \{0,1\}^*$,
$x_1\in \+I_1$ and $x_2\in \+I_2$,
$$
U_{\+I_1\times \+I_2}(e,(x_1,x_2)) = U_{\+I_2}(s(e,x_1),x_2)
$$
{\em 3. [Kleene fixed point theorem]}
For any transformation of programs, there is a program
which does the same {\em input $\to$ output} job
as its transformed program\footnote{This is the seed
of computer virology, cf. \cite{BKM06}}.
\\
Formally, this means that, for every partial computable map
$f: \{0,1\}^*\to \{0,1\}^*$, there exists $e$ such that
$$
\forall e\in \{0,1\}^*\quad \forall x\in\+I\quad
U(f(e),x)=U(e,x)
$$
\end{theorem}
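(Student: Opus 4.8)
All three points rest on a single foundation: a fixed mathematical model of the informal notion of computation. My plan is to take Turing machines with a two-letter work alphabet, to fix once and for all an effective coding $e\mapsto M_e$ of machines by binary words (with every word coding \emph{some} machine, by convention the trivial one when $e$ is ill-formed), and then to prove Point~1 directly by constructing a universal simulator; Points~2 and~3 will then follow from Point~1 by purely syntactic manipulations of codes, with no further appeal to the machine model. Since $\+I$ and $\+O$ are allowed to be $\mathbb{N}$, $A^*$ or elementary sets, I would first fix computable bijections between each of them and $\{0,1\}^*$ --- this is exactly what makes such sets legitimate domains and codomains here --- so that it is enough to establish everything for $\+I=\+O=\{0,1\}^*$ and transport the conclusions along these bijections; currying and uncurrying, also computable, move statements between the signatures $\{0,1\}^*$ and $\{0,1\}^*\times\{0,1\}^*$ that occur below.

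\textbf{Point~1 (enumeration theorem), the main obstacle.} Define $U(e,x)$ to decode $e$, obtaining the transition table of $M_e$, and then to simulate $M_e$ on input $x$ step by step: $U$ keeps on its tape a faithful description of the current configuration of $M_e$ (state, head position, tape contents) and repeatedly rewrites it according to one step of $M_e$; if and when $M_e$ halts, $U$ returns its output. Two things must be verified: (a) this simulator is genuinely a Turing machine --- a laborious but entirely routine exercise in tape bookkeeping (or, granting the Church--Turing thesis, an immediate observation), and this is where essentially all the real work of the theorem sits; and (b) $U_e=U(e,\cdot)$ coincides with the partial function computed by $M_e$, so that $\{U_e \mid e\in\{0,1\}^*\}$ is precisely the class of partial computable functions $\{0,1\}^*\to\{0,1\}^*$. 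Granting (a) and (b), $U$ is partial computable and universal.

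\textbf{Point~2 (parameter theorem).} Given an index $e$ for a map on $\+I_1\times\+I_2$ and a value $x_1\in\+I_1$, let $s(e,x_1)$ be the code of the machine which, on input $x_2$, first writes down the pair $(x_1,x_2)$ --- the word $x_1$ being hard-wired into its finite control --- and then runs $M_e$ on that pair. Passing from $(e,x_1)$ to this code is a finite, purely editorial transformation of the transition table of $M_e$; it always succeeds, so $s$ is a \emph{total} computable map $\{0,1\}^*\times\+I_1\to\{0,1\}^*$, and by construction $U_{\+I_1\times\+I_2}(e,(x_1,x_2))\simeq U_{\+I_2}(s(e,x_1),x_2)$ for all arguments, which is the claim.

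\textbf{Point~3 (Kleene fixed point theorem).} I read the conclusion with Kleene equality: $U(f(e),x)$ and $U(e,x)$ are defined for the same $x$ and then agree, and if $f(e)$ diverges then $e$ simply indexes the nowhere-defined function, which is allowed. The map $(y,x)\mapsto U(f(s(y,y)),x)$ is a composition of partial computable maps ($s$ total computable by Point~2, $f$ partial computable by hypothesis, $U$ partial computable by Point~1), hence partial computable; by Point~1 it has an index $d$, and by Point~2 applied to $d$ it equals $(y,x)\mapsto U(s(d,y),x)$. Setting $e=s(d,d)$ gives $U(e,x)=U(s(d,d),x)\simeq U(f(s(d,d)),x)=U(f(e),x)$ for every $x$, as required. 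No computation enters beyond this diagonal substitution; the only delicate point is to carry the possible divergence of $f$ consistently through the Kleene equalities.
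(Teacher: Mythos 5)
The paper does not actually prove this theorem: it is presented in \S\ref{sss:partial} as classical background from computability theory (the enumeration, $s^m_n$ and recursion theorems of Turing and Kleene), with no proof given. So there is no argument of the paper's to compare yours against; what can be said is that your proposal is a correct rendition of the standard textbook proofs. Point~1 is the usual universal-machine construction, and you rightly identify that all the genuine (if routine) work lives there, together with the preliminary step of fixing computable codings of the elementary sets $\+I,\+O$ by binary words. Point~2 is the standard ``hard-wire $x_1$ into the finite control'' argument, and the resulting $s$ is indeed total since the code transformation never needs to run $M_e$. Point~3 is the standard diagonal argument: taking $d$ an index for $(y,x)\mapsto U(f(s(y,y)),x)$ and setting $e=s(d,d)$ gives $U(e,x)\simeq U(f(e),x)$, and you correctly handle the degenerate case where $f(s(d,d))$ diverges, in which both sides are nowhere defined (note that the displayed formula in the statement contains a typographical slip --- the inner ``$\forall e$'' should not rebind the $e$ whose existence is asserted --- and your reading is the intended one). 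The only caveat is that your argument, like any such argument, is a proof relative to a fixed formal model of computation plus the routine verification (a); within that standard framework the proposal is complete and correct.
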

\end{quote}
%
%
\subsubsection{Kolmogorov complexity
(or program size complexity)}\label{sss:K}
Turning to computability, the basic idea for Kolmogorov 
complexity\footnote{%
Delahaye's books 
\cite{{delahaye1999},{delahaye2006}} 
present a very attractive survey on Kolmogorov complexity.
                    }%
~can be summed up by the following equation:
\medskip\\\medskip\centerline{\em\begin{tabular}{|rcl|}
\hline
description &=& program
\\\hline
\end{tabular}}
When we say ``program", we mean a program taken from a family
of programs, i.e., written in a programming language or describing
a Turing machine or a system of Herbrand-G\"odel equations
or a Post system,...
\\
Since we are soon going to consider the length of programs,
following what has been said in \S\ref{sss:words},
we normalize programs: they will be binary words,
i.e., elements of $\{0,1\}^*$.
\\
So, we have to fix a function $\varphi:\{0,1\}^*\to\+O$ and
consider that the output of a program $p$ is $\varphi(p)$.
\\
Which $\varphi$ are we to consider? Since we know that there are
universal partial computable functions
(i.e., functions able to emulate any other partial computable
function modulo a computable transformation of programs,
in other words, a compiler from one language to another),
it is natural to consider universal partial computable functions.
Which agrees with what has been said in \S\ref{sss:partial}.
\\
Let us give the general definition of the Kolmogorov complexity
associated to any function $\{0,1\}^*\to\+O$.
\begin{definition}\label{def:Kphi}
If $\varphi:\{0,1\}^*\to\+O$ is a partial function,
set $K_\varphi : \+O \to \mathbb{N}$
$$
K_\varphi(y)=\min\{|p| : \varphi(p)=y\}
$$
with the convention that $\min\emptyset=+\infty$.
\\
Intuition:
$p$ is a program (with no input),
$\varphi$ executes programs
(i.e., $\varphi$ is altogether
a programming language
plus a compiler
plus a machinery to run programs)
and $\varphi(p)$ is the output of the run of program $p$.
Thus, for $y\in\+O$, $K_\varphi(y)$ is the length of shortest
programs $p$ with which $\varphi$ computes $y$
(i.e., $\varphi(p)=y$).
\end{definition}
\noindent
As said above, we shall consider this definition for
partial computable functions $\{0,1\}^*\to\+O$.
Of course, this forces to consider a set $\+O$ endowed with
a computability structure.
Hence the choice of sets that we shall call {\em elementary}
which do not exhaust all possible ones but will suffice for
the results mentioned in this paper.
\begin{definition}\label{def:elementary}
The family of elementary sets is obtained as follows:
\\- it contains $\mathbb{N}$ and the $A^*$'s where $A$ is a finite or
countable alphabet,
\\-  it is closed under finite (non empty) product,
product with any non empty finite set and
the finite sequence operator.
\end{definition}
\noindent{\bf Note.}
Closure under the finite sequence operator is used to encode
formulas in Theorem \ref{ss:godel}.

%
\subsubsection{The invariance theorem}\label{sss:invariance}
The problem with Definition \ref{def:Kphi} is that $K_\varphi$
strongly depends on $\varphi$.
Here comes a remarkable result, the invariance theorem,
which insures that {\em there is a smallest $K_\varphi$,
up to a constant}.
It turns out that the proof of this theorem only needs
the enumeration theorem and makes no use of the parameter
theorem (usually omnipresent in computability theory).
\begin{theorem}
[Invariance theorem, Kolmogorov, \cite{kolmo65}, 1965]
\label{thm:invariance}
Let $\+O$ be an elementary set (cf. Definition \ref{def:elementary}).
Among the $K_\varphi$'s, where $\varphi:\{0,1\}^*\to\+O$
varies in the family $PC^\+O$ of partial computable functions,
there is a smallest one, up to an additive constant
(= within some bounded interval). I.e.
$$
\exists V\in PC^\+O\quad \forall \varphi\in PC^\+O\quad
\exists c\quad \forall y\in\+O\quad
K_V(y) \leq K_\varphi(y) + c
$$
Such a $V$ is called optimal.
\\
Moreover, any universal partial computable function $\{0,1\}^*\to\+O$
is optimal.
\end{theorem}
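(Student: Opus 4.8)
The plan is to prove the invariance theorem by exhibiting an explicit optimal $V$ built from a universal partial computable function, and then deriving the moreover clause almost for free. First I would invoke the Enumeration theorem (Theorem~\ref{thm:3thms}, Point~1) to fix a partial computable function $W:\{0,1\}^*\times\{0,1\}^*\to\+O$ that is universal for partial computable functions $\{0,1\}^*\to\+O$, so that every $\varphi\in PC^\+O$ equals $W_e$ for some index $e\in\{0,1\}^*$. The key trick is to repackage $W$ into a single-argument function by gluing the index $e$ and the program $p$ into one binary word using the pairing map of Proposition~\ref{p:code}: define
$$
V(\langle e,p\rangle)=W(e,p),
$$
where $\langle\ \rangle:(\{0,1\}^*)^2\to\{0,1\}^*$ is the injective, computable map of Proposition~\ref{p:code} with $|\langle e,p\rangle|=2|e|+|p|+1$, and $V$ is undefined on binary words not in the image of $\langle\ \rangle$. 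Since $\langle\ \rangle$ is computable and injective, its partial inverse is partial computable, so $V$ is partial computable, i.e.\ $V\in PC^\+O$.

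Next I would verify the inequality. Let $\varphi\in PC^\+O$ be arbitrary and pick $e$ with $\varphi=W_e$, i.e.\ $\varphi(p)=W(e,p)$ for all $p$. Set $c=2|e|+1$, which depends only on $\varphi$ (through the chosen index $e$). For any $y\in\+O$ and any program $p$ with $\varphi(p)=y$ we have $V(\langle e,p\rangle)=W(e,p)=\varphi(p)=y$, and $|\langle e,p\rangle|=2|e|+|p|+1=|p|+c$. Taking the minimum over such $p$ gives $K_V(y)\le K_\varphi(y)+c$; this holds even when $K_\varphi(y)=+\infty$, where it is vacuous. Hence $V$ is optimal.

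For the moreover clause, suppose $V'$ is any universal partial computable function $\{0,1\}^*\to\+O$. Running the argument above with $V'$ in place of $W$ (there is no need to use a \emph{specific} universal function; any one works, because universality is exactly the property that every $\varphi\in PC^\+O$ appears as some $V'_e$) shows that the function $V''(\langle e,p\rangle)=V'(e,p)$ is optimal. But $V'$ itself is among the $V'_e$ — more directly, $V'$ simulates the pairing construction: since $V'$ is universal it has an index $d$ with $V'(d,\cdot)=V''$, and by the parameter theorem (Theorem~\ref{thm:3thms}, Point~2) there is a fixed word $s(d)$ with $V'(s(d)q)=V''(q)$ for all $q$, whence $K_{V'}(y)\le K_{V''}(y)+|s(d)|\le K_\varphi(y)+c+|s(d)|$ for every $\varphi$. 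So $V'$ is optimal as well.

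The main obstacle is a conceptual one rather than a computational one: making sure the additive constant depends only on $\varphi$ and not on $y$, which is why the index $e$ must be chosen \emph{once} for $\varphi$ and then absorbed into the prefix of the program — the pairing map of Proposition~\ref{p:code} is exactly the device that lets a fixed-length prefix encode $e$ unambiguously so that $V$ can parse it off and recover $(e,p)$. A secondary point to handle carefully is that $\+O$ is only an elementary set (Definition~\ref{def:elementary}), so one must note that the Enumeration theorem applies to this $\+O$; this is precisely why the hypotheses of Theorem~\ref{thm:3thms} were stated for $\+I,\+O$ ranging over elementary sets. Everything else is bookkeeping.
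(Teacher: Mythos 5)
Your construction of the optimal $V$ is exactly the paper's: fix a two-argument universal function (your $W$, the paper's $U$), glue index and program into a single word with the pairing map of Proposition~\ref{p:code}, set $V(\langle e,p\rangle)=W(e,p)$, and read off $K_V(y)\le K_\varphi(y)+2|e|+1$ with the constant chosen once per $\varphi$ via its index. That part is correct and complete, including the remark that the inequality is vacuous when $K_\varphi(y)=+\infty$.

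Your paragraph on the ``moreover'' clause is the only place you depart from the paper (which in fact gives no proof of that clause), and it does not work as written. You declare $V'$ to be a function $\{0,1\}^*\to\+O$ and then write $V'(e,p)$ and $V'(d,\cdot)$, treating it as a two-argument function; these are different objects, and $K_{V'}$ is only defined for the one-argument reading. More importantly, the parameter theorem does not yield a word $s(d)$ with $V'(s(d)q)=V''(q)$ where $s(d)q$ is a concatenation: it produces a new \emph{index} $s(d)$ for the two-argument universal function, i.e.\ $V'(s(d),q)=V''(q)$ in the two-argument reading, which says nothing about the lengths of one-argument programs for $V'$ and hence nothing about $K_{V'}$. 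Indeed, without a definition of ``universal'' for one-argument functions that controls program length, the clause is not derivable: a function that emulates every $\varphi$ only through length-tripling compilers is universal in the loose ``compiler'' sense of \S\ref{sss:K} yet not optimal --- this is precisely the subtlety behind the paper's later remark on ``universality by prefix adjunction.'' Either restrict the claim to functions obtained from an arbitrary two-argument universal $U'$ by the pairing construction (which the first sentence of your paragraph already covers), or supply the missing definition together with a length-preserving simulation.
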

\begin{quote}
{\em Proof.}
Let $U:\{0,1\}^*\times \{0,1\}^*\to\+O$ be partial computable and
universal for partial computable functions $\{0,1\}^*\to\+O$
(cf. point 1 of Theorem~\ref{thm:3thms}).
\\
Let $c: \{0,1\}^*\times\{0,1\}^*\to \{0,1\}^* $ be a total
computable injective map such that $|c(e,x)|=2|e|+|x|+1$
(cf. Proposition \ref{p:code}).
\\
Define $V: \{0,1\}^*\to\+O$, with domain included in the range of $c$,
as follows:
$$
\forall e\in\{0,1\}^*\ \forall x\in \{0,1\}^*\ \
V(c(e,x)) = U(e,x)
$$
where equality means that both sides are simultaneously
defined or not.
Then, for every partial computable function
$\varphi: \{0,1\}^*\to\+O$, for every $y\in\+O$,
if $\varphi=U_e$
(i.e., $\varphi(x)=U(e,x)$ for all $x$,
cf. point 1 of Theorem \ref{thm:3thms})
then
\begin{eqnarray*}
K_V(y) &=& \mbox{least $|p|$ such that $V(p)=y$}
\\
&\leq& \mbox{least $|c(e,x)|$ such that $V(c(e,x))=y$}
\\&&\hspace{1cm}
\mbox{(least is relative to $x$ since $e$~is~fixed)}
\\
&=& \mbox{least $|c(e,x)|$ such that $U(e,x))=y$}
\\
&=& \mbox{least $|x|+2|e|+1$ such that $\varphi(x)=y$}
\\&&\hspace{1cm}
\mbox{since $|c(e,x)|=|x|+2|e|+1$ and $\varphi(x)=U(e,x)$}
\\
& =& (\mbox{least $|x|$ such that $\varphi(x)=y$})+2|e|+1
\\
&=& K_\varphi(y)+2|e|+1 \hspace{6cm}\Box
\end{eqnarray*}
\end{quote}
Using the invariance theorem, the Kolmogorov complexity
$K^\+O:\+O\to\mathbb{N}$ is defined as $K_V$ where $V$ is any fixed optimal function.
The arbitrariness of the choice of $V$ does not modify
drastically $K_V$, merely up to a constant.
\begin{definition}
Kolmogorov complexity $K^\+O:\+O\to\mathbb{N}$ is
$K_V$, where $V$ is some fixed
optimal partial function $\{0,1\}^*\to\+O$.
When $\+O$ is clear from context, we shall simply write $K$.
\\
$K^\+O$ is therefore minimum among the $K_\varphi$'s,
up to an additive constant.
\\
$K^\+O$ is defined up to an additive constant:
if $V$ and $V'$ are both optimal then
$$
\exists c\quad \forall x\in\+O\quad |K_V(x) - K_{V'}(x)| \leq c
$$
\end{definition}
%
%
\subsubsection{What Kolmogorov said about the constant}
\label{sss:constant}
So Kolmogorov complexity is an integer defined up to
a constant\ldots !
But the constant is uniformly bounded for $x\in\+O$.
\\
Let us quote what Kolmogorov said about the constant
in \cite{kolmo65}, 1965:
\begin{quote}\em
Of course, one can avoid the indeterminacies
associated with the [above] constants,
by considering particular [\ldots functions $V$],
but it is doubtful that this can be done without
explicit arbitrariness.
\\
One must, however, suppose that the different
``reasonable" [above optimal functions] will lead to
``complexity estimates"
that will converge on hundreds of bits
instead of tens of thousands.
\\
Hence, such quantities as the ``complexity" of the text of
``War and Peace" can be assumed to be defined
with what amounts to uniqueness.
\end{quote}
In fact, this constant witnesses the multitude
of models of computation: universal Turing machines,
universal cellular automata,
Herbrand-G\"odel systems of equations, Post systems,
Kleene definitions,...
If we feel that one of them is canonical then we may consider
the associated Kolmogorov complexity as the right one
and forget about the constant.
This has been developed for Schoenfinkel-Curry combinators
$S,K,I$ by Tromp, cf. \cite{livitanyi} \S3.2.2--3.2.6.
\\
However, even if we fix a particular $K_V$, the importance
of the invariance theorem remains since it tells us that $K$
is less than {\em any} $K_\varphi$ (up to a constant).
A result which is applied again and again to develop the theory.
%
%
\subsubsection{Considering inputs:
conditional Kolmogorov complexity}
In the enumeration theorem,
we considered {\em(program, input) $\to$ output} functions
(cf. Theorem \ref{thm:3thms}).
Then, in the definition of Kolmogorov complexity, we gave up
the inputs, dealing with {\em program $\to$ output} functions.
\\
Conditional Kolmogorov complexity deals with the inputs. 
Instead of measuring the information content of $y\in\+O$,
we measure it given as free some object $z$, which may help
to compute $y$.
A trivial case is when $z=y$, then the information content
of $y$ given $y$ is null. In fact, there is an obvious program
which outputs exactly its input, whatever be the input.
\\
Let us mention that, in computer science, inputs are also
considered as {\em  environments}.
\\
Let us state the formal definition and the adequate
invariance theorem.
\begin{definition}\label{def:condKphi}
If $\varphi:\{0,1\}^*\times\+I\to\+O$ is a partial function,
set $K_\varphi(\ \mid\ ) : \+O\times\+I \to \mathbb{N}$
$$
K_\varphi(y \mid z)=\min\{|p| \mid \varphi(p,z)=y\}
$$
Intuition:
$p$ is a program (with expects an input $z$),
$\varphi$ executes programs
(i.e., $\varphi$ is altogether
a programming language
plus a compiler
plus a machinery to run programs)
and $\varphi(p,z)$ is the output of the run of program $p$
on input $z$.
Thus, for $y\in\+O$, $K_\varphi(y \mid z)$ is the length of shortest
programs $p$ with which $\varphi$ computes $y$ on input $z$
(i.e., $\varphi(p,z)=y$).
\end{definition}
\begin{theorem}[Invariance theorem for conditional complexity]
Among the $K_\varphi(\ |\ )$'s, where $\varphi $ varies
in the family $PC^\+O_\+I$
of partial computable functions $\{0,1\}^*\times \+I\to\+O$,
there is a smallest one, up to an additive constant
(i.e., within some bounded interval) :
$$
\exists V\in PC^\+O_\+I\quad \forall \varphi\in PC^\+O_\+I\quad
\exists c\quad \forall y\in\+O\quad \forall z\in\+I\quad
K_V(y\mid z) \leq K_\varphi(y\mid z) + c
$$
Such a $V$ is called optimal.
\\
Moreover, any universal partial computable map
$\{0,1\}^*\times \+I\to\+O$ is optimal.
\end{theorem}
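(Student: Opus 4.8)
The plan is to replay, almost verbatim, the proof of the unconditional invariance theorem (Theorem~\ref{thm:invariance}), carrying the free parameter $z\in\+I$ along as a passive spectator at every step. The one structural change is where the enumeration theorem is invoked: since $\{0,1\}^*\times\+I$ is again an elementary set (Definition~\ref{def:elementary}), I apply point~1 of Theorem~\ref{thm:3thms} with input space $\+I'=\{0,1\}^*\times\+I$. This yields a partial computable
$$
U:\{0,1\}^*\times(\{0,1\}^*\times\+I)\to\+O
$$
that is universal for partial computable functions $\{0,1\}^*\times\+I\to\+O$; i.e.\ every such $\varphi$ is of the form $(x,z)\mapsto U(e,(x,z))$ for a suitable index $e\in\{0,1\}^*$. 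As in the unconditional case, the parameter theorem will not be needed.

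Next I take the total computable injection $c:\{0,1\}^*\times\{0,1\}^*\to\{0,1\}^*$ with $|c(e,x)|=2|e|+|x|+1$ from Proposition~\ref{p:code}, and define $V:\{0,1\}^*\times\+I\to\+O$, with domain included in $(\mathrm{range}\,c)\times\+I$, by
$$
V(c(e,x),\,z)=U(e,(x,z))
$$
(both sides simultaneously defined or not). This $V$ is partial computable: on input $(w,z)$, decode $w$ as $c(e,x)$ when possible, then run $U$ on $(e,(x,z))$.

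The verification is then the same chain of (in)equalities as in Theorem~\ref{thm:invariance}, with $z$ held fixed throughout. Given an arbitrary partial computable $\varphi:\{0,1\}^*\times\+I\to\+O$, fix $e$ with $\varphi(x,z)=U(e,(x,z))$ for all $x,z$; then, for every $y\in\+O$ and every $z\in\+I$,
$$
K_V(y\mid z)\ \le\ \min\{\,|c(e,x)| : V(c(e,x),z)=y\,\}\ =\ \min\{\,2|e|+|x|+1 : \varphi(x,z)=y\,\}\ =\ K_\varphi(y\mid z)+2|e|+1 ,
$$
so $c=2|e|+1$ witnesses optimality of $V$. For the \emph{moreover} clause, if $W:\{0,1\}^*\times\+I\to\+O$ is any universal partial computable map, then in particular the partial computable $V$ above is one of the functions $W$ emulates, so $K_W(y\mid z)\le K_V(y\mid z)+c'$ for a constant $c'$ depending only on the length of the index/compiler, and combining this with the optimality of $V$ shows $W$ is optimal too --- exactly the argument used in the unconditional case.

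The only genuinely new point, hence the only place where care is needed, is checking that $z$ can be threaded through every clause without interference: it occurs solely as the second component of the input to $U$ and as the conditioning argument of $K$, and never inside the minimization over programs, so the additive constant $2|e|+1$ stays uniform in \emph{both} $y$ and $z$. I do not expect any real obstacle here; it is purely a matter of writing the bookkeeping correctly.
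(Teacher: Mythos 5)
Your proof is correct and is exactly the adaptation the paper intends: the paper's entire proof of this theorem is the single sentence ``The proof is similar to that of Theorem~1.13 [the unconditional invariance theorem]'', and you carry out precisely that relativization, with the universal function taken over the elementary input set $\{0,1\}^*\times\+I$ and the parameter $z$ threaded passively through the same chain of inequalities. The constant $2|e|+1$ you obtain is uniform in both $y$ and $z$, as required.
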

\noindent
The proof is similar to that of Theorem \ref{thm:invariance}.
\begin{definition}
$K^{\+I\to\+O}:\+O\times\+I\to\mathbb{N}$ is $K_V(\ |\ )$ where $V$
is some fixed optimal partial function.
\medskip\\
$K^{\+I\to\+O}$ is defined up to an additive constant:
if $V$ et $V'$ are both optimal then
$$
\exists c\quad \forall y\in\+O\quad \forall z\in\+I\quad
|K_V(y\mid z) - K_{V'}(y\mid z)| \leq c
$$
\end{definition}
\noindent
Again, an integer defined up to a constant\ldots!
However, the constant is uniform in $y\in\+O$ and $z\in\+I$.
%
%
\subsubsection{Simple upper bounds for Kolmogorov complexity}
%
Finally, let us mention rather trivial upper bounds:
\\- the information content of a word is at most its length.
\\- conditional complexity cannot be harder than the non conditional
one.
\begin{proposition}\label{p:bound}
\mbox{}\\
\mbox{}\\ 1. There exists $c$ such that
$$
\forall x\in\{0,1\}^*\ \ K^{\{0,1\}^*}(x)\leq |x|+c
\quad,\quad
\forall n\in\mathbb{N}\ \ K^\mathbb{N}(n)\leq \log(n)+c
$$
2. There exists $c$ such that
$$
\forall x\in\+O\quad \forall y\in\+I
\quad K^{\+I\to\+O}(x\mid y)\leq K^\+O(x)+c
$$
3. Let $f:\+O\to\+O'$ be computable. There exists $c$ such that
\begin{center}
\begin{tabular}{lrcl}
$\forall x\in\+O$ &$K^{\+O'}(f(x))$&$\leq$&$K^\+O(x) +c$
\\
$\forall x\in\+O\quad \forall Y\in\+I$&
$K^{\+I\to\+O'}(f(x)\mid y)$&$\leq$&$K^{\+I\to\+O}(x\mid y) +c$
\end{tabular}
\end{center}
\end{proposition}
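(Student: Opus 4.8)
The plan is to prove each of the three bounds directly from the invariance theorem, Proposition \ref{p:bound} being nothing more than a collection of ``a specific $\varphi$ already does well, so the optimal $V$ does at least as well up to a constant''. For Point 1, I would exhibit an explicit partial computable $\varphi:\{0,1\}^*\to\{0,1\}^*$ witnessing $K_\varphi(x)\leq|x|$ for all $x$: the identity map $\varphi(p)=p$ does exactly this, since then $K_\varphi(x)=|x|$. By the invariance theorem there is a constant $c$ with $K^{\{0,1\}^*}(x)=K_V(x)\leq K_\varphi(x)+c=|x|+c$. For the $\mathbb{N}$ statement, take instead $\psi:\{0,1\}^*\to\mathbb{N}$ sending a binary word $p$ to the integer whose base-$2$ representation is $p$ (or, to be safe about leading zeros, sending $p$ to the integer of which $1p$ is the binary representation); then every $n\geq1$ has a program of length $1+\lfloor\log n\rfloor$, so $K_\psi(n)\leq\log n+1$, and invariance again gives $K^{\mathbb{N}}(n)\leq\log n+c$.

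For Point 2, the point is that ignoring the input can only waste information. Fix an optimal $V:\{0,1\}^*\to\+O$ for the non-conditional complexity, and define $\varphi:\{0,1\}^*\times\+I\to\+O$ by $\varphi(p,z)=V(p)$, i.e.\ $\varphi$ simply discards its second argument. This $\varphi$ is partial computable, hence $K_\varphi(x\mid y)\leq K_\varphi(x\mid y)$ trivially equals $\min\{|p|:V(p)=x\}=K^\+O(x)$ for every $y$, and the invariance theorem for conditional complexity yields a constant $c$ with $K^{\+I\to\+O}(x\mid y)=K_{V'}(x\mid y)\leq K_\varphi(x\mid y)+c=K^\+O(x)+c$.

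For Point 3, the idea is to post-compose a good program for $x$ with $f$. Fix an optimal $V:\{0,1\}^*\to\+O$ and set $\varphi=f\circ V:\{0,1\}^*\to\+O'$; since $f$ is computable and $V$ partial computable, $\varphi$ is partial computable, and any $p$ with $V(p)=x$ satisfies $\varphi(p)=f(x)$, so $K_\varphi(f(x))\leq K^\+O(x)$. Invariance over $PC^{\+O'}$ then gives $K^{\+O'}(f(x))\leq K^\+O(x)+c$. The conditional version is identical: starting from an optimal conditional $W:\{0,1\}^*\times\+I\to\+O$, take $\varphi'(p,z)=f(W(p,z))$, observe $K_{\varphi'}(f(x)\mid y)\leq K^{\+I\to\+O}(x\mid y)$, and apply the conditional invariance theorem. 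There is no real obstacle here; the only point requiring a moment's care is the harmless codomain issue in the $\mathbb{N}$ case of Point 1 (avoiding the leading-zero ambiguity of binary representations) and the bookkeeping that the constants depend on the chosen optimal functions and on $f$ but not on $x$ or $y$, which is exactly what the statements assert.
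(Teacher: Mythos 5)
Your proposal is correct and follows essentially the same route as the paper: for Point 1 the paper likewise applies the invariance theorem to the identity map and to the bijection sending $u$ to (the predecessor of) the integer with binary representation $1u$. The paper explicitly proves only Point 1 (``We only prove 1''), and your arguments for Points 2 and 3 --- discarding the conditional input, and post-composing an optimal function with $f$ --- are exactly the standard invariance-theorem applications the paper leaves to the reader.
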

\begin{quote}
\begin{proof}
We only prove 1.
Let $Id:\{0,1\}^*\to\{0,1\}^*$ be the identity function.
The invariance theorem insures that there exists $c$ such that
$K^{\{0,1\}^*} \leq K^{\{0,1\}^*}_{Id} + c$.
Now, it is easy to see that $K^{\{0,1\}^*}_{Id}=|x|$, so that
$K^{\{0,1\}^*}(x)\leq |x|+c$.
\\
Let $\theta:\{0,1\}^*\to\mathbb{N}$ be the function 
(which is, in fact, a bijection)
which associates to a word $u=a_{k-1}...a_0$ the integer
$$
\theta(u) = (2^k+a_{k-1}2^{k-1}+...+2a_1+a_0) -1
$$
(i.e., the predecessor of the integer with binary representation $1u$).
Clearly,
$K^\mathbb{N}_\theta(n)=\lfloor \log(n+1)\rfloor$.
The invariance theorem insures that there exists $c$ such that
$K^\mathbb{N} \leq K^\mathbb{N}_\theta + c$. 
Hence $K^\mathbb{N}(n)\leq \log(n)+c+1$
for all $n\in\mathbb{N}$.
\end{proof}
\end{quote}
The following technical property is a variation of an argument
already used in \S\ref{sss:closer}: the rank of an element
in a set defines this element, and if the set is computable,
so is this process.
\begin{proposition}\label{p:rank}
Let $A\subseteq \mathbb{N}\times\+O$ be computable such that
$A_n = A\cap(\{n\}\times\+O)$ is finite for all $n$.
Then, letting $\sharp X$ be the number of elements of $X$,
$$
\exists c\quad \forall x\in A_n\quad
K(x\mid n) \leq \log(\sharp(A_n))+c
$$
\end{proposition}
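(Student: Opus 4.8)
The plan is to exhibit an explicit partial computable function $\psi:\{0,1\}^*\times\mathbb{N}\to\+O$ whose induced complexity $K_\psi(\cdot\mid n)$ is bounded by $\log(\sharp A_n)+O(1)$, and then invoke the invariance theorem for conditional complexity to transfer this bound to $K$. Concretely, for each $n$ let $x_0^{(n)},x_1^{(n)},\dots$ be the enumeration of the finite set $A_n$ in the canonical order induced by a fixed computable ordering of $\+O$ (recall $\+O$ is elementary, hence comes with such an effective enumeration). Since $A$ is computable and each section $A_n$ is finite, the map $(n,k)\mapsto x_k^{(n)}$ (defined when $k<\sharp A_n$) is partial computable: on input $(n,k)$ we effectively list $\+O$, test membership in $A$ against the first component $n$, and return the $(k+1)$-th element found.

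Next I would turn the index $k$ into a binary program. Using the bijection $\theta:\{0,1\}^*\to\mathbb{N}$ from the proof of Proposition~\ref{p:bound} — where $\theta(p)$ is the predecessor of the integer whose binary representation is $1p$, so that $K_\theta^{\mathbb{N}}(m)=\lfloor\log(m+1)\rfloor$ — define $\psi(p,n)=x_{\theta(p)}^{(n)}$ whenever $\theta(p)<\sharp A_n$, and leave $\psi(p,n)$ undefined otherwise. This $\psi$ is partial computable as a composition of the partial computable map above with $\theta$. For any $x\in A_n$, say $x=x_k^{(n)}$ with $0\le k<\sharp A_n$, there is a program $p$ with $\theta(p)=k$ and $|p|=\lfloor\log(k+1)\rfloor\le\lfloor\log(\sharp A_n)\rfloor$, so $\psi(p,n)=x$ and hence $K_\psi(x\mid n)\le\log(\sharp A_n)$.

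Finally, the invariance theorem for conditional Kolmogorov complexity (with $\+I=\mathbb{N}$ and this $\+O$) gives a constant $c$, depending only on $\psi$ (hence only on $A$ and the fixed optimal $V$), such that $K(x\mid n)=K_V(x\mid n)\le K_\psi(x\mid n)+c$ for all $x$ and $n$. Combining the two inequalities yields $K(x\mid n)\le\log(\sharp A_n)+c$ for all $n$ and all $x\in A_n$, as required.

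The only point needing care — and the one I expect to be the main obstacle to state cleanly rather than a deep difficulty — is verifying that $\psi$ is genuinely partial computable: this rests on the fact that computability of $A$ together with finiteness of every section $A_n$ lets us effectively enumerate $A_n$ \emph{and know when the enumeration is complete} (we simply list $\+O$ in order and collect the finitely many hits), so the function $(n,k)\mapsto x_k^{(n)}$ has a decidable domain and is computable where defined. No use of the parameter theorem or of uniformity in $c$ beyond what the invariance theorem already provides is needed.
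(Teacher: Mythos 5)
Your proof is correct and takes essentially the same route as the paper's (much terser) two-line argument: encode $x$ by its rank in $A_n$, note that this rank is an integer $<\sharp A_n$ whose binary representation has length about $\log(\sharp A_n)$, and invoke the invariance theorem for conditional complexity. One small inaccuracy in your final paragraph: computability of $A$ plus finiteness of each $A_n$ does \emph{not} let you know when the enumeration of $A_n$ is complete (the map $n\mapsto\sharp A_n$ need not be computable), so the domain of your $\psi$ need not be decidable --- but this is harmless, since all that is needed is that $\psi$ be \emph{partial} computable, which it is: on input $(n,k)$ search for $k+1$ elements of $A_n$ and diverge if there are fewer.
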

\begin{quote}
\begin{proof}
Observe that $x$ is determined by its rank in $A_n$.
This rank is an integer $<\sharp A_n$ hence its binary
representation has length $\leq\lfloor\log(\sharp A_n)\rfloor+1$.
\end{proof}
\end{quote}
%
%
%
%
%
%
%
\section[Kolmogorov complexity and undecidability]
         {Kolmogorov complexity and undecidability}
%
%
\subsection{$K$ is unbounded}
Let $K=K_V : \+O\to\mathbb{N}$ where $V:\{0,1\}^*\to\+O$ is optimal
(cf. Theorem \S\ref{thm:invariance}).
Since there are finitely many programs of size $\leq n$
(namely, the $2^{n+1}-1$ binary words of size $\leq n$),
there are finitely many elements of $\+O$ with Kolmogorov complexity
less than $n$.
This shows that $K$ is unbounded.
%
\subsection{$K$ is not computable}\label{ss:Knoncomput}
Berry's paradox (cf. \S\ref{sss:berry}) has a counterpart in terms
of Kolmogorov complexity: it gives a very simple proof that $K$,
which is a total function $\+O\to\mathbb{N}$, is not computable.
\begin{quote}
{\em Proof that $K$ is not computable.}
For simplicity of notations, we consider the case $\+O=\mathbb{N}$.
Define $L:\mathbb{N}\to\+O$ as follows:
\begin{eqnarray*}
L(n) &=& \mbox{least  $k$ such that $K(k)\geq 2n$}
\end{eqnarray*}
So that $K(L(n))\geq 2n$ for all $n$.
If $K$ were computable so would be $L$.
Let $V:\+O\to\mathbb{N}$ be optimal, i.e., $K=K_V$.
The invariance theorem insures that there exists $c$ such that
$K\leq K_L+c$. Observe that $K_L(L(n)\leq n$ by definition of
$K_L$.
Thus,
$$
2n \leq K(L(n)) \leq K_L(L(n)+c \leq n+c
$$
A contradiction for $n>c$.
\hfill{$\Box$}
\end{quote}
The non computability of $K$ can be seen as a version of the
undecidability of the halting problem.
In fact, there is a simple way to compute $K$ when the halting
problem is used as an oracle.
To get the value of $K(x)$, proceed as follows:
\\\indent- enumerate the programs in $\{0,1\}^*$ in lexicographic
          order,
\\\indent- for each program $p$, check if $V(p)$ halts
          (using the oracle),
\\\indent- in case $V(p)$ halts then compute its value,
\\\indent- halt and output $|p|$ when some $p$ is obtained
           such that $V(p)=x$.
\medskip\\
The converse is also true: one can prove that
{\em the halting problem is computable with $K$ as an oracle}.
\medskip\\
The argument for the undecidability of $K$ can be used to prove
a much stronger statement: $K$ can not be bounded from below by
any unbounded partial computable function.
\begin{theorem}[Kolmogorov]
There is no unbounded partial recursive function
$\psi:\+O\to\mathbb{N}$ such that $\psi(x)\leq K(x)$ for all $x$
in the domain of $\psi$.
\end{theorem}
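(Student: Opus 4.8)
The plan is to adapt the Berry-paradox argument already used to prove the non-computability of $K$, but now run it against an arbitrary unbounded partial recursive function $\psi$ that is supposed to lower-bound $K$. Suppose for contradiction that such a $\psi:\+O\to\mathbb{N}$ exists, with $\psi(x)\leq K(x)$ for every $x$ in the domain of $\psi$, and with $\psi$ unbounded on its domain.

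First I would define, in analogy with the function $L$ above, a function that searches for an element on which $\psi$ takes a large value. Concretely, set $M(n)$ to be the first $x$ (say in some fixed computable enumeration of $\+O$) found such that $x$ lies in the domain of $\psi$ and $\psi(x)\geq 2n$. Since $\psi$ is unbounded on its domain, such an $x$ always exists, and since $\psi$ is partial computable we can find it by dovetailing the computation of $\psi$ over all elements of $\+O$; hence $M:\mathbb{N}\to\+O$ is (total) computable. By construction $\psi(M(n))\geq 2n$, and since $\psi(x)\leq K(x)$ on the domain of $\psi$, we get $K(M(n))\geq 2n$ for all $n$.

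Next I would apply the invariance theorem (Theorem~\ref{thm:invariance}) to the computable function $M$ viewed as a partial computable function $\{0,1\}^*\to\+O$ after identifying $\mathbb{N}$ with $\{0,1\}^*$ via the bijection from Proposition~\ref{p:bound}: there is a constant $c$ with $K\leq K_M+c$. But $K_M(M(n))\leq |n|=O(\log n)$, since the binary representation of $n$ is a program for $M(n)$ under $M$. Combining,
$$
2n\leq K(M(n))\leq K_M(M(n))+c\leq O(\log n)+c,
$$
which fails for all sufficiently large $n$ — a contradiction. (If one prefers to avoid the logarithmic term, one can phrase $M$ directly as a function of a binary word $p$ representing $n$ so that $K_M(M(n))\leq |n|$, and the inequality $2n\leq |n|+c$ is already absurd for large $n$.)

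The only genuinely delicate point is checking that $M$ is total computable: this rests on $\psi$ being partial computable \emph{and} unbounded on its domain, so that the dovetailed search for an $x$ with $\psi(x)\geq 2n$ always terminates. Everything else is a routine reprise of the earlier Berry-style argument; in particular no appeal to the parameter theorem or to the non-computability of $K$ is needed — just the enumeration theorem packaged inside the invariance theorem. I expect the write-up to be only a few lines longer than the proof that $K$ is not computable.
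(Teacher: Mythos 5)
Your proof is correct and follows exactly the route the paper intends: the paper gives no explicit proof of this theorem, merely remarking that the Berry-style argument for the non-computability of $K$ adapts, and your adaptation --- replacing the (non-effective) search for the least $k$ with $K(k)\geq 2n$ by a dovetailed search for some $x$ in the domain of $\psi$ with $\psi(x)\geq 2n$, which terminates precisely because $\psi$ is partial computable and unbounded --- is the standard and intended one. The remaining steps (the chain $2n\leq\psi(M(n))\leq K(M(n))\leq K_M(M(n))+c=O(\log n)+c$ via the invariance theorem) are handled correctly, and you rightly identify totality of $M$ as the only delicate point.
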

\noindent
Of course, $K$ is bounded from above by a total computable
function, cf. Proposition \ref{p:bound}.
%
\subsection{$K$ is computable from above}\label{ss:Kabove}
%
Though $K$ is not computable, it can be approximated from above.
The idea is simple. Suppose $\+O=\{0,1\}^*$.
Let $c$ be as in point 1 of Proposition \ref{p:bound}.
Consider all programs of length less than $|x|+c$
and let them be executed during $t$ steps.
If none of them converges and outputs $x$
then take $|x|+c$ as a $t$-bound.
If some of them converges and outputs $x$
then the bound is the length of the shortest such program.
\\
The limit of this process is $K(x)$, it is obtained at some
finite step which we are not able to bound.
\\
Formally, this means that there is some 
$F:\+O\times\mathbb{N}\to\mathbb{N}$
which is computable and decreasing in its second argument
such that
$$
K(x) = \lim_{t\to+\infty} F(x,t) = \min\{F(x,t) \mid t\in\mathbb{N}\}
$$
%
%
\subsection{Kolmogorov complexity
            and G\"odel's incompleteness theorem}\label{ss:godel}
A striking version of G\"odel's incompleteness theorem has been
given by Chaitin in \cite{chaitin71, chaitin74}, 1971-1974,
in terms of Kolmogorov complexity.
Since G\"odel's celebrated proof of the incompleteness theorem,
we know that, in the language of arithmetic,
one can formalize computability and logic.
In particular, one can formalize Kolmogorov complexity and
statements about it.
Chaitin's proves a version of the incompleteness theorem
which insures that among true unprovable formulas
there are all true statements $K(u)>n$ for $n$ large enough.
\begin{theorem}[Chaitin, \cite{chaitin74}, 1974]
\label{thm:godel1}
Let $\+T$ be a computably enumerable set of axioms
in the language of arithmetic.
Suppose that all axioms in $\+T$ are true in the standard model
of arithmetics with base $\mathbb{N}$.
Then there exists $N$ such that if $\+T$ proves
$K(u)>n$ (with $u\in\{0,1\}^*$ and $n\in\mathbb{N}$) then $n\leq N$.
\end{theorem}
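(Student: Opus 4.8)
The plan is to argue by contradiction, converting a failure of the conclusion into an algorithm that produces words of arbitrarily large Kolmogorov complexity from extremely short descriptions, in flat contradiction with the upper bound of Proposition~\ref{p:bound} (via the invariance theorem).

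So suppose the conclusion fails: for every $N$ there are $u\in\words$ and $n>N$ with $\+T\vdash K(u)>n$; equivalently, for every $m\ge1$ there are $u\in\words$ and $n\ge m$ with $\+T\vdash\text{``}K(u)>n\text{''}$. Here I use Gödel's arithmetization of syntax together with the remark of \S\ref{ss:Kabove} that the relation $K(u)\le n$ is computably enumerable, so that $\text{``}K(u)>n\text{''}$ is a genuine ($\Pi_1$) arithmetic formula whose syntactic shape can be recognised and from which the parameters $u$ and $n$ can be read off effectively. Since $\+T$ is computably enumerable, so is its set of theorems (enumerate the axioms and close under the inference rules). Define a partial computable function $\psi:\words\to\words$ as follows: read the input as an integer $m$ via the bijection $\theta$ of Proposition~\ref{p:bound}, then enumerate the theorems of $\+T$ until one of the form $\text{``}K(\underline u)>\underline n\text{''}$ with $n\ge m$ appears, and output the corresponding $u$, call it $u_m$. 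By the assumption, $\psi$ halts on every $m$.

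Now combine two bounds on $K(u_m)$. On one hand, since every axiom of $\+T$ is true in the standard model and logical deduction preserves truth, $\+T$ is sound, so the statement located by $\psi$ is true: $K(u_m)>n\ge m$. On the other hand, $u_m=\psi(w_m)$ where $w_m$ is the binary word with $\theta(w_m)=m$, of length $\lfloor\log(m+1)\rfloor$; hence by the invariance theorem there is a constant $c$, independent of $m$, with $K(u_m)\le K_\psi(u_m)+c\le\log m + c'$. Together these give $m<K(u_m)\le\log m+c'$, which is impossible for all sufficiently large $m$. This contradiction establishes the theorem; the bound $N$ is (non-effectively) governed by $c'$, and depends both on $\+T$ and on the particular optimal $V$ fixed to define $K$.

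The only genuinely delicate step is the bookkeeping of the second paragraph: one must verify that $\text{``}K(u)>n\text{''}$ is expressible in the language of arithmetic and that, from a $\+T$-proof of such a formula, the parameters $u$ and $n$ are recoverable by a computable procedure. Everything else — enumerating proofs from a c.e. axiom set, soundness of $\+T$, the short code for $u_m$ through $m$, and the final numerical contradiction $m<\log m+c'$ — is routine.
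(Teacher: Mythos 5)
Your proof is correct, and it is the standard (indeed Chaitin's own) argument: search the c.e.\ set of theorems for a provable statement ``$K(u)>n$'' with $n\geq m$, use soundness to get $K(u_m)>m$ and the invariance theorem to get $K(u_m)\leq\log m+O(1)$, and derive a contradiction for large $m$. The paper states Theorem~\ref{thm:godel1} without proof, so there is nothing to compare against, but your argument is exactly the Berry-paradox mechanism the paper itself invokes in \S\ref{ss:Knoncomput} for the non-computability of $K$, and the one delicate point --- fixing an arithmetization of the $\Pi_1$ predicate ``$K(u)>n$'' from which $u$ and $n$ are effectively recoverable --- is correctly identified and legitimately taken for granted, as the paper does.
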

\noindent
How the constant $N$ depends on $\+T$ has been giving a
remarkable analysis by Chaitin.
To that purpose, he extends Kolmogorov complexity to
computably enumerable sets.
\begin{definition}[Chaitin, \cite{chaitin74}, 1974]
Let $\+O$ be an elementary set
(cf. Definition \ref{def:elementary})
and $\+C\+E$ be the family of computably enumerable (c.e.)
subsets of $\+O$.
To any partial computable $\varphi:\{0,1\}^*\times\mathbb{N}\to\+O$,
associate the Kolmogorov complexity $K_\varphi:\+C\+E\to\mathbb{N}$
such that, for all c.e. subset $\+T$ of $\+O$,
$$
K_\varphi(\+T)
= \min\{|p| \mid \+T=\{\varphi(p,t)\mid t\in\mathbb{N}\}\}
$$
(observe that $\{\varphi(p,t)\mid t\in\mathbb{N}\}$ is always c.e.
and any c.e. subset of $\+O$ can be obtained in this way
for some $\varphi$).
\end{definition}
\noindent
The invariance theorem still holds for this notion of
Kolmogorov complexity, leading to the following notion.
\begin{definition}[Chaitin, \cite{chaitin74}, 1974]
$K^{\+C\+E}:\+C\+E\to\mathbb{N}$ is $K_\varphi$ where $\varphi$
is some fixed optimal partial function.
It is defined up to an additive constant.
\end{definition}
\noindent
We can now state how the constant $N$ in Theorem \ref{thm:godel1}
depends on the theory $\+T$.
\begin{theorem}[Chaitin, \cite{chaitin74}, 1974]
\label{thm:godel2}
There exists a constant $c$ such that, for all c.e. sets $\+T$
satisfying the hypothesis of Theorem \ref{thm:godel1},
the associated constant $N$ is such that
$$ N\leq K^{\+C\+E}(\+T)+c $$
\end{theorem}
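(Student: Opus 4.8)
The plan is to refine the Berry-paradox argument that already gave Theorem \ref{thm:godel1}, but this time tracking the dependence on $\+T$ through the Kolmogorov complexity $K^{\+C\+E}(\+T)$ of the axiom set viewed as a c.e. set. First I would fix an optimal partial computable $\varphi:\{0,1\}^*\times\mathbb{N}\to\+O$ witnessing $K^{\+C\+E}$, so that every c.e. set $\+T\subseteq\{0,1\}^*$ is enumerated as $\{\varphi(p,t)\mid t\in\mathbb{N}\}$ for some $p$ with $|p|=K^{\+C\+E}(\+T)$. Then I would describe the following algorithm, which takes as data a word $p$ and a number $n$: it runs $\varphi(p,\cdot)$ to enumerate the set $\+T=\{\varphi(p,t)\mid t\}$, and simultaneously enumerates all theorems provable from $\+T$ (this is possible since, $\+T$ being c.e., the set of $\+T$-provable formulas is c.e., uniformly in an index for $\+T$); it searches for the first formula of the shape ``$K(u)>m$'' that appears among the theorems with $m\geq 2^{|p|}+2n$ (say), and when found, outputs that $u$. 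Call the resulting partial function $\psi(p,n)=u$.

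Now the key inequality. Whenever $\psi(p,n)$ is defined and equals some $u$, the word $u$ has been computed from the pair $(p,n)$, so $K(u)\mid$ the conditional information — more simply, using the pairing of Proposition \ref{p:code} and the bound $K^\mathbb{N}(n)\leq\log n+O(1)$ of Proposition \ref{p:bound}, together with the invariance theorem applied to $\psi$ composed with a decoding map — we get $K(u)\leq 2|p|+O(\log n)+c_0$ for an absolute constant $c_0$ (absorbing the index of the algorithm computing $\psi$). On the other hand, if $\+T$ is a true theory and $\+T\vdash K(u)>m$, then the statement is true, so $K(u)>m\geq 2^{|p|}+2n$; choosing $p$ to be a shortest enumerating index so that $|p|=K^{\+C\+E}(\+T)$, the search in $\psi(p,n)$ will succeed for every $n$, producing a $u$ with $2^{K^{\+C\+E}(\+T)}+2n<K(u)\leq 2K^{\+C\+E}(\+T)+O(\log n)+c_0$. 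Letting $n$ grow (or simply taking $n$ moderately large compared to $c_0$), the right-hand side is eventually smaller than the left-hand side unless the threshold $2^{K^{\+C\+E}(\+T)}+2n$ was never actually attained — i.e. unless $\+T$ proves no statement $K(u)>m$ with $m$ that large. This yields $N\leq 2^{K^{\+C\+E}(\+T)}+O(1)$ roughly; to get the sharper linear bound $N\leq K^{\+C\+E}(\+T)+c$ claimed in the theorem, I would instead feed the index $p$ into the description not as a raw string contributing $2|p|$ but have the algorithm receive $p$ as its own ``program part'' in the universal machine, so that the self-delimited cost of naming $\+T$ is exactly $K^{\+C\+E}(\+T)+O(1)$ and the contradiction becomes $m<K(u)\leq K^{\+C\+E}(\+T)+c$, forcing $N\leq K^{\+C\+E}(\+T)+c$.

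The main obstacle, I expect, is precisely this last bookkeeping: arranging the reduction so that the dependence on $\+T$ enters additively as $K^{\+C\+E}(\+T)$ rather than multiplicatively or with a spurious factor of $2$. This requires setting up the right universal machine — one that on input a program consisting of an optimal $\+C\+E$-index for $\+T$ followed by (a self-delimited code for) $n$, simulates the enumeration-and-search procedure above — and then invoking the invariance theorem for $K^{\+O}$ against this machine. One must also be careful that the threshold ``$m\geq 2^{|p|}+2n$'' is chosen so that, on the one hand, a successful search genuinely contradicts the upper bound on $K(u)$, and on the other hand, the claim that $n$ can be taken unboundedly large (so that if the search always failed, $N$ would have to be small) is correctly turned into the statement ``$\+T$ cannot prove $K(u)>m$ for $m$ above the bound''. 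Once these two ends are matched, the theorem follows by taking $N$ to be the bound obtained and noting it is $\leq K^{\+C\+E}(\+T)+c$ with $c$ absolute.
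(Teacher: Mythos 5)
The paper states this theorem without proof (it is attributed to Chaitin), so there is no in-text argument to compare against. Judged on its own, the construction you arrive at in your final paragraph is the right one and is essentially the standard Chaitin argument: a single partial computable $\theta:\words\to\+O$ which, on input $p$, enumerates $\+T_p=\{\varphi(p,t)\mid t\in\mathbb{N}\}$, enumerates the $\+T_p$-provable formulas, searches for one of the form $K(u)>m$ with $m\geq|p|+c_0$, and outputs that $u$. Truth of $\+T_p$ gives $K(u)>|p|+c_0$, while $K_\theta(u)\leq|p|$ and the invariance theorem give $K(u)\leq|p|+c_1$; if $c_0\geq c_1$ the search can therefore never succeed, so $N\leq|p|+c_0$, and taking $p$ shortest yields $N\leq K^{\+C\+E}(\+T)+c_0$. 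The middle portion of your write-up, with the threshold $2^{|p|}+2n$ and the auxiliary parameter $n$, is a detour you should simply delete: pairing $p$ with $n$ costs you $2|p|$, the exponential threshold gives only $N\leq 2^{K^{\+C\+E}(\+T)}+O(1)$, and once the threshold is tied to $|p|$ the parameter $n$ plays no role.

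The one genuine gap is the circularity you gesture at (``once these two ends are matched'') but do not resolve: the invariance constant $c_1$ belongs to the machine $\theta$, whose definition already mentions the threshold constant $c_0$, so you cannot simply ``choose $c_0=c_1$''. The standard fix is to make the construction uniform in the threshold: the map $(c_0,p)\mapsto\theta_{c_0}(p)$ is partial computable, so by the enumeration theorem there is a computable $h$ with $\theta_{c_0}=U_{h(c_0)}$, and the proof of Theorem \ref{thm:invariance} gives the explicit bound $K\leq K_{U_e}+2|e|+1$. Since $|h(c_0)|=O(\log c_0)$, one has $2|h(c_0)|+1<c_0$ for all sufficiently large $c_0$; fixing one such $c_0$ closes the argument with $c=c_0$ independent of $\+T$. (Alternatively, the Kleene fixed point theorem, point 3 of Theorem \ref{thm:3thms}, dispatches the same self-reference.) With that adjustment your proof is complete.
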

\noindent
Chaitin also reformulates  Theorem \ref{thm:godel1}
as follows:
\begin{quote}{\em
If $\+T$ consist of true formulas then it cannot prove that
a string has Kolmogorov complexity greater than the
Kolmogorov complexity of $\+T$ itself
(up to a constant independent of $\+T$).}
\end{quote}
\begin{remark}
The previous statement, and Chaitin's assertion that the
Kolmogorov complexity of $\+T$ somehow measures the power
of $\+T$ as a theory, has been much criticized in
van Lambalgen (\cite{lambalgen}, 1989),
Fallis (\cite{fallis},  1996)
and Raatikainen (\cite{Raatikainen}, 1998).
Raatikainen's main argument in \cite{Raatikainen} against Chaitin's
interpretation is that the constant in Theorem \ref{thm:godel1}
strongly depends on the choice of the optimal function $V$
such that $K=K_V$.
Indeed, for any fixed theory $\+T$, one can choose such a $V$
so that the constant is zero!
And also choose $V$ so that the constant is arbitrarily large.
\\
Though these arguments are perfectly sound,
we disagree with the criticisms issued from them.
Let us detail three main rebuttals.
\medskip
\\\textbullet~
First, such arguments are based on the use of optimal functions
associated to very unnatural universal functions $V$
(cf. point 1 of Theorem \ref{thm:3thms} and the last assertion of
Theorem~\ref{thm:invariance}).
It has since been recognized that universality is not always
sufficient to get smooth results. Universality by prefix adjunction
is sometimes required,
(cf., for instance, \S2.1 and \S6 in Becher, Figueira, Grigorieff \& Miller, 2006).
This means that, for an enumeration $(\varphi_e)_{e\in\words}$
of partial computable functions, the optimal function $V$
is to satisfy equality $V(ep)=\varphi_e(p)$, for all $e,p$,
where $ep$ is the concatenation of the strings $e$ and $p$.
\medskip
\\\textbullet~
Second, and more important than the above technical
counterargument,
it is a simple fact that modelization rarely rules out all
pathological cases.
It is intended to be used in ``reasonable" cases.
Of course, this may be misleading,
but perfect modelization is illusory.
In our opinion, this is best illustrated by Kolmogorov's citation
quoted in \S\ref{sss:constant}
to which Raatikainen's argument could be applied mutatis mutandis:
there are optimal functions for which the complexity of the text of
``War and Peace" is null and other ones for which it is arbitrarily large.
Nevertheless, this does not prevent Kolmogorov to assert
(in the founding paper of the theory \cite{kolmo65}):
{\em [For] ``reasonable" [above optimal functions],
such quantities as the ``complexity" of the text of
``War and Peace" can be assumed to be defined
with what amounts to uniqueness.}
\medskip
\\\textbullet~
Third, a final technical answer to such criticisms has been recently
provided by Calude \& Jurgensen in \cite{CaludeJurgensen},  2005.
They improve the incompleteness result given by
Theorem~\ref{thm:godel1},
proving that, for a class of formulas in the vein of those in that
theorem,
the probability that such a formula of length $n$ is provable
tends to zero when $n$ tends to infinity
whereas the probability that it be true has a strictly positive
lower bound. 
\end{remark}
%
%
%
\section{Kolmogorov complexity: some variations}
%
\noindent
\\{\bf Note.}
The denotations of (plain) Kolmogorov complexity
(that of \S \ref{sss:invariance})
and its prefix version (cf. \ref{ss:self})
may cause some confusion.
They long used to be respectively denoted by $K$ and $H$
in the literature.
But in their book \cite{livitanyi}  (first edition, 1993),
Li \& Vitanyi respectively
denoted them by $C$ and $K$.
Due to the large success of this book, these last denotations
are since used in many papers.
So that two incompatible denotations now appear in the
literature.
In this paper,
we stick to the traditional denotations $K$ and $H$.
%
%
\subsection{Levin monotone complexity}
\label{ss:monotone}
%
Kolmogorov complexity is non monotone, be it
on $\mathbb{N}$ with the natural ordering
or on $\{0,1\}^*$ with the lexicographic ordering.
In fact, for every $n$ and $c$,
there are strings of length $n$ with complexity $\geq n(1-2^{-c})$
(cf. Proposition \ref{p:incompressible}).
However, since $n\mapsto 1^n$ is computable,
$K(1^n)\leq K(n)+O(1)\leq\log n+O(1)$
(cf. point 3 of Proposition \ref{p:bound})
is much less than $n(1-2^{-c})$ for $n$ large enough.
\medskip
\\Leonid Levin (\cite{levin73}, 1973) introduced a monotone version
of Kolmogorov complexity.
The idea is to consider possibly infinite computations of
Turing machines which never erase anything on the output tape.
Such machines have finite or infinite outputs and
compute total maps $\{0,1\}^*\to\mix$ where
$\mix=\{0,1\}^*\cup\cantor$ is the family of finite or infinite
binary strings. These maps can also be viewed as limit maps
$p\to\sup_{t\to\infty}\varphi(p,t)$ where
$\varphi:\{0,1\}^*\times\mathbb{N}\to\{0,1\}^*$
is total monotone non decreasing
in its second argument.
\\
To each such map $\varphi$, Levin associates a monotone non
decreasing map $K_\varphi^{mon}:\{0,1\}^*\to\mathbb{N}$ such that
$$
K_\varphi^{mon}(x)
=\min\{|p|\mid \exists t\ x\leq_{pref}\varphi(p,t)\}
$$
\begin{theorem}[Levin (\cite{levin73}, 1973]
\mbox{}\\
\mbox{}\\ 1. If $\varphi$ is total computable and monotone non decreasing
in its second argument then $K_\varphi^{mon}:\{0,1\}^*\to\mathbb{N}$
is monotone non decreasing:
$$
 x\leq_{pref} y \Rightarrow
 K_\varphi^{mon}(x)\leq K_\varphi^{mon}(y)
$$
2. Among the $K_\varphi^{mon}$'s, $\varphi$ total computable
monotone non decreasing in its second argument,
there exists a smallest one, up to a constant.
\end{theorem}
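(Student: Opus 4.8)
The plan is to handle the two points separately, and to obtain Point 2 by copying, with one twist, the proof of the invariance theorem (Theorem \ref{thm:invariance}).

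For Point 1, the only ingredient needed is transitivity of the prefix relation $\leq_{pref}$. If $x\leq_{pref}y$, then any word $p$ witnessing $y\leq_{pref}\varphi(p,t)$ for some $t$ also witnesses $x\leq_{pref}\varphi(p,t)$; so the set of admissible programs for $y$ is included in that for $x$, and taking the shortest admissible program over the larger set can only decrease its length, giving $K_\varphi^{mon}(x)\le K_\varphi^{mon}(y)$. This is essentially a one-line argument.

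For Point 2, I would first fix, via the enumeration theorem (Theorem \ref{thm:3thms}, point 1), an effective enumeration $(\varphi_e)_{e\in\words}$ of all partial computable maps $\words\times\mathbb{N}\to\words$, and then replace each $\varphi_e$ by a ``totalised and monotonised'' version $\psi_e$: dovetail the computations $\varphi_e(q,0),\varphi_e(q,1),\dots$ and let $\psi_e(q,t)$ be the $\leq_{pref}$-greatest value output so far, freezing the current output as soon as the values produced so far fail to form a $\leq_{pref}$-chain. Each $\psi_e$ is then total computable and monotone non decreasing in its second argument, and $K^{mon}_{\psi_e}=K^{mon}_{\varphi_e}$ whenever $\varphi_e$ already has these properties (for then $\sup_t\psi_e(q,t)=\sup_t\varphi_e(q,t)$, and a finite word is $\leq_{pref}$ this supremum iff it is $\leq_{pref}$ some finite stage). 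Equivalently, one may simply enumerate Turing machines with a write-only output tape, for which totality and monotonicity in the running time come for free. Next, I would fix a computable prefix-free encoding $e\mapsto\overline e$ of $\words$, say $\overline e=1^{|e|}0e$, and define the candidate optimal machine by $V(p,t)=\psi_e(q,t)$ if $p=\overline e\,q$ (a decomposition that, when it exists, is unique by prefix-freeness) and $V(p,t)=\lambda$ (the empty word) otherwise. Since each $\psi_e$ is non decreasing in $t$ and $\lambda$ is a prefix of everything, $V$ is total computable and monotone non decreasing in its second argument.

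The invariance inequality is then routine: given any total computable $\varphi=\varphi_e$ that is monotone non decreasing in its second argument, any word $q$ with $x\leq_{pref}\psi_e(q,t)$ for some $t$ yields $x\leq_{pref}V(\overline e\,q,t)$, whence $K^{mon}_V(x)\le|\overline e\,q|=|q|+|\overline e|$; minimising over such $q$ gives $K^{mon}_V(x)\le K^{mon}_{\psi_e}(x)+|\overline e|=K^{mon}_\varphi(x)+c$ with $c=|\overline e|$ independent of $x$. So the main obstacle is not this last estimate but the bookkeeping of the previous paragraph: contrary to the case of plain complexity, one cannot effectively enumerate the total monotone maps directly, so the delicate point is to arrange the totalisation and the prefix-free plugging-in of the index so that the universal machine $V$ itself remains total, computable, and — crucially — monotone non decreasing in its second argument.
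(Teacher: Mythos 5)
Your proof is correct and follows essentially the same route as the paper, which only sketches the key point: totalising each partial computable $\varphi(p,t)$ by a step-bounded simulation that outputs the largest value seen so far (empty word if none), observing that this preserves $K^{mon}_\varphi$ for the $\varphi$'s that are already total and monotone, and then running the usual invariance argument with a prefix-free encoding of the index. Your added precaution of freezing the output when the computed values fail to form a $\leq_{pref}$-chain is a sensible tightening of a detail the paper's sketch leaves implicit.
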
 
\noindent
Considering total $\varphi$'s in the above theorem is a priori
surprising since there is no computable enumeration of total
computable functions and the proof of the Invariance
Theorem \ref{thm:invariance} is based on the enumeration theorem
(cf. Theorem~\ref{thm:3thms}).
The trick to overcome that problem is as follows.
\begin{itemize}
\item [\textbullet]
Consider all partial computable
$\varphi:\{0,1\}^*\times\mathbb{N}\to\{0,1\}^*$
which are total monotone non decreasing in their second argument.
\item [\textbullet]
Associate to each such $\varphi$ a total $\widetilde{\varphi}$
defined as follows:
$\widetilde{\varphi}(p,t)$ is the largest $\varphi(p,t')$
such that $t'\leq t$ and $\varphi(t')$ is defined within $t+1$
computation steps if there is such a $t'$.
If there is none then $\widetilde{\varphi}(p,t)$ is the empty word.
\item [\textbullet]
Observe that
$K_\varphi^{mon}(x)=K_{\widetilde{\varphi}}^{mon}(x)$.
\end{itemize}
In \S\ref{sss:SKmH}, we shall see some remarkable property of Levin
monotone complexity $K^{mon}$
concerning Martin-L\"of random reals.
%
%
\subsection{Schnorr process complexity}
\label{ss:Schnorr}
%
Another variant of Kolmogorov complexity has been introduced by
Klaus Peter Schnorr in \cite{schnorr73}, 1973.
It is based on the subclass of partial computable functions
$\varphi:\{0,1\}^*\to\{0,1\}^*$ which are
monotone non decreasing relative to the prefix ordering:
\begin{center}
(*)\qquad
$(p\leq_{pref} q\ \wedge\ \varphi(p),\varphi(q)$ are both defined)
$~\Rightarrow~\varphi(p)\leq_{pref}\varphi(q)$
\end{center}
Why such a requirement on $\varphi$?
The reason can be explained as follows.
\begin{itemize}
\item [\textbullet]
Consider a sequential composition (i.e., a pipeline)
of two processes, formalized as two functions $f,g$.
The first one takes an input $p$ and outputs $f(p)$,
the second one takes $f(p)$ as input and outputs $g(f(p))$.
\item [\textbullet]
Each process is supposed to be monotone:
the first letter of $f(p)$ appears first, then the second one, etc.
Idem with the digits of $g(q)$ for any input $q$. 
\item [\textbullet]
More efficiency is obtained if one can develop the computation
of $g$ on input $f(p)$ as soon as the letters of $f(p)$ appear.
More precisely, suppose the prefix $q$ of $f(p)$ has already
appeared but there is some delay to get the subsequent letters.
Then we can compute $g(q)$.
But this is useful only in case the computation of $g(q)$
is itself a prefix of that of $g(f(p))$.
This last condition is exactly the requirement $(*)$.
\end{itemize}
\noindent
An enumeration theorem holds for the $\varphi$'s satisfying $(*)$,
allowing to prove an invariance theorem and to define a so-called
process complexity $K^{proc}:\{0,1\}^*\to\mathbb{N}$.
The same remarkable property of Levin's monotone complexity also
holds with Schnorr process complexity, cf. \S\ref{sss:SKmH}.
%
%
\subsection{Prefix (or self-delimited) complexity}
\label{ss:self}
%
Levin (\cite{levin74}, 1974), G\'acs (\cite{gacs74}, 1974)
and Chaitin (\cite{chaitin75}, 1975)
introduced the most successful variant of Kolmogorov complexity:
the prefix complexity.
The idea is to restrict the family of partial computable functions
$\{0,1\}^*\to\+O$
(recall $\+O$ denotes an elementary set in the sense of
Definition \ref{def:elementary})
to those which have prefix-free domains, i.e.
any two words in the domain are incomparable with respect to the
prefix ordering.
\medskip
\\
An enumeration theorem holds for the $\varphi$'s satisfying $(*)$,
allowing to prove an invariance theorem and to define the so-called
prefix complexity $H:\{0,1\}^*\to\mathbb{N}$ 
(not to be confused with the
entropy of a family of frequencies, cf. \S\ref{sss:entropy}).
\begin{theorem}
Among the $K_\varphi$'s, where $\varphi:\{0,1\}^*\to\+O$ varies over
partial computable functions with prefix-free domain,
there exists a smallest one, up to a constant.
This smallest one (defined up to a constant),
denoted by $H^\+O$, is called the prefix complexity.
\end{theorem}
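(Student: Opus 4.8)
The plan is to transcribe the proof of the Invariance Theorem~\ref{thm:invariance} almost verbatim; the only genuine novelty is that one must start from an enumeration of the restricted class of functions and then check that the universal function one builds stays inside that class.

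First I would invoke the enumeration theorem for partial computable functions with prefix-free domain (the one quoted just above the statement): fix a partial computable $U:\words\times\words\to\+O$ such that, writing $U_e(p)=U(e,p)$, each $U_e$ has prefix-free domain and every partial computable $\varphi:\words\to\+O$ with prefix-free domain is some $U_e$. This enumeration theorem is the one delicate ingredient: since ``$\mathrm{dom}(\varphi)$ is prefix-free'' is co-c.e. rather than decidable, one cannot merely filter a plain enumeration; instead one dovetails the computations of a plain universal function, accepting a freshly convergent input only when it is incomparable (for $\leq_{pref}$) with all previously accepted ones. This is the same kind of repair as Levin's passage from $\varphi$ to $\widetilde{\varphi}$ in \S\ref{ss:monotone}, and it produces a $U_e$ with prefix-free domain which coincides with $\varphi$ whenever $\varphi$ already had one.

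Next, exactly as in the proof of Theorem~\ref{thm:invariance}, let $c:\words\times\words\to\words$ be the total computable injection of Proposition~\ref{p:code}, so that $c(e,x)=1^{|e|}0ex$ and $|c(e,x)|=2|e|+|x|+1$, and define $V:\words\to\+O$, with domain inside the range of $c$, by $V(c(e,p))=U(e,p)$. Then $V$ is partial computable: from a string one reads the leading block of $1$'s, hence $|e|$, hence $e$ and the suffix $p$, and then runs $U(e,p)$. The point to verify is that $\mathrm{dom}(V)$ is prefix-free: if $c(e,p)$ and $c(e',p')$ are comparable, their leading blocks of $1$'s force $|e|=|e'|$, stripping the common prefix $1^{|e|}0$ gives $e=e'$, and then $p,p'$ are comparable members of the prefix-free set $\mathrm{dom}(U_e)$, so $p=p'$. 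Hence $V$ itself has prefix-free domain, i.e. belongs to the class under consideration.

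Finally, optimality is read off the computation in Theorem~\ref{thm:invariance}: given any partial computable $\varphi:\words\to\+O$ with prefix-free domain, pick $e$ with $\varphi=U_e$; then for every $y\in\+O$ one has $K_V(y)\le\min\{|c(e,p)|\mid U(e,p)=y\}=2|e|+1+K_\varphi(y)$, so $K_V\le K_\varphi+c_\varphi$ with $c_\varphi=2|e|+1$ depending only on $\varphi$. Thus $K_V$ is smallest among the $K_\varphi$'s up to an additive constant, and one defines $H^{\+O}:=K_V$; applying the inequality in both directions to two optimal functions shows $H^{\+O}$ is well defined up to a constant. The only steps that are not pure transcription of Theorem~\ref{thm:invariance} are the enumeration theorem cited above and the prefix-freeness check for $\mathrm{dom}(V)$, and both are short, so I expect no real obstacle.
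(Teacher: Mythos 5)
Your proof is correct and follows exactly the route the paper itself sketches (it only remarks that an enumeration theorem holds for the prefix-free-domain class and that the invariance argument then goes through): you supply the two details the paper leaves implicit, namely the trimming construction that yields a uniform enumeration of prefix-free-domain machines without altering those that already have prefix-free domain, and the check that $\mathrm{dom}(V)$ is itself prefix-free because the code $c(e,p)=1^{|e|}0ep$ forces comparable codewords to share $e$ and hence to have comparable $p,p'$ in the prefix-free set $\mathrm{dom}(U_e)$. Both verifications are sound, so nothing is missing.
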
 
\noindent
This prefix-free condition on the domain may seem rather technical.
A conceptual meaning of this condition has been given by Chaitin
in terms of self-delimitation.
\begin{proposition}[Chaitin, \cite{chaitin75}, 1975]
A partial computable function $\varphi:\{0,1\}^*\to\+O$
has prefix-free domain if and only if it can be computed
by a Turing machine $\+M$ with the following property:
\begin{quote}
If $x$ is in domain$(\varphi)$
(i.e., $\+M$ on input $p$ halts in an accepting state at some step)
then the head of the input tape of $\+M$ reads entirely
the input $p$ but never moves to the cell right to $p$.
\end{quote}
\end{proposition}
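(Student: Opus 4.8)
The plan is to prove both directions of the equivalence between having a prefix-free domain and being computable by a Turing machine that never moves past the input it has consumed.

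\medskip\\
\textbf{From the machine property to prefix-freeness.}
Suppose $\varphi$ is computed by such a machine $\+M$. Assume for contradiction that $p, q \in \mathrm{domain}(\varphi)$ with $p <_{pref} q$ and $p \neq q$. Consider the run of $\+M$ on input $p$: by hypothesis it halts in an accepting state having read exactly the cells occupied by $p$, never visiting the cell immediately to the right of $p$. Now run $\+M$ on input $q$. Since $q$ extends $p$ and the head never inspects any cell beyond the portion of the tape it has read, the computation on $q$ is, step for step, identical to the computation on $p$ up to the moment $\+M$ halts on $p$ --- the head sees exactly the same symbols, because on input $p$ it never consulted the cells where $q$ and $p$ could differ (those lie at or beyond the cell right to $p$). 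Hence $\+M$ on input $q$ also halts at that same step in an accepting state, so the computation on $q$ does \emph{not} read all of $q$ (it stops within $p$, a proper prefix), contradicting the stated property applied to $q$. Therefore no such pair exists and $\mathrm{domain}(\varphi)$ is prefix-free.

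\medskip\\
\textbf{From prefix-freeness to the machine property.}
Conversely, suppose $\varphi$ has prefix-free domain. Fix any Turing machine $\+M_0$ computing $\varphi$. We build $\+M$ that simulates $\+M_0$ but manages its input head carefully. The key device is that, since $\mathrm{domain}(\varphi)$ is prefix-free, whenever the simulated computation of $\+M_0$ is going to halt and accept on some input $p$, the machine can recognize --- by reading $p$ bit by bit and, at each stage, dovetailing a search for an accepting computation of $\+M_0$ on the prefix read so far --- the exact moment at which it has consumed a word in the domain, and then it must be \emph{all} of the relevant input. Concretely, $\+M$ reads one input symbol at a time; after reading a prefix $w$, it runs a bounded simulation of $\+M_0$ on $w$ (for increasing time bounds, interleaved with reading more symbols) to test membership of $w$ in $\mathrm{domain}(\varphi)$. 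Because the domain is prefix-free, at most one prefix of the actual input tape contents lies in the domain, so this procedure is unambiguous: if and when $\+M$ detects $w \in \mathrm{domain}(\varphi)$, it outputs $\varphi(w)$ and halts, and by construction its head has read exactly $w$ and no cell to the right of $w$; if the input is not in the domain, $\+M$ either diverges or reads forever, matching $\varphi$ being undefined. Thus $\+M$ computes $\varphi$ and has the required head-restriction property.

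\medskip\\
\textbf{Main obstacle.}
The delicate point is the converse direction: making precise the dovetailed simulation so that $\+M$ never reads a cell to the right of the consumed input \emph{before} it has decided to halt, while still eventually detecting membership in the domain. One must be careful that $\+M$ does not prematurely advance the head to ``peek'' at the next symbol in order to continue the search --- it may only call for the next input symbol when it has exhausted the possibility of halting on the current prefix, which is fine since if no prefix read so far is in the domain the machine is entitled to keep reading. The prefix-freeness of the domain is exactly what guarantees this reading discipline is consistent (no two nested candidates), and the enumeration of accepting computations of $\+M_0$ with growing time bounds is what makes the membership test effective. Writing out the state-diagram bookkeeping is routine once this scheduling is fixed.
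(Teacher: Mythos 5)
Your first direction is essentially correct: if $p<_{pref}q$ with $p\neq q$ are both in the domain, the run of $\+M$ on $q$ coincides step by step with the run on $p$, because the two input tapes differ only at cells to the right of $p$ and the run on $p$ never visits those cells; hence $\+M$ halts on $q$ without having read all of $q$, contradicting the head property applied to $q$.

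The gap is in the converse. The machine you describe --- read a symbol, then simulate $\+M_0$ on the current prefix $w$ for increasing time bounds ``interleaved with reading more symbols'' --- does not have the property you then claim ``by construction.'' If the true input $p$ lies in the domain but $\+M_0$ converges on $p$ only after a long time, your interleaving schedule will have advanced the head well past the end of $p$ before the simulation detects convergence, which is exactly the forbidden behaviour. Your proposed repair in the last paragraph --- read the next symbol only ``when it has exhausted the possibility of halting on the current prefix'' --- is not effective: membership of $w$ in $\mathrm{domain}(\varphi)$ is merely semi-decidable, so the possibility of halting on $w$ can never be exhausted in finitely many steps, and a machine that waits for this never reads a second symbol. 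The standard way to close the gap is to let the computably enumerable enumeration of $\mathrm{domain}(\varphi)$ drive the reading instead of a fixed schedule: $\+M$ dovetails $\+M_0$ on all strings while remembering the prefix $w$ read so far; whenever a new $u$ enters $\mathrm{domain}(\varphi)$, it halts with output $\varphi(u)$ if $u=w$, and if $w$ is a proper prefix of $u$ it reads further symbols one at a time, checking each against the corresponding bit of $u$ and abandoning $u$ at the first mismatch; all other $u$ are ignored. With this discipline the head reaches a new cell only when some enumerated element of the domain properly extends, and agrees with, everything read so far; so if the input $p$ is in the domain, moving past $p$ would exhibit a $u\in\mathrm{domain}(\varphi)$ properly extending $p$, contradicting prefix-freeness. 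Thus the head never passes $p$, and once $p$ itself is enumerated the machine finishes reading it and halts. This is where prefix-freeness actually does its work --- not, as you suggest, in making an interleaved membership test ``unambiguous,'' but in guaranteeing that the enumeration never forces the head beyond a genuine domain element.
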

\noindent
This means that $p$, interpreted as a program, has no need of
external action (as that of an end-of-file symbol)
to know where it ends: as Chaitin says, the program is
self-delimited.
A comparison can be made with biological phenomena.
For instance, the hand of a person grows during its childhood
and then stops growing.
No external action prevents the hand to go on growing.
There is something inside the genetic program which creates
a halting signal so that the hand stops growing.
\medskip
\\
The main reason for the success of the prefix complexity is that,
with prefix-free domains, one can use the Kraft-Chaitin inequality
(cf. the proof of Theorem \ref{thm:H} in \S\ref{sss:sourcecoding})
and get remarkable properties.
\begin{theorem}[Kraft-Chaitin inequality]
A sequence (resp. computable sequence) $(n_i)_{i\in\mathbb{N}}$
of non negative integers
is the sequence of lengths of a prefix-free (resp. computable)
family of words $(u_i)_{i\in\mathbb{N}}$
if and only if $\sum_{i\in\mathbb{N}}2^{-n_i}\leq1$.
\end{theorem}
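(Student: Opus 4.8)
\emph{Proof proposal.} The plan is to prove the two directions separately, and to notice that the construction giving ``$\Leftarrow$'' is effective, so that it simultaneously delivers the computable refinement.

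\emph{The direction ``$\Rightarrow$''.} Suppose $(u_i)_{i\in\mathbb{N}}$ is prefix-free with $|u_i|=n_i$. I would argue measure-theoretically. Put the uniform measure $\mu$ on $\cantor$, so that the cylinder $[u]=\{\alpha\in\cantor\mid u=\alpha\segment|u|\}$ has $\mu([u])=2^{-|u|}$. Since $u$ is a prefix of $v$ exactly when $[v]\subseteq[u]$, the prefix-free hypothesis forces the cylinders $[u_i]$ to be pairwise disjoint (if $[u_i]\cap[u_j]\neq\emptyset$ with $n_i\leq n_j$, then $u_i$ is a prefix of $u_j$, so $i=j$). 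Hence $\sum_i 2^{-n_i}=\sum_i\mu([u_i])=\mu(\bigcup_i[u_i])\leq\mu(\cantor)=1$. Computability plays no role here, and the reverse half of the computable statement is trivial: if $(u_i)$ is a computable family, then $(n_i)=(|u_i|)$ is computable.

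\emph{The direction ``$\Leftarrow$''.} Given $(n_i)$ with $\sum_i 2^{-n_i}\leq1$, I would produce the words by the on-line Kraft--Chaitin bookkeeping. Maintain a finite ``reservoir'' $R$ of pairwise disjoint dyadic subintervals of $[0,1)$, starting from $R=\{[0,1)\}$, and serve the requests $n_0,n_1,\dots$ in turn. To serve $n_i$: pick from $R$ an interval $J$ of shortest length that is still $\geq 2^{-n_i}$, say of length $2^{-k}$ with $k\leq n_i$; if $k=n_i$, assign $J$ to $u_i$ and delete it from $R$; if $k<n_i$, repeatedly bisect $J$, keeping one sibling at each stage, until a piece of length $2^{-n_i}$ appears, assign that piece to $u_i$, delete $J$, and put the kept pieces --- whose lengths are exactly $2^{-(k+1)},\dots,2^{-n_i}$ --- into $R$. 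Finally let $u_i$ be the length-$n_i$ binary word naming the dyadic interval assigned at stage $i$. By construction the assigned intervals are pairwise disjoint, so $(u_i)$ is prefix-free; and when $(n_i)$ is computable each stage is a finite effective computation, so $(u_i)$ is computable.

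\emph{The crux.} The only point needing real care, and the main obstacle, is that the chosen $J$ always exists, i.e.\ that the procedure never blocks. I would carry the invariant that the intervals of $R$ have \emph{pairwise distinct lengths} and total length $1-\sum_{j<i}2^{-n_j}$. Combined with $\sum_j 2^{-n_j}\leq1$, the latter gives that just before stage $i$ the reservoir has total length $\geq\sum_{j\geq i}2^{-n_j}\geq 2^{-n_i}$; but finitely many pairwise distinct powers of two all $\leq 2^{-n_i-1}$ sum to strictly less than $2^{-n_i}$, so some interval of length $\geq 2^{-n_i}$ must be present, and that is the needed $J$. Preservation of the distinctness invariant comes from having picked $J$ of \emph{shortest} admissible length: this forces that no interval of any of the lengths $2^{-n_i},2^{-(n_i-1)},\dots,2^{-(k+1)}$ was in $R$ beforehand, so the freshly inserted pieces, whose lengths are precisely those values, collide with nothing. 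Dispatching these bookkeeping points, together with the harmless special cases ($n_i=0$, or a given length demanded several times), completes the proof; the finite case of ``$\Rightarrow$'' is of course just the classical Kraft inequality already recalled in the proof of Theorem~\ref{thm:H}.
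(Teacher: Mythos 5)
Your proof is correct. The paper states the Kraft--Chaitin inequality without proof, so there is nothing to compare against; your argument is the standard one (the measure/disjoint-cylinders computation for the forward direction, and the on-line dyadic-interval allocation for the converse), and you correctly identify and handle the one genuinely delicate point, namely that picking the \emph{shortest} admissible interval preserves the distinct-lengths invariant that guarantees the procedure never blocks.
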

Let us state the most spectacular property of the prefix complexity.
\begin{theorem}[The Coding Theorem (Levin (\cite{levin74}, 1974)]
\label{thm:coding}
Consider the family $\ell_1^{c.e.}$ of sequences
of non negative real numbers $(r_x)_{x\in\+O}$ such that
\begin{itemize}
\item [\textbullet]
$\sum_{x\in\+O}r_x <+\infty$ (i.e., the series is summable),
\item [\textbullet]
$\{(x,q)\in\+O\times\mathbb{Q} \mid q<r_x\}$ is computably enumerable
(i.e., the $r_x$'s have c.e. left cuts in the set of rational
numbers $\mathbb{Q}$ and this is uniform in $x$).
\end{itemize}
The sequence $(2^{-H^\+O(x)})_{x\in\+O}$ is in
$\ell_1^{c.e.}$ and, up to a multiplicative factor,
it is the largest sequence in $\ell_1^{c.e.}$.
This means that
$$
\forall (r_x)_{x\in\+O}\in\ell_1^{c.e.}\quad
\exists c\quad
\forall x\in\+O\quad r_x\leq c\ 2^{-H^\+O(x)}
$$
\end{theorem}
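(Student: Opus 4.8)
**The plan is to prove both directions of the Coding Theorem separately: first that $(2^{-H^\+O(x)})_{x\in\+O}$ lies in $\ell_1^{c.e.}$, then that it dominates every sequence in $\ell_1^{c.e.}$ up to a multiplicative constant.**

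\medskip

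First I would check that $(2^{-H^\+O(x)})_{x\in\+O}$ is itself in $\ell_1^{c.e.}$. Summability is immediate: fix an optimal $\varphi$ with prefix-free domain computing $H^\+O$; for each $x$ pick a shortest program $p_x$ with $\varphi(p_x)=x$, so that $|p_x|=H^\+O(x)$ and the $p_x$ are pairwise distinct, hence form a prefix-free set. Kraft's inequality then gives $\sum_x 2^{-H^\+O(x)}=\sum_x 2^{-|p_x|}\leq 1 <+\infty$. For the left-cut condition, recall from \S\ref{ss:Kabove} that $H^\+O$ (like plain $K$) is computable from above: there is a computable $F(x,t)$ decreasing in $t$ with $H^\+O(x)=\lim_t F(x,t)$. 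Then $2^{-H^\+O(x)}=\sup_t 2^{-F(x,t)}$ is a computable supremum of rationals, so $\{(x,q)\mid q<2^{-H^\+O(x)}\}=\{(x,q)\mid \exists t\ q<2^{-F(x,t)}\}$ is computably enumerable, uniformly in $x$.

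\medskip

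The substantial direction is domination. Given $(r_x)_{x\in\+O}\in\ell_1^{c.e.}$, by scaling I may assume $\sum_x r_x\leq 1$. The idea is to build, using the Kraft-Chaitin inequality, a prefix-free computable set of ``requests'': enumerate better and better rational lower approximations $q$ to each $r_x$; whenever a new approximation forces us to have witnessed an additional mass of roughly $2^{-n}$ for a given $x$, issue a request to reserve a fresh program of length $n$ (more precisely $n = \lceil -\log q\rceil + 1$ or similar) destined to output $x$. The total requested weight $\sum 2^{-n}$ over all $x$ stays bounded by $2\sum_x r_x \leq 2\leq 1$ after one more halving, so Kraft-Chaitin yields an effective prefix-free family of programs realizing exactly these lengths; let $\varphi$ send each such program to its target $x$. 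Then $\varphi$ has prefix-free domain and for each $x$ there is a program of length $\leq -\log r_x + O(1)$ producing $x$, whence $K_\varphi(x)\leq -\log r_x + O(1)$. The invariance theorem for prefix complexity gives $H^\+O(x)\leq K_\varphi(x)+c' \leq -\log r_x + c''$, i.e. $r_x \leq 2^{c''}\,2^{-H^\+O(x)}$, which is the claim with $c=2^{c''}$.

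\medskip

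**The main obstacle is the bookkeeping in the Kraft-Chaitin construction:** one must run the enumeration of the c.e. left cuts of all the $r_x$ simultaneously, decide at which thresholds to emit a length request so that (i) the emitted lengths for a fixed $x$ account for a geometric sum comparable to $r_x$ and (ii) the \emph{grand} total $\sum 2^{-n}$ over all requests stays $\leq 1$, all while keeping the procedure uniformly computable. The clean way to organize this is: for each $x$ and each $k\geq 1$, emit one request of length $k+1$ at the first stage where the current rational approximation to $r_x$ exceeds $2^{-k}$; then the requests for a fixed $x$ contribute at most $\sum_{k\geq 1} 2^{-(k+1)}\cdot[\,2^{-k}<r_x\,]\leq \sum_{2^{-k}<r_x} 2^{-k} \leq 2 r_x$, and summing over $x$ gives total $\leq 2\sum_x r_x$, which after absorbing the factor $2$ into the initial scaling is $\leq 1$ as Kraft-Chaitin requires. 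The only true subtlety beyond this is verifying that the shortest emitted length for $x$ is indeed $-\log r_x + O(1)$, which follows since the largest $k$ with $2^{-k}<r_x$ satisfies $k \leq -\log r_x + 1$.
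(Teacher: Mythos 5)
The paper states this theorem without proof (it is a survey; the result is attributed to Levin), so there is no in-text argument to compare against. Your proposal is the standard and correct proof: membership of $(2^{-H(x)})_x$ in $\ell_1^{c.e.}$ via Kraft applied to the prefix-free set of shortest programs together with approximability of $H$ from above, and domination via a Kraft--Chaitin request construction followed by the invariance theorem for prefix complexity. Two small points to tidy up. First, the ``scaling'' step: you cannot compute a bound on $\sum_x r_x$ from the data, but you do not need to --- some $N$ with $\sum_x r_x\leq 2^N$ exists, the sequence $(2^{-N-2}r_x)_x$ is still in $\ell_1^{c.e.}$, and the factor $2^{N+2}$ is simply absorbed into the final constant $c$, which is allowed to depend on the sequence; it is worth saying this explicitly since the construction must be effective while $N$ is obtained non-effectively. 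Second, in the last sentence you should speak of the \emph{smallest} $k$ with $2^{-k}<r_x$ (equivalently, the largest power of two below $r_x$): the set of such $k$ is an upward-closed subset of $\mathbb{N}$, its least element $k_0$ satisfies $k_0\leq -\log r_x+1$, and the shortest emitted request for $x$ has length $k_0+1$, giving $K_\varphi(x)\leq -\log r_x+O(1)$ as you claim. With these clarifications the argument is complete.
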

\noindent
In particular, consider a countably infinite alphabet $A$.
Let $V:\{0,1\}^*\to A$ be a partial computable function with
prefix-free domain such that $H^A=K_V$.
Consider the prefix code $(p_a)_{a\in A}$ such that,
for each letter $a\in A$, $p_a$ is a shortest binary string
such that $V(p_a)=a$.
Then, for every probability distribution $P:A\to[0,1]$ over
the letters of the alphabet $A$,
which is computably approximable from below
(i.e., $\{(a,q)\in A\times\mathbb{Q} \mid q<P(a)\}$ is computably
enumerable), we have
$$
\forall a\in A\quad P(a)\leq c\ 2^{-H^A(a)}
$$
for some $c$
which depends on $P$ but not on $a\in A$.
This inequality is the reason why the sequence
$(2^{-H^A(a)})_{a\in A}$ is also called
{\em the universal a priori probability}
(though, strictly speaking, it is not a probablity
since the $2^{-H^A(a)}$'s do not sum up to $1$).
%
%
%
\subsection{Oracular Kolmogorov complexity }
%
As is always the case in computability theory, everything
relativizes to any oracle $Z$.
Relativization modifies the equation given at the start of \S\ref{sss:K}, which is now
\medskip\\ \medskip\centerline{\em\begin{tabular}{rcl}
description &=& program of a partial $Z$-computable function
\end{tabular}}
and for each possible oracle $Z$ there exists a Kolmogorov
complexity relative to oracle $Z$.
\medskip\\
Oracles in computability theory can also be considered as
second-order arguments of computable or partial computable
{\em functionals}.
The same holds with oracular Kolmogorov complexity:
the oracle $Z$ can be seen as a second-order condition
for a {\em second-order conditional Kolmogorov complexity}
$$
K(y\mid Z)\hspace{3mm}\mbox{where}\hspace{3mm}
K(\ \mid\ ):\+O \times P(\+I) \to \mathbb{N}
$$
Which has the advantage that the unavoidable constant in the
``up to a constant" properties does not depend on the
particular oracle. It depends solely on the considered
functional.
\\
Finally, one can mix first-order and second-order conditions,
leading to a conditional Kolmogorov complexity with both
first-order and second-order conditions
$$
K(y\mid z, Z)\hspace{3mm}\mbox{where}\hspace{3mm}
K(\ \mid\ ,\ ):\+O \times \+I\times P(\+I) \to \mathbb{N}
$$
We shall see in \S\ref{sss:nies}
an interesting property involving oracular Kolmogorov complexity. 
%
%
%
\subsection{Sub-oracular Kolmogorov complexity}
\label{ss:Kinfinite}
%
Going back to the idea of possibly infinite computations
as in \S\ref{ss:monotone}, Let us define
$K^\infty:\{0,1\}^*\to\mathbb{N}$ such that
$$
K^\infty(x) = \min\{|p| \mid U(p)=x\}
$$
where $U$ is the map $\{0,1\}^*\to\mix$ computed by a universal
Turing machine with possibly infinite computations.
This complexity lies between $K$ and $K(\ \mid\emptyset')$
(where $\emptyset'$ is a computably enumerable set which encodes
the halting problem):
$$
\forall x\quad
K(x\mid\emptyset')\leq K^\infty(x) +O(1)\leq K(x) +O(1)
$$
This complexity is studied in
\cite{becherfigniespicchi05}, 2005, by Becher, Figueira, Nies \& Picci,
and also in our paper \cite{ferbusgrigoKandAbstraction1}, 2006.
%
%
%
%
%
%
%
\section{Formalization of randomness: finite objects}
%
%
\subsection{Sciences of randomness: probability theory}
%
Random objects {\em(words, integers, reals,...)}
constitute the basic intuition for probabilities
{\em... but they are not considered per se.}
No formal definition of random object is given:
there seems to be no need for such a formal concept.
The existing formal notion of {\em random variable} has nothing
to do with randomness: a random variable is merely a
{\em measurable function} which can be as non random as one likes.
\medskip\\
It sounds strange that the mathematical theory which deals with
randomness removes the natural basic questions:
\\\indent\textbullet~ {\em What is a random string?}
\\\indent\textbullet~  {\em What is a random infinite sequence?}
\\
When questioned, people in probability theory agree that they
skip these questions but do not feel sorry about it.
As it is, the theory deals with laws of randomness and
is so successful that it can do without entering this problem.
\medskip\\
This may seem to be analogous to what is the case in geometry.
What are points, lines, planes?
No definition is given, only relations between them.
Giving up the quest for an analysis of the nature of geometrical
objects  in profit of the axiomatic method
has been a considerable scientific step.
\\
However, we contest such an analogy.
Random objects are heavily used in many areas of science
and technology: sampling, cryptology,...
Of course, such objects are in fact
{\em ``as much as we can random"}.
Which means {\em fake randomness}.
But they refer to an ideal notion of randomness which cannot be
simply disregarded.
\medskip\\
In fact, since Pierre Simon de Laplace (1749--1827),
some probabilists never gave up the idea of formalizing
the notion of random object.
Let us cite particularly Richard von Mises (1883--1953)
and Kolmogorov.
In fact, it is quite impressive that, having so brilliantly
and efficiently axiomatized probability theory via measure theory
in \cite{kolmo33}, 1933, Kolmogorov was not fully satisfied
of such foundations\footnote{
Kolmogorov is one of the rare probabilists -- up to now --
not to believe that Kolmogorov's axioms for probability theory
do not constitute the last word about formalizing randomness...}.
And he kept a keen interest to the quest for a formal
notion of randomness initiated by von Mises in the 20's.
%
%
\subsection{The 100 heads paradoxical result in probability theory}
%
That probability theory fails to completely account for randomness
is strongly witnessed by the following paradoxical fact.
In probability theory,
{\em if we toss an unbiaised coin 100 times then
       100 heads are just as probable as any other outcome!}
Who really believes that?
{\em\begin{quote}
The axioms of probability theory, as developped by Kolmogorov,
do not solve all mysteries that they are sometimes supposed to.

\hfill{G\' acs, \cite{gacs93}, 1993}
\end{quote}}

%
\subsection{Sciences of randomness: cryptology}
%
Contrarily to probability theory, cryptology heavily uses random
objects.
Though again, no formal definition is given, random sequences
are produced which are not fully random, just hard enough so that
the mechanism which produces them cannot be discovered in reasonable
time.
\begin{quote}{\em
Anyone who considers arithmetical methods of producing random reals
is, of course, in a state of sin.
For, as has been pointed out several times, there is no such thing
as a random number --- there are only methods to produce random
numbers, and a strict arithmetical procedure is of course not such
a method.}

\hfill{Von Neumann, \cite{neumannsin}, 1951}
\end{quote}
So, what is ``true" randomness?
Is there something like a degree of randomness?
Presently, (fake) randomness only means to pass
some statistical tests.
One can ask for more.
%
\subsection{Kolmogorov's proposal: incompressible strings}
%
We now assume that $\+O=\{0,1\}^*$, i.e., we restrict to words.
%
\subsubsection{Incompressibility with Kolmogorov complexity}
Though much work had been devoted to get
{\em a mathematical theory of random objects},
notably by von Mises (\cite{mises19,mises39}, 1919-1939),
none was satisfactory up to the 60's when Kolmogorov
based such a theory on Kolmogorov complexity,
hence on computability theory.
\\
The theory was, in fact, independently\footnote{
For a detailed analysis of {\em who did what, and when},
see Li \& Vitanyi's book \cite{livitanyi}, p.89--92.} developed by
Gregory J. Chaitin (b. 1947), 
\cite{chaitin66,chaitin69}
who submitted both papers in 1965.
\medskip\\
The basic idea is as follows:
\medskip
\\{\em \textbullet~ larger is the Kolmogorov complexity of a text,
more random is the text,
\\\textbullet~  larger is its information content,
and more compressed is the text.}
\medskip
\\
Thus, a theory for measuring the information content
is also a theory of randomness.
\medskip\\
Recall that there exists $c$ such that for all $x\in\{0,1\}^*$,
$K(x)\leq |x|+c$ (Proposition \ref{p:bound}).
The reason being that there is a ``stupid" program of length
about $|x|$ which computes the word $x$ by
telling what are the successive letters of $x$.
The intuition of incompressibility is as follows:
$x$ is incompressible if there no shorter way to get $x$.
\\
Of course, we are not going to define absolute randomness
for words. But a measure of randomness telling
{\em how far from $|x|$ is $K(x)$.}
\begin{definition}[Measure of incompressibility]$\\ $
A word $x$ is $c$-incompressible if $K(x)\geq|x|-c$.
\end{definition}
\noindent
It is rather intuitive that most things are random.
The next Proposition formalizes this idea.
\begin{proposition}\label{p:incompressible}
For any $n$, the proportion of $c$-incompressible strings
of length $n$ is $\geq 1-2^{-c}$.
\end{proposition}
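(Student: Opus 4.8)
The plan is a straightforward counting argument. First I would fix $n$ and count the binary strings of length exactly $n$: there are $2^n$ of them. A string $x$ of length $n$ is \emph{not} $c$-incompressible precisely when $K(x) < |x| - c = n - c$, i.e.\ when $x$ has a program of length $\le n-c-1$. So I would bound the number of such ``compressible'' strings by the total number of available short programs.

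Next I would count the programs of length $\le n-c-1$. Since a binary word of length $\le m$ is one of $2^{m+1}-1$ words (summing $2^0 + 2^1 + \cdots + 2^m$), the number of programs $p$ with $|p| \le n-c-1$ is at most $2^{n-c} - 1 < 2^{n-c}$. Each such program, under the fixed optimal $V$, produces at most one output; hence at most $2^{n-c}$ strings (of any length, in particular of length $n$) can satisfy $K(x) \le n-c-1$. Therefore the number of length-$n$ strings that fail to be $c$-incompressible is $< 2^{n-c}$.

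Finally I would combine the two counts: the number of $c$-incompressible strings of length $n$ is $> 2^n - 2^{n-c}$, so their proportion among all $2^n$ length-$n$ strings is $> 1 - 2^{n-c}/2^n = 1 - 2^{-c}$, which is the claimed bound (with $\ge$ in place of the strict inequality, which only makes the statement weaker and is the form stated).

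There is essentially no obstacle here: the only thing to be slightly careful about is that ``program of length $\le n-c-1$'' is the right threshold — one must use that $K(x) \ge |x| - c$ fails iff $K(x) \le |x| - c - 1$ since $K$ is integer-valued — and that we are counting over \emph{all} programs that could output a given length-$n$ string, not just programs of length $n$. Both points are routine. The argument does not even use the invariance theorem, only that $K = K_V$ for a fixed $V$ and that $V$ is a function (so each program has at most one output).
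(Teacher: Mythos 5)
Your proof is correct and is exactly the paper's argument: the paper's one-line proof reads ``At most $2^{n-c}-1$ programs of length $<n-c$ and $2^n$ strings of length $n$,'' which is precisely your counting of short programs versus length-$n$ strings. Your write-up just makes explicit the routine points (the threshold $K(x)\leq n-c-1$ from integrality, and that each program yields at most one output) that the paper leaves implicit.
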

\begin{quote}
\begin{proof}
At most $2^{n-c}-1$ programs of length $<n-c$
and $2^n$ strings of length $n$.
\end{proof}
\end{quote}
%
%
\subsubsection{Incompressibility with length conditional
Kolmogorov complexity}
%
We observed in \S\ref{sss:entropy} that the entropy of a word
of the form $000...0$ is null.
i.e., entropy did not considered the information conveyed by
the length.
\\
Here, with incompressibility based on Kolmogorov complexity,
we can also ignore the information content conveyed by the length
by considering {\em incompressibility based on length conditional
Kolmogorov complexity}.
\begin{definition}[Measure of length conditional incompressibility]
A word $x$ is length conditional $c$-incompressible if
$K(x\mid |x|)\geq|x|-c$.
\end{definition}
\noindent
The same simple counting argument yields the following Proposition.
\begin{proposition}
For all $n$, the proportion of length conditional $c$-incompressible
strings of length $n$ is $\geq 1-2^{-c}$.
\end{proposition}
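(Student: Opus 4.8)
The plan is to mimic exactly the counting argument already used for Proposition \ref{p:incompressible}, the only change being that we now count programs relative to a fixed input, namely the length $n$ written in binary (or, more precisely, we use the optimal function $V$ defining $K(\cdot\mid\cdot)$ with second argument $n$). First I would fix $n$ and consider the function $x \mapsto K(x \mid n)$ restricted to the $2^n$ strings $x$ of length $n$. A string $x$ of length $n$ fails to be length conditional $c$-incompressible exactly when $K(x \mid n) < n - c$, i.e.\ when there is a program $p$ with $|p| < n-c$ such that $V(p, n) = x$.

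The key combinatorial observation is that the number of binary words $p$ of length $< n-c$ is $\sum_{k=0}^{n-c-1} 2^k = 2^{n-c} - 1$. Since each such $p$ can produce at most one value $V(p,n)$, the map $p \mapsto V(p,n)$ can hit at most $2^{n-c}-1$ distinct strings. Hence among the $2^n$ strings of length $n$, at most $2^{n-c}-1$ are compressible in this sense, so at least $2^n - (2^{n-c}-1) = 2^n - 2^{n-c} + 1$ of them are length conditional $c$-incompressible. Dividing by $2^n$, the proportion is at least $1 - 2^{-c} + 2^{-n} > 1 - 2^{-c}$, which gives the claim.

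There is essentially no obstacle here: the argument is identical in structure to the proof of Proposition \ref{p:incompressible}, with $K$ replaced by $K(\cdot \mid n)$ throughout. The one point worth a sentence of care is that conditioning on $n$ does not change the counting of \emph{programs}: the programs are still plain binary words in $\{0,1\}^*$, and the bound on how many length-$n$ strings they can collectively output is governed solely by how many short programs there are, not by the input $n$. So I would simply write:

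\begin{quote}
\begin{proof}
There are at most $2^{n-c}-1$ programs of length $<n-c$, each computing (on input $n$) at most one string, hence at most $2^{n-c}-1$ strings of length $n$ satisfy $K(x\mid n)<n-c$. As there are $2^n$ strings of length $n$, the proportion of those with $K(x\mid n)\geq n-c$ is at least $1-2^{-c}$.
\end{proof}
\end{quote}
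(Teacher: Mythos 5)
Your proof is correct and is exactly the counting argument the paper intends: it states that ``the same simple counting argument'' as for plain incompressibility applies, namely that the at most $2^{n-c}-1$ programs of length $<n-c$ can output (on input $n$) at most $2^{n-c}-1$ of the $2^n$ strings of length $n$. Your added remark that conditioning on $n$ does not change the count of programs is the right point of care, and nothing further is needed.
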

\noindent
A priori length conditional incompressibility is stronger
than mere incompressibility.
However, the two notions of incompressibility are about the
same \ldots up to a constant.
\begin{proposition}
There exists $d$ such that, for all 
$c\in\mathbb{N}$ and $x\in\{0,1\}^*$
\medskip\\
1. $x$ is length conditional $c$-incompressible
$\Rightarrow$ $x$ is $(c+d)$-incompressible
\medskip\\
2. $x$ is $c$-incompressible $\Rightarrow$ $x$ is length conditional
$(2c+d)$-incompressible.
\end{proposition}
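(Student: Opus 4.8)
The two implications relate $K(x)$ and $K(x\mid |x|)$, so the strategy is to bound each in terms of the other, up to an additive constant, and then transport the incompressibility inequalities through these bounds. Point~1 is the easy direction and rests directly on Proposition~\ref{p:bound}, while Point~2 is the substantive one and requires a counting argument.

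\textbf{Point 1.}
By part~2 of Proposition~\ref{p:bound} there is a constant $d$ such that $K(x\mid |x|)\le K(x)+d$ for all $x$ (conditional complexity is never much harder than unconditional). Suppose $x$ is length conditional $c$-incompressible, i.e. $K(x\mid |x|)\ge |x|-c$. Then $|x|-c\le K(x\mid |x|)\le K(x)+d$, whence $K(x)\ge |x|-(c+d)$, which says exactly that $x$ is $(c+d)$-incompressible. This is immediate once the right inequality from Proposition~\ref{p:bound} is in hand.

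\textbf{Point 2.}
Here I must go the other way: from $K(x)\ge |x|-c$ I want $K(x\mid |x|)\ge |x|-(2c+d)$, i.e. a \emph{lower} bound on the conditional complexity. The idea is a counting argument on the fiber of strings of a fixed length $n=|x|$. Fix $n$ and suppose, toward bounding the exceptions, that $K(x\mid n)$ is small, say $K(x\mid n)\le m$. There are fewer than $2^{m+1}$ programs of length $\le m$, so at most $2^{m+1}$ strings of length $n$ satisfy $K(x\mid n)\le m$. The key move is that knowing $n$ and the \emph{rank} of such a string $x$ among those length-$n$ strings with low conditional complexity yields a short \emph{unconditional} description of $x$: I would encode $n$ (costing about $\log n$ bits, or I can feed it in via a self-delimiting prefix as in \S\ref{sss:codemany}) together with this rank. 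Since $x$ is $c$-incompressible we have $K(x)\ge n-c$; combining with the upper bound on $K(x)$ coming from ``$n$ plus rank'' forces the rank, and hence $m$, to be close to $n$. Tuning the bookkeeping so that the description length comes out as $2K(x\mid n)$ plus a constant (the factor $2$ arising from the self-delimiting doubling in Proposition~\ref{p:code} used to glue the length information to the rank) yields $K(x)\le 2K(x\mid n)+d$ for a universal constant $d$. Feeding $K(x)\ge n-c$ into this gives $n-c\le 2K(x\mid n)+d$, i.e. $K(x\mid n)\ge \tfrac{n-c-d}{2}$; at first sight this is only half of $n$, which is weaker than wanted, so the encoding must instead directly bound $K(x)$ by $K(x\mid|x|)$ plus the (doubled) cost of self-delimiting the length, giving $K(x)\le K(x\mid|x|)+2\log|x|+O(1)$ and a different route.

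\textbf{Where the difficulty lies.}
The main obstacle is getting the constant in Point~2 to be exactly $2c+d$ rather than something with a logarithmic slack. The honest inequality in the background is $K(x)\le K(x\mid|x|)+K(|x|)+O(1)$, and since $K(|x|)$ can be as large as $\log|x|$, a naive bound produces a $\log|x|$ term, not a constant. The resolution must exploit that we only need the implication for $c$-incompressible $x$: for such $x$, $K(x)$ is close to $|x|$, and $|x|$ is essentially recoverable from $x$ itself (up to lower-order terms), so the length carries almost no extra information beyond $x$. Concretely, I would show $K(|x|\mid x)=O(1)$ and invoke the symmetry-of-information style estimate, or argue directly that for a $c$-incompressible string the conditional complexity cannot drop below $|x|-(2c+d)$ because otherwise $x$ would be describable unconditionally by its short conditional program together with a description of $|x|$ that is itself cheap given the near-maximal complexity of $x$. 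Pinning down this trade-off so that the two $c$'s combine into the clean coefficient $2$ is the crux; the counting and the invariance theorem supply everything else routinely.
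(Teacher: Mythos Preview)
Your Point~1 is correct and matches the paper's one-line argument.

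Your Point~2 does not close. You correctly diagnose the obstacle: the naive bound $K(x)\le K(x\mid|x|)+2\log|x|+O(1)$ carries an unwanted $\log|x|$ term, and your counting argument via ranks has the same defect since the rank must still be glued to a description of $n=|x|$. The claimed inequality $K(x)\le 2K(x\mid n)+d$ is not justified (the self-delimiting overhead doubles the cost of whichever piece you put first, but you must still pay for $n$ somewhere), and, as you note yourself, it would in any case only yield $K(x\mid n)\ge (n-c-d)/2$, far too weak. The appeals to $K(|x|\mid x)=O(1)$ and to symmetry of information remain programmatic; for plain $K$, symmetry of information itself carries a logarithmic error term, so that route does not obviously succeed either.

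The trick you are missing is this: do not encode $|x|$; encode the \emph{difference} $|x|-K(x\mid|x|)$. Let $q$ be a shortest program witnessing $K(x\mid|x|)$, so $|q|=K(x\mid|x|)$, and let $p$ be a shortest program for the integer $|x|-K(x\mid|x|)$. From the pair $\langle p,q\rangle$ (cost $2|p|+|q|+O(1)$ by Proposition~\ref{p:code}) one recovers $|x|-K(x\mid|x|)$ from $p$, reads off $K(x\mid|x|)$ as $|q|$, obtains $|x|$ by summing, and finally computes $x$ by running $q$ on input $|x|$. This gives
\[
K(x)\ \le\ K(x\mid|x|) + 2K\bigl(|x|-K(x\mid|x|)\bigr) + O(1).
\]
Now set $z=|x|-K(x\mid|x|)$; using $K(z)\le\log z+O(1)$ together with the hypothesis $K(x)\ge|x|-c$ yields $z\le 2\log z + c + O(1)$, which forces $z\le 2c+O(1)$. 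That is precisely $K(x\mid|x|)\ge|x|-(2c+d)$. The whole point is that $|q|$ already carries the value $K(x\mid|x|)$ for free, so only the small difference needs explicit encoding; this is exactly how the coefficient~$2$ in $2c$ arises without any residual $\log|x|$.
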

\begin{quote}
\begin{proof}
1 is trivial.
For 2, first observe that there exists $e$ such that, for all $x$,
$$
(*)\ \ \ \ K(x) \leq K(x\mid |x|) + 2 K(|x| - K(x\mid |x|)) + d
$$
In fact, if $K=K_\varphi$ and $K(\ \mid\ )=K_{\psi(\ \mid\ )}$,
consider $p,q$ such that
\medskip\\
$\begin{array}{rclcrcl}
|x| - K(x\mid |x|) &=& \varphi(p)
&& \psi(q\mid|x|) &=& x
\\
K(|x| - K(x\mid |x|)) &=& |p|
&& K(x\mid |x|) &=& |q|
\end{array}$
\medskip\\
With $p$ and $q$, hence with $\langle p,q\rangle$
(cf. Proposition \ref{p:code}), one can successively get
$\left\{\begin{array}{ll}
|x| - K(x\mid |x|) & \mbox{this is $\varphi(p)$}
\\
K(x\mid |x|) & \mbox{this is $q$}
\\
|x| & \mbox{just sum the above quantities}
\\
x & \mbox{this is $\psi(q\mid|x|)$}
\end{array}\right.$
\\
Thus, $K(x)\leq |\langle p,q\rangle| +O(1)$.
Applying Proposition~\ref{p:code}, we get (*).
\\
Using  $K^{\mathbb{N}}\leq\log+c_1$
and $K^{\words}(x)\geq|x|-c$ (cf., Proposition~\ref{p:bound}),
(*) yields
$$
|x|- K(x\mid |x|) \leq 2\log(|x| - K(x\mid |x|)) + 2c_1+c+d
$$
Finally, observe that $z\leq 2\log z +k$ insures $z\leq\max(8,2k)$.
\end{proof}
\end{quote}
%
%
\subsection{Incompressibility is randomness: Martin-L\"{o}f's argument}
\label{ss:tests}
%
Now, if incompressibility is clearly a necessary condition for
randomness, how do we argue that it is a sufficient condition?
Contraposing the wanted implication, let us see that if a word
fails some statistical test then it is not incompressible.
We consider some spectacular failures of statistical tests.
\begin{example}\label{ex:ex1}
\mbox{}\\
\mbox{}\\ 1. {\em [Constant half length prefix]}
For all $n$ large enough, a string $0^nu$
with $|u|=n$ cannot be $c$-incompressible.
\medskip\\
2. {\em [Palindromes]}
Large enough palindromes cannot be
$c$-incompressible.
\medskip\\
3. {\em [$0$ and $1$ not equidistributed]}
For all $0<\alpha<1$, for all $n$ large enough,
a string of length $n$ which has $\leq \alpha\frac{n}{2}$ zeros
cannot be $c$-incompressible.
\end{example}
\begin{quote}
\begin{proof}
1. Let $c'$ be such that $K(x)\leq|x|+c'$.
Observe that there exists $c''$ such that $K(0^nu) \leq K(u)+c''$
hence
$$
K(0^nu)\leq n+c'+c'' \leq \frac{1}{2}|0^nu|+c'+c''
$$
So that $K(0^nu)\geq|0^nu|-c$ is impossible for $n$ large enough.
\medskip\\
2. Same argument:
There exists $c''$ such that, for any palindrome $x$,
$$
K(x)\leq \frac{1}{2}|x| + c''
$$
\medskip\\
3. The proof follows the classical argument to get the law
of large numbers (cf. Feller's book \cite{feller}).
Let us do it for $\alpha=\frac{2}{3}$, so that
$\frac{\alpha}{2}=\frac{1}{3}$.
\medskip\\
Let $A_n$ be the set of strings of length $n$ with
$\leq \frac{n}{3}$ zeros.
We estimate the number $N$ of elements of $A_n$.
$$
N=\sum_{i=0}^{i=\frac{n}{3}}
\left(\begin{array}{c} n \\i \end{array}\right)
\leq
(\frac{n}{3}+1)\
\left(\begin{array}{c} n \\ \frac{n}{3} \end{array}\right)
=\ (\frac{n}{3}+1)\ \frac{n!}{\frac{n}{3}!\ \frac{2n}{3}!}
$$
Use inequality $1 \leq e^{\frac{1}{12n}} \leq 1.1$
and Stirling's formula (1730),
$$
\sqrt{2n\pi}\ {\left(\frac{n}{e}\right)}^n\ e^{\frac{1}{12n+1}}
< n!
< \sqrt{2n\pi}\ {\left(\frac{n}{e}\right)}^n\ e^{\frac{1}{12n}}
$$
Observe that $1.1\ (\frac{n}{3}+1) < n$ for $n\geq2$. Therefore,
$$
N < n \frac{\sqrt{2n\pi}\ {\left(\frac{n}{e}\right)}^n}
{\sqrt{2\frac{n}{3}\pi}\
{\left(\frac{\frac{n}{3}}{e}\right)}^{\frac{n}{3}}\
\sqrt{2\frac{2n}{3}\pi}\
{\left(\frac{\frac{2n}{3}}{e}\right)}^{\frac{2n}{3}}}\
= \frac{3}{2}\
  \sqrt{\frac{n}{\pi}}\ {\left(\frac{3}{\sqrt[3]{4}}\right)}^n
$$
Using Proposition \ref{p:rank}, for any element of $A_n$, we have
$$
K(x\mid n)
\leq \log(N) +d
\leq n\log\left(\frac{3}{\sqrt[3]{4}}\right) + \frac{\log n}{2} + d
$$
Since $\frac{27}{4} < 8$, we have $\frac{3}{\sqrt[3]{4}} < 2$
and $\log\left(\frac{3}{\sqrt[3]{4}}\right) < 1$.
Hence,
$n-c \leq n\log\left(\frac{3}{\sqrt[3]{4}}\right)+\frac{\log n}{2}+d$
is impossible for $n$ large enough.
\\
So that $x$ cannot be $c$-incompressible.
\end{proof}
\end{quote}
Let us give a common framework to the three above examples
so as to get some flavor of what can be a statistical test.
To do this, we follow the above proofs of compressibility.
\begin{example}\label{ex:ex2}
\mbox{}\\
\mbox{}\\ 1. {\em [Constant left half length prefix]}\\
Set $V_m = \mbox{ all strings with $m$ zeros ahead}$.
The sequence $V_0,V_1,...$ is decreasing.
The number of strings of length $n$ in $V_m$ is $0$ if $m>n$
and $2^{n-m}$ if $m\leq n$.
Thus, the proportion
$\frac{\sharp\{x \mid |x|=n\ \wedge\ x\in V_m\}}{2^n}$
of length $n$ words which are in $V_m$ is $2^{-m}$.
\medskip\\
2. {\em [Palindromes]}
Put in $V_m$ all strings which have equal length $m$
prefix and suffix.
The sequence $V_0,V_1,...$ is decreasing.
The number of strings of length $n$ in $V_m$ is
$0$ if $m>\frac{n}{2}$
and $2^{n-2m}$ if $m\leq \frac{n}{2}$.
Thus, the proportion of length $n$ words which are in $V_m$
is $2^{-2m}$.
\medskip\\
3. {\em [$0$ and $1$ not equidistributed]}
Put in $V^\alpha_m =$ all strings $x$ such that
the number of zeros is
$\leq (\alpha+(1-\alpha)2^{-m})\frac{|x|}{2}$.
The sequence $V_0,V_1,...$ is decreasing.
A computation analogous to that done in the proof of the law
of large numbers shows that the proportion of length $n$ words
which are in $V_m$ is $\leq 2^{-\gamma m}$ for some $\gamma>0$
(independent of $m$).
\end{example}
\noindent
Now, what about other statistical tests?
But what is a statistical test?
A convincing formalization has been developed by Martin-L\"of.
The intuition is that illustrated in Example \ref{ex:ex2}
augmented of the following feature:
each $V_m$ is computably enumerable and so is the relation
$\{(m,x) \mid x\in V_m\}$.
A feature which is analogous to the partial computability
assumption in the definition of Kolmogorov complexity.
\begin{definition}\label{def:test}
[Abstract notion of statistical test, Martin-L\"of, 1964]
A statistical test is a family of nested critical sets
$$
\{0,1\}^*\supseteq V_0\supseteq V_1\supseteq V_2\supseteq
...\supseteq V_m\supseteq...
$$
such that $\{(m,x) \mid x\in V_m\}$ is computably enumerable
and the proportion
$\frac{\sharp\{x \mid |x|=n\ \wedge\ x\in V_m\}}{2^n}$
of length $n$ words which are in $V_m$ is $\leq 2^{-m}$.
\medskip\\
Intuition. The bound $2^{-m}$ is just a normalization.
Any bound $b(n)$ such that 
$b:\mathbb{N}\to\mathbb{Q}$ which is computable, 
decreasing and with limit $0$ could replace $2^{-m}$. 
\\
The significance of $x\in V_m$ is that the hypothesis
{\em $x$ is random} is rejected with significance level $2^{-m}$.
\end{definition}
\begin{remark}
Instead of sets $V_m$ one can consider a function
$\delta:\{0,1\}^*\to\mathbb{N}$ such that
$\frac{\sharp\{x \mid |x|=n\ \wedge\ \delta(x)\geq m\}}{2^n}\leq 2^{-m}$
and $\delta$ is computable from below, i.e.,
$\{(m,x) \mid \delta(x)\geq m\}$ is recursively enumerable.
\end{remark}
We have just argued on some examples that all statistical tests
from practice are of the form stated by Definition \ref{def:test}.
Now comes Martin-L\"of fundamental result about statistical tests
which is in the vein of the invariance theorem.
\begin{theorem} [Martin-L\"of, 1965]\label{thm:test}
Up to a constant shift, there exists a largest statistical test
$(U_m)_{m\in\mathbb{N}}$
$$
\forall (V_m)_{m\in\mathbb{N}}\quad \exists c\quad \forall m\quad
V_{m+c} \subseteq U_m
$$
In terms of functions, up to an additive constant,
there exists a largest statistical test $\Delta$
$$
\forall \delta\quad \exists c\quad \forall x\quad
\delta(x)<\Delta(x)+c
$$
\end{theorem}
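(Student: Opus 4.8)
The plan is to follow the template of the invariance theorem (Theorem~\ref{thm:invariance}): first obtain an effective enumeration $(V^e_m)_{e,m\in\mathbb{N}}$ of \emph{all} statistical tests, and then amalgamate these tests into a single one that dominates each of them up to a constant shift. Throughout, a statistical test will be identified with the c.e.\ relation presenting it, and the domination we aim for is $V_{m+c}\subseteq U_m$ as in the statement.

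First I would build the enumeration. One cannot enumerate statistical tests directly, since neither nestedness nor the proportion bound $\frac{\sharp\{x:|x|=n\ \wedge\ x\in V_m\}}{2^n}\le 2^{-m}$ is decidable from a c.e.\ presentation. The fix is a \emph{trimming} construction: fix a standard effective enumeration $(W_e)_{e\in\mathbb{N}}$ of all c.e.\ subsets of $\mathbb{N}\times\words$, and for each $e$ define $(V^e_m)_m$ by running the enumeration of $W_e$ and, each time a pair $(m,x)$ is produced, placing $x$ into $V^e_j$ for all $j\le m$ \emph{provided} this keeps the number of length-$|x|$ strings in each $V^e_j$ at most $2^{|x|-j}$; otherwise the pair is discarded. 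At any finite stage only finitely many strings have been seen, so these counting checks are effective; hence $\{(e,m,x):x\in V^e_m\}$ is c.e.\ uniformly in $e$, and each $(V^e_m)_m$ is nested and meets the proportion bound by construction, so it is a statistical test. Moreover, if $W_e$ already presents a statistical test $(V_m)_m$, then when $(m,x)$ is enumerated we have $x\in V_j$ for all $j\le m$ and the number of length-$|x|$ strings already in $V^e_j$ is at most $\sharp\{y:|y|=|x|\ \wedge\ y\in V_j\}-1\le 2^{|x|-j}-1$, so nothing is ever discarded and $V^e_m=V_m$. Thus every statistical test occurs as some $(V^e_m)_m$.

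Next I would amalgamate. Set
$$
U_m \;=\; \bigcup_{e\in\mathbb{N}} V^e_{\,m+e+1}.
$$
Then $\{(m,x):x\in U_m\}$ is c.e.\ (it is a projection of the uniformly c.e.\ relation above); the family is nested since each $V^e$ is nested and $m\mapsto m+e+1$ is increasing; and for every $n$,
$$
\sharp\{x:|x|=n\ \wedge\ x\in U_m\}\;\le\;\sum_{e\ge 0}2^{\,n-(m+e+1)}\;=\;2^{\,n-m},
$$
so $(U_m)_m$ is a statistical test. Given any statistical test $(V_m)_m$, choose $e_0$ with $(V^{e_0}_m)_m=(V_m)_m$; then $V_{m+e_0+1}=V^{e_0}_{m+e_0+1}\subseteq U_m$ for all $m$, which is the asserted domination with $c=e_0+1$. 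The function version follows by putting $\Delta(x)=\max\{m:x\in U_m\}$, which is finite because $x\in U_m$ forces $2^{-m}\ge 2^{-|x|}$, hence $m\le|x|$, and which is computable from below since $\{(m,x):\Delta(x)\ge m\}=\{(m,x):x\in U_m\}$ is c.e.; translating $V_{m+c}\subseteq U_m$ for the test $V_m=\{x:\delta(x)\ge m\}$ gives $\delta(x)\le\Delta(x)+c$, hence $\delta(x)<\Delta(x)+c+1$.

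The main obstacle is the trimming step. One must verify, for \emph{every} c.e.\ input $W_e$ including pathological ones, that the simultaneous bookkeeping of the nested levels $V^e_0\supseteq\cdots\supseteq V^e_m$ can be carried out so that the output is always a genuine statistical test, while at the same time a $W_e$ that already presents a test passes through untouched (the argument sketched above, balancing the counting inequality against nestedness and being careful about the order in which pairs are processed). Everything else—c.e.-ness and nestedness of $(U_m)_m$, and the geometric-series bound $\sum_{e\ge 0}2^{-(e+1)}=1$ that yields the proportion estimate—is routine.
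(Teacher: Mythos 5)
Your proof is correct, but it takes a genuinely different route from the paper. You give Martin-L\"of's original construction: effectively enumerate \emph{all} statistical tests by trimming an enumeration $(W_e)_e$ of c.e.\ relations so that each output is a genuine test while genuine tests pass through unchanged, then amalgamate via $U_m=\bigcup_e V^e_{m+e+1}$ with the geometric series absorbing the measure bounds. The delicate point is exactly the one you flag — that the trimming is effective, always yields a nested family meeting the $2^{-m}$ bound, and is the identity on relations that already present tests — and your verification of it is sound (the only cosmetic gaps are the convention for $\max\emptyset$ in defining $\Delta$ and the $m=0$ level, which are harmless). The paper does something quite different: it exhibits the universal test directly as $\Delta(x)=|x|-K(x\mid|x|)-1$, using the approximability of $K$ from above for the c.e.\ condition, a counting of short programs for the $2^{-m}$ bound, and the rank-coding argument (Proposition~\ref{p:rank}) to show $\delta(x)<\Delta(x)+c$ for every test $\delta$. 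The trade-off: your approach is self-contained within computability theory, needs no Kolmogorov complexity, and is the one that transfers verbatim to the infinite-sequence setting of \S5; the paper's approach avoids the enumeration-and-trimming machinery entirely and, more importantly, makes the corollary that length-conditional incompressibility implies passing all statistical tests an immediate restatement of the definition of $\Delta$, whereas with your universal test that corollary would still require separately proving that $x\mapsto|x|-K(x\mid|x|)-1$ is a statistical test — i.e., the first half of the paper's argument.
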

\begin{quote}
\begin{proof}
Consider $\Delta(x)=|x|-K(x\mid|x|)-1$.
\\
\fbox{\em $\Delta$ is a test.}
Clearly, $\{(m,x) \mid \Delta(x)\geq m\}$ is computably enumerable.
\\
$\Delta(x)\geq m$ means $K(x\mid|x|)\leq |x|-m-1$.
So no more elements in $\{x \mid \Delta(x)\geq m\ \wedge\ |x|=n\}$
than programs of length $\leq n-m-1$, which is $2^{n-m}-1$.
\\
\fbox{\em $\Delta$ is largest.}
$x$ is determined by its rank in the set
$V_{\delta(x)}=\{z \mid \delta(z)\geq \delta(x)\ \wedge\ |z|=|x|\}$.
Since this set has $\leq 2^{n-\delta(x)}$ elements,
the rank of $x$ has a binary representation of length
$\leq |x|-\delta(x)$.
Add useless zeros ahead to get a word $p$ with length
$|x|-\delta(x)$.
\\
With $p$ we get $|x|-\delta(x)$.
With $|x|-\delta(x)$ and $|x|$ we get $\delta(x)$ and construct
$V_{\delta(x)}$. With $p$ we get the rank of $x$ in this set,
hence we get $x$.
Thus,\\
$K(x\mid|x|) \leq |x|-\delta(x) +c$,
i.e., $\delta(x)<\Delta(x)+c$.
\end{proof}
\end{quote}
The importance of the previous result is the following corollary
which insures that, for words,
incompressibility implies (hence is equivalent to) randomness.
\begin{corollary}[Martin-L\"of, 1965]
Incompressibility passes all statistical tests.
I.e., for all $c$, for all statistical test $(V_m)_m$,
there exists $d$ such that
$$
\forall x\ (x\mbox{ is $c$-incompressible }
\Rightarrow\ x\notin V_{c+d})
$$
\end{corollary}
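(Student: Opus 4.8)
The plan is to reduce the corollary entirely to the universal statistical test produced by Theorem~\ref{thm:test}, namely $\Delta(x)=|x|-K(x\mid|x|)-1$. Since $\Delta$ dominates every statistical test up to an additive constant, it is enough to check that a $c$-incompressible word has $\Delta$-value bounded by a quantity depending only on $c$; the nestedness of the $V_m$'s then immediately ejects $x$ from $V_{c+d}$ for a suitable $d$. The one seam to stitch is that $c$-incompressibility is phrased with plain $K$ whereas $\Delta$ is built from the length-conditional $K(\cdot\mid|\cdot|)$, and this is precisely what the earlier Proposition comparing the two notions of incompressibility supplies (part~2: $c$-incompressible implies length conditional $(2c+d_0)$-incompressible, for an absolute constant $d_0$).

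\textbf{Steps.} First I would fix $c$ and a statistical test $(V_m)_m$ and pass to its functional form $\delta_V$, as in the Remark following Definition~\ref{def:test}, so that $x\in V_m\iff\delta_V(x)\ge m$. Theorem~\ref{thm:test} then yields a constant $c_V$, depending only on the test, with $\delta_V(x)<\Delta(x)+c_V$ for every $x$. Second, assuming $x$ is $c$-incompressible, i.e.\ $K(x)\ge|x|-c$, the comparison Proposition gives $K(x\mid|x|)\ge|x|-2c-d_0$, hence
$$\Delta(x)=|x|-K(x\mid|x|)-1\le 2c+d_0-1.$$
Third, combining the two displays, $\delta_V(x)<\Delta(x)+c_V\le 2c+d_0-1+c_V$. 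Putting $d=c+d_0+c_V$ we obtain $\delta_V(x)<c+d$, that is $x\notin V_{c+d}$; and $d$ depends only on $c$ and on the test (through $c_V$), which is exactly what is claimed. A variant that sidesteps the functional reformulation works directly with the inclusions $V_{m+c_V}\subseteq U_m$ of Theorem~\ref{thm:test} together with the contrapositive ``$\Delta(x)<m\Rightarrow x\notin V_{m+c_V}$'', but the arithmetic is identical.

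\textbf{Main obstacle.} Honestly, there is no deep obstacle left once Theorem~\ref{thm:test} and the plain-versus-length-conditional comparison are available; the proof is a two-line bookkeeping. The only genuine subtleties are (i) the factor-$2$ loss incurred when translating plain incompressibility into length-conditional incompressibility, which is the reason $d$ is permitted to depend on $c$ rather than being an absolute constant, and (ii) keeping the direction of the inequalities and of the inclusions straight, since the $V_m$ decrease with $m$ and one concludes $x\notin V_{c+d}$ from the upper bound on $\delta_V(x)$ via the contrapositive of the domination inequality.
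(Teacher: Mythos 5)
Your proof is correct and follows essentially the same route as the paper: reduce everything to the universal test $\Delta(x)=|x|-K(x\mid|x|)-1$ of Theorem~\ref{thm:test} and then shift indices to eject $x$ from $V_{c+d}$. In fact you are slightly more careful than the paper's own proof, which opens with ``let $x$ be length conditional $c$-incompressible'' and thus silently elides the passage from plain to length-conditional incompressibility that you handle explicitly via part~2 of the comparison proposition --- the factor-$2$ loss there is indeed why your $d$ must depend on $c$ as well as on the test, which the statement permits.
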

\begin{quote}
\begin{proof}
Let $x$ be length conditional $c$-incompressible.
This means that $K(x\mid |x|)\geq |x|-c$.
Hence $\Delta(x)=|x|-K(x\mid|x|)-1\leq c-1$, which means that
$x\notin U_c$.
\\
Let now $(V_m)_m$ be a statistical test.
Then there is some $d$ such that $V_{m+d}\subseteq U_m$
Therefore $x\notin V_{c+d}$.
\end{proof}
\end{quote}
\begin{remark}
Observe that incompressibility is a {\em bottom-up} notion:
we look at the value of $K(x)$ (or that of $K(x \mid |x|)$).
\\
On the opposite, passing statistical tests is a {\em top-down}
notion.
To pass all statistical tests amounts to an inclusion
in an intersection: namely, an inclusion in
$$
\bigcap_{(V_m)_m}\ \bigcup_c\ V_{m+c}
$$
\end{remark}
%
%
%
\subsection{Shortest programs are random finite strings}
\label{ss:bestprograms}
%
Observe that optimal programs to compute any object
are examples of random strings.
More precisely, the following result holds.
\begin{proposition}
Let $\+O$ be an elementary set (cf. Definition \ref{def:elementary})
and $U:\words\to \words $, $V:\words\to\+O$ be some fixed optimal
functions.
There exists a constant $c$ such that, for all $a\in\+O$,
for all $p\in\words$,
if $V(p)=a$ and $K_V(a)=|p|$ then $K_U(p)\geq |p|-c$.
In other words, for any $a\in\+O$,
if $p$ is a shortest program which outputs $a$ then $p$ is $c$-random.
\end{proposition}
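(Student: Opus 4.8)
The plan is to exploit the optimality of $V$ applied to the composite function $V\circ U$. First I would set $W=V\circ U:\words\to\+O$. Since $U$ and $V$ are partial computable, so is $W$, and its codomain is the elementary set $\+O$, so $W\in PC^\+O$ and the Invariance Theorem \ref{thm:invariance} applies to it: there is a constant $c$ with $K_V(y)\le K_W(y)+c$ for every $y\in\+O$. I claim this very $c$ witnesses the proposition.

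Indeed, fix $a\in\+O$ and $p\in\words$ with $V(p)=a$ and $K_V(a)=|p|$, so that $p$ is a shortest $V$-program for $a$. Let $q$ be a shortest $U$-program for $p$, i.e. $U(q)=p$ and $|q|=K_U(p)$; such a $q$ exists because $U$ is optimal, hence $K_U(p)\le|p|+O(1)<\infty$ by Proposition \ref{p:bound} (and anyway, if $p$ were not in the range of $U$ the desired inequality $K_U(p)\ge|p|-c$ would hold vacuously). Then $W(q)=V(U(q))=V(p)=a$, so by definition of $K_W$ we get $K_W(a)\le|q|=K_U(p)$. Chaining the two inequalities yields
$$|p|=K_V(a)\le K_W(a)+c\le K_U(p)+c,$$
that is, $K_U(p)\ge|p|-c$, which is exactly the assertion that a shortest $V$-program $p$ for $a$ is $c$-random for $U$.

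There is essentially no hard step here: the only points requiring a moment's care are that the composition of two partial computable functions is again partial computable, so that the Invariance Theorem is legitimately applied to $W=V\circ U$, and the conceptual observation that it is precisely the \emph{minimality} of $p$ that does the work — any genuine compression $q$ of $p$ would, pushed through $W$, compress $a$ below $K_V(a)$, contradicting $K_V(a)=|p|$. The constant $c$ is uniform in $a$ and $p$ exactly because it is the single invariance constant relating $K_V$ and $K_{V\circ U}$, and no appeal to the parameter theorem or to Berry-type counting is needed.
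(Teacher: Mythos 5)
Your proof is correct and is essentially identical to the paper's: both introduce the composite $V\circ U$, apply the invariance theorem to get $K_V\le K_{V\circ U}+c$, and observe that any $U$-program $q$ for $p$ is a $(V\circ U)$-program for $a$, forcing $|q|\ge K_V(a)-c=|p|-c$. The extra remark about the vacuous case where $p$ is not in the range of $U$ is a harmless refinement the paper leaves implicit.
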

\begin{proof}
Consider the function $V\circ U: \words\to\+O$.
Using the invariance theorem, let $c$ be such that
$K_V \leq K_{V\circ U} +c$.
Then, for every $q\in\words$,
\begin{eqnarray*}
U(q)=p &\Rightarrow& V\circ U(q)=a\\
&\Rightarrow& |q| \geq K_{V\circ U}(a) \geq K_V(a)-c = |p|-c
\end{eqnarray*}
Which proves that $K_U(p)\geq |p|-c$.
\end{proof}
%
%
%
\subsection{Random finite strings and
            lower bounds for computational complexity}
\label{ss:lowerbounds}
%
Random finite strings (or rather $c$-incompressible strings)
have been extensively used to prove lower bounds for computational
complexity, cf. the pioneering paper \cite{paul79}
by Wolfgang Paul, 1979,
(see also an account of the proof in our survey paper
\cite{ferbusgrigoBullEATCS2001})
and the work by Li \& Vitanyi, \cite{livitanyi}.
The key idea is that a random string can be used as a worst possible
input.
%
%
%
%
\section{Formalization of randomness: infinite objects}
%
We shall stick to infinite sequences of zeros and ones:
$\{0,1\}^\mathbb{N}$.
%
%
\subsection{Martin-L\"of top-down approach
with topology and computability}
%
\subsubsection{The naive idea badly fails}
%
The naive idea of a random element of $\cantor$ is that of a
sequence $\alpha$ which is in no set of measure $0$.
Alas, $\alpha $ is always in the singleton set $\{\alpha\}$
which has measure $0$ !
%
\subsubsection{Martin-L\"of's solution: effectivize}
\label{sss:MLrandom}
%
Martin-L\"of's solution to the above problem is to effectivize,
i.e., to consider the sole effective measure zero sets.
\\
This approach is, in fact, an extension to infinite sequences
of the one Martin-L\"of developed for finite objects,
cf. \S\ref{ss:tests}.
\medskip\\
Let us develop a series of observations which leads to
Martin-L\"of's precise solution, i.e., what does mean effective
for measure $0$ sets.
\\
To prove a probability law amounts to prove that
a certain set $X$ of sequences has probability one.
To do this, one has to prove that the complement set
$Y=\cantor \setminus X$ has probability zero.
Now, in order to prove that $Y \subseteq \cantor$ has
probability zero, basic measure theory tells us that
one has to include $Y$ in open sets with
arbitrarily small probability.
I.e., for each $n\in \mathbb{N}$ one must find an open set
$U_{n}\supseteq Y$
which has probability $\leq \frac{1}{2^n}$.
\\
If things were on the real line ${\mathbb{R}}$
we would say that $U_{n}$ is a countable union of
intervals with rational endpoints.
\\
Here, in $\cantor$, $U_{n}$ is a
countable union of sets of the form
$u\cantor$ where $u$ is a finite binary
string and $u\cantor$ is the set of infinite sequences
which extend~$u$.
\\
In order to prove that $Y$ has probability zero,
for each $n\in \mathbb{N}$ one must find a family
$(u_{n,m})_{m\in\mathbb{N}}$ such that
$Y\subseteq \bigcup_{m} u_{n,m}\cantor$
and $Proba(\bigcup_{m} u_{n,m}\cantor)\leq \frac{1}{2^n}$
for each $n\in \mathbb{N}$.
\\
Now, Martin-L\"of makes a crucial observation:
mathematical probability laws which we consider
necessarily have some effective character.
And this effectiveness should reflect in the proof
as follows:
{\em the doubly indexed sequence
$(u_{n,m})_{{n,m\in\mathbb{N}}}$ is computable.}
\medskip\\
Thus, the set $\bigcup_{m} u_{n,m}\cantor$ is a
{\em computably enumerable open set} and
$\bigcap_{n} \bigcup_{m} u_{n,m}\cantor$
is a countable intersection of a
{\em computably enumerable family of open sets}.
\medskip\\
Now comes the essential theorem, which is completely analogous
to Theorem~\ref{thm:test}.
\begin{definition}[Martin-L\"of, \cite{martinlof66}, 1966]
\label{def:null}
A constructively null $G_\delta$ set is any set of the form
$$
\bigcap_{n} \bigcup_{m} u_{n,m}\cantor
$$
where $Proba(\bigcup_{m} u_{n,m}\cantor)\leq \frac{1}{2^n}$
(which implies that the intersection set has probability zero)
and the sequence $u_{n,m}$ is computably enumerable.
\end{definition}
\begin{theorem}[Martin-L\"of, \cite{martinlof66}, 1966]
There exist a largest constructively null $G_\delta$ set
\end{theorem}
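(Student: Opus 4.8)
The plan is to mimic the proof of Theorem~\ref{thm:test}, but now uniformly over all constructively null $G_\delta$ sets. First I would fix an effective enumeration $(\varphi_e)_{e\in\words}$ of all partial computable functions $\varphi_e:\mathbb{N}\times\mathbb{N}\to\words$; each such $\varphi_e$ is a candidate for producing the doubly-indexed sequence $(u_{n,m})_{n,m}$ appearing in Definition~\ref{def:null}. The difficulty is that not every $\varphi_e$ actually yields a constructively null $G_\delta$ set: the measure condition $\mathrm{Proba}(\bigcup_m \varphi_e(n,m)\cantor)\leq 2^{-n}$ may fail. The standard trick is to \emph{enforce} the bound by pruning. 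For each $e$, define a new enumeration: run $\varphi_e$ and, while enumerating the strings intended for level $n$, keep a running tally of the measure $\sum_m 2^{-|\varphi_e(n,m)|}$ actually contributed so far (using that $u\cantor$ has measure $2^{-|u|}$, and discarding any string already covered so that the sum really is the measure of the union); whenever adding the next string would push this tally above $2^{-n}$, simply omit that string and all subsequent ones at level $n$. This gives a total effective operation $e\mapsto$ (a constructively null $G_\delta$ set $S_e$), and if $\varphi_e$ already satisfied the measure bounds then $S_e$ is exactly the set it defined, so every constructively null $G_\delta$ set occurs among the $S_e$.

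Next I would diagonalize to build a single universal set containing all the $S_e$. Write $S_e=\bigcap_n\bigcup_m v^e_{n,m}\cantor$ with the pruned, effectively enumerable array $(v^e_{n,m})$ and $\mathrm{Proba}(\bigcup_m v^e_{n,m}\cantor)\leq 2^{-n}$. For each level $k$, set
$$
U_k=\bigcup_{e\in\words}\ \bigcup_m v^e_{\,k+|e|+1,\ m}\cantor .
$$
Then I would check the two required properties. For the measure bound,
$$
\mathrm{Proba}(U_k)\ \leq\ \sum_{e\in\words} 2^{-(k+|e|+1)}
\ =\ 2^{-k}\sum_{e\in\words}2^{-|e|-1}\ \leq\ 2^{-k},
$$
using $\sum_{e\in\words}2^{-|e|}=\sum_{\ell\ge0}2^\ell 2^{-\ell}\cdot\tfrac12$… more carefully, $\sum_{|e|=\ell}2^{-|e|-1}=2^\ell\cdot 2^{-\ell-1}=\tfrac12$, which diverges, so instead I would use a prefix-free set of indices: re-index the $\varphi_e$ by self-delimited codes $\widehat{e}$ with $\sum 2^{-|\widehat e|}\le 1$ (using the pairing of \S\ref{sss:codemany}, Proposition~\ref{p:code} or \ref{p:codeloglog}), and replace $k+|e|+1$ by $k+|\widehat e|$; then $\mathrm{Proba}(U_k)\le 2^{-k}\sum 2^{-|\widehat e|}\le 2^{-k}$. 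For effectiveness, the array $(e,m)\mapsto v^e_{k+|\widehat e|,m}$ is computably enumerable uniformly in $k$ because the pruning construction is uniform, so $(U_k)_k$ is a computably enumerable sequence of open sets and $\bigcap_k U_k$ is a constructively null $G_\delta$ set.

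Finally, universality: given any constructively null $G_\delta$ set $S$, pick $e$ with $S=S_e$; then for every $k$ the level-$(k+|\widehat e|)$ approximation $\bigcup_m v^e_{k+|\widehat e|,m}\cantor$ of $S_e$ is one of the pieces thrown into $U_k$, so $S_e\subseteq\bigcup_m v^e_{k+|\widehat e|,m}\cantor\subseteq U_k$ for all $k$, whence $S=S_e\subseteq\bigcap_k U_k$. Thus $\bigcap_k U_k$ is the largest (indeed, it contains every) constructively null $G_\delta$ set, which is the claim. The main obstacle, and the only place needing genuine care, is the summability issue in the measure estimate: a naive union over all indices $e\in\words$ with weight $2^{-|e|}$ diverges, so one must either route the indexing through a prefix-free code or, equivalently, absorb a further shift of $2^{-(\text{length of a self-delimiting code for }e)}$ — a routine but essential bookkeeping step entirely analogous to the prefix-free packing already used in \S\ref{sss:closer} and \S\ref{sss:codemany}. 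Everything else is a direct transcription of the finite-case argument of Theorem~\ref{thm:test} to the topological setting.
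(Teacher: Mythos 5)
Your argument is correct, and it is worth noting that the paper actually states this theorem without proof, so there is nothing to match it against line by line; what you have written is essentially Martin-L\"of's original universal-test construction, and it is sound. The two points that genuinely need care are both handled: (i) the pruning step that turns an arbitrary partial computable array into one provably satisfying the measure bounds of Definition~\ref{def:null} while leaving genuine tests unchanged (your ``running tally'' is best phrased as: accept a new cylinder iff the measure of the union of all cylinders accepted so far at level $n$ together with the new one --- an exactly computable dyadic rational, since it is a finite union of cylinders --- stays $\leq 2^{-n}$; this sidesteps the inclusion--exclusion issues your parenthetical gestures at); and (ii) the summability of the weights in the diagonal union, which you correctly identify as failing for the naive weight $2^{-|e|}$ over $e\in\words$ and repair either by prefix-free self-delimiting codes or, more simply, by indexing the tests by $e\in\mathbb{N}$ and using level $k+e+1$. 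The resulting $\bigcap_k U_k$ is constructively null $G_\delta$ and contains every such set, as required. It is instructive to contrast your route with the one the paper does spell out for the finite-string analogue, Theorem~\ref{thm:test}: there the largest test is not obtained by enumerating and merging all tests but is exhibited in closed form via Kolmogorov complexity, $\Delta(x)=|x|-K(x\mid|x|)-1$. Your enumeration-plus-diagonalization argument is more elementary (it needs no complexity machinery, only the enumeration theorem and a measure estimate), while the complexity route buys an explicit description of the universal object and directly yields the equivalence between incompressibility and passing all tests; the corresponding closed-form description in the infinite case is precisely the content of the Levin--Schnorr--Chaitin characterizations in Theorem~\ref{thm:SKmH}.
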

\noindent
Let us insist that the theorem says {\em largest}, up to nothing, 
really largest relative to set inclusion.
\begin{definition}[Martin-L\"of, \cite{martinlof66}, 1966] 
A sequence $\alpha\in\cantor$ is Martin-L\"of random
if it belongs to no constructively null $G_\delta$ set
(i.e., if it does not belongs to the largest one).
\end{definition}
\noindent
In particular, the family of random sequences, being the complement
of a constructively null $G_\delta$ set, has probability $1$.
And the observation above Definition~\ref{def:null} insures that
Martin-L\"of random sequences satisfy all usual probabilities laws.
Notice that {\em the last statement can be seen as an improvement of
all usual probabilities laws: not only such laws are true with
probability $1$ but they are true for all sequences in the measure
$1$ set of Martin-L\"of random sequences.}
%
%
%
\subsection{The bottom-up approach}
%
\subsubsection{The naive idea badly fails}
Another natural naive idea to get randomness for sequences is to
extend randomness from finite objects to infinite ones.
The obvious proposal is to consider sequences
$\alpha\in\cantor$ such that, for some $c$,
\begin{equation}
\forall n\quad K(\alpha\segment n)\geq n-c
\end{equation}
However, Martin-L\"of proved that there is no such sequence.
\begin{theorem}
[Large oscillations (Martin-L\"of, \cite{martinlof71}, 1971)]
\label{thm:osc}
If $f:\mathbb{N}\to\mathbb{N}$ is computable
and $\sum_{n\in\mathbb{N}}2^{-f(n)}=+\infty$
then, for every $\alpha\in\cantor$, there are infinitely many $k$
such that $K(\alpha\segment k) \leq k-f(k) -O(1)$.
\end{theorem}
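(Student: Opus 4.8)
The plan is to build, by hand, one partial computable function $M:\words\to\words$ together with, for each $\alpha\in\cantor$, infinitely many lengths $k$ at which $\alpha\segment k$ is output by an $M$-program of length at most $k-f(k)+O(1)$; the invariance theorem (Theorem~\ref{thm:invariance}) then turns this into $K(\alpha\segment k)\le K_M(\alpha\segment k)+O(1)\le k-f(k)+O(1)$ for those $k$. The guiding intuition is the counting behind Propositions~\ref{p:incompressible} and~\ref{p:rank}: at length $n$ we can afford short codes only for a $2^{-f(n)}$-fraction of the $2^n$ strings, but accumulated over a long enough block of consecutive lengths these fractions add up — because $\sum_n2^{-f(n)}=+\infty$ — to all of Cantor space, so every $\alpha$ is forced to pass through a short-coded prefix inside every block.

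First I would set up the blocks. Since a divergent series stays divergent after deleting finitely many terms, one can compute a strictly increasing sequence $0=m_0<m_1<m_2<\cdots$ with $\sum_{n=m_{j-1}}^{m_j-1}2^{-f(n)}\ge m_{j-1}+3$ for every $j\ge1$; write $B_j=[m_{j-1},m_j)$. We may also discard the lengths with $f(n)>n$ (they carry total weight $\le1$). Next, and this is the technical heart, I would \emph{thin} the set of usable lengths so that $n\mapsto n-f(n)$ becomes injective on it: scan $n=0,1,2,\dots$ and keep $n$ unless $n-f(n)$ has already occurred among the kept lengths; let $S_j$ be the kept lengths lying in $B_j$. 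Thinning is cheap thanks to a geometric estimate: if a kept $n'$ is later responsible for discarding some $n>n'$, with common value $v=n-f(n)=n'-f(n')$, then $f(n)=n-v$, so $2^{-f(n)}=2^{v-n}$, whence all lengths discarded ``because of'' $n'$ have total weight $\sum_{n>n'}2^{v-n}=2^{-f(n')}$; moreover a kept $n'$ in an \emph{earlier} block has $v<m_{j-1}$, so the weight it kills \emph{inside} $B_j$ is $\le\sum_{n\ge m_{j-1}}2^{v-n}\le1$. Hence the weight lost in $B_j$ is at most $(\sum_{n\in S_j}2^{-f(n)})+m_{j-1}+1$, and the slack built into the $m_j$'s gives $\sum_{n\in S_j}2^{-f(n)}\ge1$.

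Then I would carve out, inside each $B_j$, a complete prefix-free cover of $\cantor$ using few strings per level: run through the levels of $S_j$ in increasing order and at level $n$ take the lexicographically least strings of length $n$ extending no string already taken in $B_j$, just enough to cover a further $2^{-f(n)}$ of $\cantor$ (all that remains, at the last level of $S_j$). Because $\sum_{n\in S_j}2^{-f(n)}\ge1$ the resulting finite set $A_j$ is an antichain with $\sum_{v\in A_j}2^{-|v|}=1$, so every $\alpha$ has exactly one prefix in $A_j$, and at each level at most $2^{\,n-f(n)}$ strings are used. Now define $M$ by dovetailing the effective construction of the $m_j$'s, the $S_j$'s and the $A_j$'s: whenever a string $v\in A_j$ of length $n$ is enumerated, reserve for it the code equal to its index among the $\le2^{\,n-f(n)}$ strings of $A_j$ of length $n$, written in binary and padded with leading zeros to length exactly $n-f(n)$, and let $M$ output $v$ on that code. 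Since $n\mapsto n-f(n)$ is injective on $S=\bigcup_jS_j$ and the $B_j$ are disjoint, no two strings get the same code, so $M$ is a well-defined partial computable function with $K_M(v)=n-f(n)$ for every $v\in A_j$ of length $n$. Finally, given $\alpha$, for each $j$ let $k_j\in S_j$ be the length of its unique prefix in $A_j$; then $k_j\ge m_{j-1}\to\infty$ and $K(\alpha\segment k_j)\le K_M(\alpha\segment k_j)+O(1)=k_j-f(k_j)+O(1)$, which is the conclusion for the infinitely many lengths $k=k_1,k_2,\dots$.

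The main obstacle, and the reason the construction is not merely ``pack the indices into a universal machine'', is keeping the additive error a genuine $O(1)$: the decoder must recover the output length $n$ from the program it is handed, yet it cannot be told $n$ explicitly, since that costs $\Theta(\log n)$ bits and would wipe out the gain $f(n)$ in exactly the borderline case $f(n)\approx\log n$ where $\sum_n2^{-f(n)}$ only just diverges. Forcing $|p|=n-f(n)$ to pin down $n$ is precisely what the thinning step buys, and the two delicate points there are the geometric-series bound on the weight sacrificed to thinning and the choice of block lengths large enough that — after thinning, and despite possible clashes of the value $n-f(n)$ between an earlier and a later block — each block still carries covering weight at least $1$. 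Everything else (completeness of the cover, the per-level bound $\le2^{\,n-f(n)}$, and the computability of all this data from $f$) is the same kind of routine verification already met for Martin-L\"of tests in \S\ref{ss:tests}.
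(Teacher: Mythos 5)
Your proof is correct, but it follows a genuinely different --- and much more general --- route than the paper's. The paper proves only the special case $f(n)=\log n$, by a self-referential trick: $\alpha\segment k$ prefixed by $1$ is read as the binary representation of an integer $n$, so $\alpha\segment n$ can be reconstructed from the bare word $\alpha\segment [k+1,n]$ (whose length $n-k$ already determines $k$, hence $n$), giving $K(\alpha\segment n)\le n-\log n+O(1)$ for infinitely many $n$. That argument is two lines long but does not extend to an arbitrary computable $f$ with $\sum_n 2^{-f(n)}=+\infty$. Yours is the measure-theoretic argument going back to Martin-L\"of: cut $\mathbb{N}$ into blocks on which $\sum_n 2^{-f(n)}$ accumulates enough weight, build in each block a maximal prefix-free antichain of $\cantor$ using at most $2^{n-f(n)}$ strings at level $n$, and use the fact that every $\alpha$ has exactly one prefix in each such antichain. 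The place where you go beyond the usual length-conditional version of the theorem is the thinning step making $n\mapsto n-f(n)$ injective on the levels actually used, so that the decoder recovers $n$ from the program length alone and no $\Theta(\log n)$ header is needed --- exactly the issue you correctly identify as the crux; your geometric-series accounting of the weight sacrificed to thinning ($\le 2^{-f(n')}$ per kept $n'$, and $\le 1$ inside a later block for each kept $n'$ of an earlier block, of which there are at most $m_{j-1}$), together with the slack $m_{j-1}+3$ built into the blocks, is exactly what is required, and the resulting inequality $\sum_{n\in S_j}2^{-f(n)}\ge 1$ does follow. The discrepancy between the ``$-O(1)$'' of the statement and the ``$+O(1)$'' you obtain is immaterial (replace $f$ by $f+c$, which does not affect divergence; the paper's own proof also concludes with $+O(1)$). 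In short: the paper buys brevity at the price of generality, while your argument proves the theorem as actually stated.
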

\begin{quote}
\begin{proof}
Let us do the proof in the case $f(n)=\log n$ which is quite limpid
(recall that the harmonic series $\frac{1}{n}=2^{-\log n}$
has infinite sum).
\\
Let $k$ be any integer.
The word $\alpha\segment k$ prefixed with $1$
is the binary representation of an integer $n$
(we put $1$ ahead of $\alpha\segment k$ in order to avoid
a first block of non significative zeros).
We claim that $\alpha\segment n$ can be recovered from
$\alpha\segment [k+1,n]$ only. In fact,
\begin{itemize}
\item
$n-k$ is the length of $\alpha\segment [k+1,n]$,
\item
$k=\lfloor\log n\rfloor+1
      =\lfloor\log(n-k)\rfloor+1+\varepsilon$\
(where $\varepsilon\in\{0,1\}$)
is known from $n-k$ and $\varepsilon$,
\item
$n=(n-k)+k$.
\item
$\alpha\segment k$ is the binary representation of $n$.
\end{itemize}
The above analysis describes a computable map
$f:\{0,1\}^*\times\{0,1\}\to\{0,1\}^*$
such that
$\alpha\segment n=f(\alpha\segment [k+1,n],\varepsilon)$.
Applying Proposition~\ref{p:bound}, point 3, we get
$$
K(\alpha\segment n) \leq K(\alpha\segment [k+1,n]) +O(1)
\leq n-k+O(1) = n-\log(n)+O(1)
$$
\end{proof}
\end{quote}
%
%
%
%
\subsubsection{Miller \& Yu's theorem}
%
It took about forty years to get a characterization of randomness
via Kolmogorov complexity which completes Theorem \ref{thm:osc}
in a very pleasant and natural way.
\begin{theorem}[Miller \& Yu, \cite{milleryu}, 2008]
\label{thm:milleryu}
The following conditions are equivalent:
\begin{enumerate}
\item[i.\quad]
The sequence $\alpha\in\cantor$ is Martin-L\"of random
\item[ii.\quad]
$\exists c\quad\forall k\quad
K(\alpha\segment k) \geq k - f(k) -c$
for every total computable function
$f:\mathbb{N}\to\mathbb{N}$ satisfying
$\sum_{n\in\mathbb{N}}2^{-f(n)}<+\infty$
\item[iii.\quad]
$\exists c\quad\forall k\quad
K(\alpha\segment k) \geq k - H(k) -c$
\end{enumerate}
Moreover, there exists a particular total computable 
function $g:\mathbb{N}\to\mathbb{N}$ satisfying 
$\sum_{n\in\mathbb{N}}2^{-g(n)}<+\infty$ such that
one can add a fourth equivalent condition:
\begin{enumerate}
\item[iv.\quad]
$\exists c\quad\forall k\quad
K(\alpha\segment k) \geq k - g(k) -c$
\end{enumerate}
\end{theorem}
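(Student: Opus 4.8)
The plan is to prove the cycle of implications (i) $\Rightarrow$ (iii) $\Rightarrow$ (ii) $\Rightarrow$ (iv) $\Rightarrow$ (i), using the last statement (the existence of a suitable $g$) along the way. The implications (iii) $\Rightarrow$ (ii) and (iv) $\Rightarrow$ (iii)-flavoured steps are really just monotonicity observations once the right $g$ is in hand: if $f$ is total computable with $\sum 2^{-f(n)}<+\infty$ then by the Kraft--Chaitin inequality (cf. \S\ref{ss:self}) there is a prefix-free computable family $(u_n)$ with $|u_n|=f(n)+O(1)$, so $H(n)\leq f(n)+O(1)$; hence a lower bound $K(\alpha\!\upharpoonright\! k)\geq k-H(k)-c$ immediately yields $K(\alpha\!\upharpoonright\! k)\geq k-f(k)-c'$, giving (iii) $\Rightarrow$ (ii). And (ii) applied to $f=g$ gives (iv) for free once we exhibit a computable $g$ with $\sum 2^{-g(n)}<+\infty$ — a natural candidate is $g(n)=2\log n$ (or $\log n+2\log\log n$ to be closer to $H$), whose series converges; one then still has to check (iv) $\Rightarrow$ (i), so $g$ cannot be chosen completely freely. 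The nontrivial content is concentrated in (i) $\Rightarrow$ (iii) and (iv) $\Rightarrow$ (i).

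For (i) $\Rightarrow$ (iii), I would argue by contraposition: suppose that for every $c$ there is $k$ with $K(\alpha\!\upharpoonright\! k)<k-H(k)-c$, and build a constructively null $G_\delta$ set containing $\alpha$. The idea is that the set of strings $x$ of length $k$ with $K(x)<k-H(k)-c$ is computably enumerable (since $K$ is computable from above, cf. \S\ref{ss:Kabove}), and — this is the crucial counting point — its measure is small, summably so in $c$. Concretely, one wants $\sum_k \sharp\{x\in\{0,1\}^k : K(x)<k-H(k)-c\}\cdot 2^{-k}\leq 2^{-c}\cdot O(1)$. The bound $\sharp\{x : K(x)<m\}<2^m$ gives $2^{-H(k)-c}$ for the $k$-th term, and $\sum_k 2^{-H(k)}<+\infty$ (indeed $\sum_k 2^{-H(k)}\leq 1$ up to a constant, since $(2^{-H(k)})_k$ is summable by the Coding Theorem \ref{thm:coding} applied to any c.e. summable sequence, or directly from Kraft for the prefix-free optimal machine). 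So $U_c=\bigcup_k\{x\cantor : |x|=k,\ K(x)<k-H(k)-c\}$ has measure $O(2^{-c})$, is uniformly c.e., and the $U_c$ are essentially nested after a constant shift; hence $\bigcap_c U_c$ is (contained in) a constructively null $G_\delta$ set, and $\alpha$ belongs to it. This contradicts (i). The delicate point here is handling the fact that $H$ is only computable from above, not computable: one enumerates pairs $(x,k)$ together with current over-estimates of both $K(x)$ and $H(k)$, and shows the measure bound survives; this is where the summability $\sum_k 2^{-H(k)}<+\infty$ must be used with care to keep the total measure finite uniformly.

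For (iv) $\Rightarrow$ (i), again contrapositive: assume $\alpha$ is in some constructively null $G_\delta$ set $\bigcap_n\bigcup_m u_{n,m}\cantor$ with $Proba(\bigcup_m u_{n,m}\cantor)\leq 2^{-n}$. The classical Levin--Schnorr-style argument shows that then $K(\alpha\!\upharpoonright\! k)\leq k-n+O(1)$ for suitable long prefixes at each level $n$ — more precisely, covering the level-$n$ open set by a prefix-free antichain of total measure $\leq 2^{-n}$ and applying Kraft--Chaitin, each generator $u$ gets a description of length $|u|-n+O(1)$, so infinitely many prefixes $\alpha\!\upharpoonright\! k$ satisfy $K(\alpha\!\upharpoonright\! k)\leq k-n+O(1)$. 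To defeat condition (iv) with the specific function $g$, one needs the drop $n$ to outpace $g(k)$ infinitely often, which forces a careful choice of $g$: it must grow slowly enough (slower than the levels one can reach) yet still have $\sum 2^{-g(n)}<+\infty$. This matching — choosing $g$ so that it simultaneously has a convergent series and is dominated infinitely often along the covering structure of an arbitrary constructively null set — is, I expect, the main obstacle, and presumably the reason the theorem asserts the existence of one particular $g$ rather than claiming (iv) for all convergent $f$. I would handle it by taking $g$ of the form $g(k)=\lceil\log k\rceil+2\lceil\log\log k\rceil$ (or whatever precise form makes $\sum 2^{-g(k)}$ converge with room to spare), reindexing the levels $n$ of the null set against the length $k$ so that reaching level $n$ costs prefix length roughly $2^n$, and checking that then $n\gg g(k)$ infinitely often. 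Once (iv) is contradicted, the cycle closes.
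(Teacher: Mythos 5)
Your reductions among the four conditions are well chosen, and the parts you actually carry out are sound: the cycle (i)$\Rightarrow$(iii)$\Rightarrow$(ii)$\Rightarrow$(iv)$\Rightarrow$(i) is a correct decomposition, the derivation of $H(n)\leq f(n)+O(1)$ from Kraft--Chaitin gives (iii)$\Rightarrow$(ii), and your proof of (i)$\Rightarrow$(iii) (G\'acs's direction) is essentially complete: the sets $U_c=\bigcup\{x\cantor : K(x)<|x|-H(|x|)-c\}$ are uniformly c.e.\ (both $K$ and $H$ are approximable from above, so the inequality compares an upper approximation with a lower one and is a c.e.\ condition) and have measure at most $2^{-c}\sum_k 2^{-H(k)}\leq 2^{-c}$ by the counting bound and Kraft. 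Note that the paper states the theorem without proof, referring to \cite{milleryu}, \cite{bms08} and \cite{gacs80}, so there is no in-text argument to compare with; the assessment below is on the merits.

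The genuine gap is in (iv)$\Rightarrow$(i), which is where all the new content of Miller--Yu lives; your own diagnosis of the obstacle is accurate, but the proposed fix does not work. From a constructively null $G_\delta$ set containing $\alpha$, your Kraft--Chaitin argument yields, for each $n$, some prefix $\alpha\segment k_n$ (a generator of a level of the test containing $\alpha$) with $K(\alpha\segment k_n)\leq k_n-n+O(1)$ (and already here you must request a saving of only $n$ bits at level $2n$, since saving $n$ bits at level $n$ contributes Kraft weight $2^n\mu(U_n)\approx 1$ per level, which is not summable over $n$). To refute (iv) you then need $n>g(k_n)+c$; but the lengths $k_n$ of the generators of an arbitrary test are completely uncontrolled, while $g$ is necessarily unbounded, so nothing forces this. ``Reindexing the levels so that level $n$ costs prefix length roughly $2^n$'' is not something you can impose on a given null set. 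The missing idea (this is the heart of \cite{bms08}) is to modify the test rather than the indexing: replace each generator $u$ of the level-$n$ open set by \emph{all of its extensions} of some larger length $\ell$ chosen by the construction. This preserves the open set and its measure, still covers $\alpha$ (now via $\alpha\segment\ell$), and puts the lengths at which the complexity deficiency is certified under the prover's control, so that the total description weight assigned to each length $k$ is a prespecified computable quantity; the function $g$ is then read off as a genuinely \emph{computable} function with $\sum_k 2^{-g(k)}<+\infty$. Without some such device, (iv)$\Rightarrow$(i) --- and hence the equivalence --- is not established.
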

\noindent
Recently, an elementary proof of this theorem
was given by Bienvenu, Merkle \& Shen in
\cite{bms08}, 2008.
Equivalence $i\Leftrightarrow iii$ is due to
G\'acs, \cite{gacs80}, 1980.
%
%
%
%
\subsubsection{Variants of Kolmogorov complexity and randomness}
\label{sss:SKmH}
Bottom-up characterization of random sequences have been obtained
using Levin monotone complexity,
Schnorr process complexity and prefix complexity
(cf. \S\ref{ss:monotone}, \S\ref{ss:Schnorr} and \S\ref{ss:self}).
\begin{theorem}\label{thm:SKmH}
The following conditions are equivalent:
\begin{enumerate}
\item[i.]\quad
The sequence $\alpha\in\cantor$ is Martin-L\"of random
\item[ii.]\quad
$\exists c\quad\forall k\quad
|K^{mon}(\alpha\segment k) -k| \leq c$
\item[iii.]\quad
$\exists c\quad\forall k\quad |S(\alpha\segment k) -k| \leq c$
\item[iv.]\quad
$\exists c\quad\forall k\quad H(\alpha\segment k) \geq k - c$
\end{enumerate}
\end{theorem}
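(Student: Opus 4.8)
The strategy is to prove the cycle of implications $i \Rightarrow iv \Rightarrow iii \Rightarrow ii \Rightarrow i$ (or a suitable variant), exploiting the general inequalities $K^{mon} \leq S \leq H$ up to additive constants, which follow from the fact that each successive complexity is defined by restricting to a smaller class of admissible machines (monotone maps with possibly infinite computations $\supseteq$ monotone prefix-increasing partial computable maps $\supseteq$ partial computable maps with prefix-free domain). These inequalities immediately give $iv \Rightarrow iii \Rightarrow ii$ once one observes that $K^{mon}(\alpha\segment k) \leq k + O(1)$ always holds (feed the ``copy'' machine the string $\alpha\segment k$), so condition $ii$ is genuinely a two-sided bound while $iv$ is only a lower bound; the chain of inequalities transports the lower bound $\geq k - c$ downward through $S$ and $K^{mon}$, and the trivial upper bounds supply the other side.

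The substantive implications are $i \Rightarrow iv$ and $ii \Rightarrow i$. For $i \Rightarrow iv$, I would argue contrapositively in the style of the proof of Theorem \ref{thm:test} and Theorem \ref{thm:milleryu}: if $H(\alpha\segment k) < k - c$ for infinitely many $k$ (for every $c$), then the prefixes $\alpha\segment k$ that are compressible by more than $c$ bits generate, via the Kraft--Chaitin inequality applied to the prefix-free domain of the optimal machine for $H$, an open set $U_c = \bigcup \{ u\cantor : H(u) < |u| - c \}$ of measure $\leq \sum_{u : H(u) < |u|-c} 2^{-|u|} \leq \sum_{u} 2^{-H(u)-c} \leq 2^{-c}$ (using $\sum_u 2^{-H(u)} \leq 1$, which is the summability half of the Coding Theorem \ref{thm:coding}); this is a computably enumerable family, so $\bigcap_c U_c$ is a constructively null $G_\delta$ set containing $\alpha$, contradicting $i$. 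For $ii \Rightarrow i$, I would again go contrapositively: if $\alpha$ lies in the largest constructively null $G_\delta$ set, take for each $n$ the c.e. cover $\bigcup_m u_{n,m}\cantor$ of measure $\leq 2^{-n}$; by enumerating these finite strings one builds a monotone machine that, given a short program encoding the pair $(n, \text{index of the relevant }u_{n,m})$, outputs (an extension of) that $u_{n,m}$, showing $K^{mon}(\alpha\segment k) \leq k - n + O(\log n)$ infinitely often, which defeats the two-sided bound $ii$.

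The main obstacle I anticipate is the careful bookkeeping in the $ii \Rightarrow i$ direction: one must convert a c.e. measure-$\leq 2^{-n}$ union $\bigcup_m u_{n,m}\cantor$ into short monotone-complexity descriptions of prefixes of $\alpha$, and the naive encoding of ``which $u_{n,m}$'' costs too much unless one is careful to charge only $\log$ of the rank of the relevant interval within its level, using the measure bound to control how many intervals of a given length can appear. This is exactly where the monotone (rather than plain) complexity is essential --- a monotone machine can commit to an output prefix and extend it as more of the covering open set is enumerated, so one never pays for the full length $k$, only for $k - n$ plus lower-order terms --- and getting the constants to line up so that $|K^{mon}(\alpha\segment k) - k|$ is genuinely unbounded is the delicate point. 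The analogous argument for $S$ and $H$ then either follows from the inequalities already noted or requires the same construction adapted to prefix-free domains via Kraft--Chaitin; I would state the $K^{mon}$ case in full and remark that the others are obtained by routine modification.
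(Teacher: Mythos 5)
Your two detailed sketches ($i \Rightarrow iv$ via the measure bound $\sum_{u}2^{-H(u)}\leq 1$, and $\neg i \Rightarrow \neg ii$ via a monotone machine built from a Martin-L\"of test) are both sound in outline, and the paper itself offers nothing to compare them with: it states Theorem \ref{thm:SKmH} without proof and merely cites Levin, Schnorr and Chaitin. The problem is the way you assemble the cycle. You claim that the inequalities $K^{mon}\leq S+O(1)\leq H+O(1)$ give $iv \Rightarrow iii \Rightarrow ii$; they give exactly the opposite. A lower bound propagates \emph{up} the chain: $K^{mon}(\alpha\segment k)\geq k-c$ forces $S(\alpha\segment k)\geq k-c-O(1)$ forces $H(\alpha\segment k)\geq k-c-O(1)$, so (together with the trivial upper bounds $K^{mon},S\leq k+O(1)$) the cheap implications are $ii \Rightarrow iii \Rightarrow iv$. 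A bound $H(\alpha\segment k)\geq k-c$ says nothing a priori about $S$ or $K^{mon}$, which sit \emph{below} $H$. Consequently every arrow you actually establish points from $ii$ toward $iv$: you get $ii\Rightarrow i\Rightarrow iv$ and $ii\Rightarrow iii\Rightarrow iv$, i.e.\ only that $ii$ is the strongest and $iv$ the weakest of the four conditions. Nothing returns from $iv$ (or $iii$) to $i$, and nothing gets you from $i$ to $ii$, so the equivalence is not proved.

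The two missing implications are precisely the hard halves of the theorem. For $iv\Rightarrow i$ (the difficult direction of the Levin--Schnorr theorem) you must turn a Martin-L\"of test $(\bigcup_m u_{n,m}\cantor)_n$ into \emph{prefix-free} descriptions: since $\sum_m 2^{-(|u_{n,m}|-n)}\leq 1$, Kraft--Chaitin yields programs of length $|u_{n,m}|-n$ plus the self-delimited cost of $n$, giving $H(\alpha\segment k)\leq k-n+2\log n+O(1)$ for some $k$, for every $n$, which defeats $iv$. For $i\Rightarrow ii$ you must bound the measure of $\bigcup\{u\cantor : K^{mon}(u)<|u|-c\}$ by $2^{-c}$; here the Kraft-sum argument you used for $H$ fails outright because a monotone machine's domain is not prefix-free ($\sum_p 2^{-|p|}$ over its programs diverges), and one needs a genuinely different device, e.g.\ comparing $2^{-K^{mon}}$ with the universal left-c.e.\ semimeasure, or the observation that the cylinders contributed by the programs extending a fixed minimal program form a chain. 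Until these two directions are supplied, the argument proves a one-way chain, not an equivalence.
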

\noindent
Equivalence $i\Leftrightarrow ii$ is due to Levin
(\cite{zvonkin-levin}, 1970).
Equivalence $i\Leftrightarrow iii$ is due to
Schnorr (\cite{schnorr71}, 1971).
Equivalence $i\Leftrightarrow iv$ is due to Schnorr and
Chaitin (\cite{chaitin75}, 1975).
%
%
%
%
\subsection{Randomness: a robust mathematical notion}
%
Besides the top-down definition of Martin-L\"of randomness,
we mentioned above diverse bottom-up characterizations
via properties of the initial segments
with respect to variants of Kolmogorov complexity.
There are other top-down and bottom-up characterizations,
we mention two of them in this \S.
\noindent\\
This variety of characterizations shows that Martin-L\"of randomness
is a robust mathematical notion.
%
\subsubsection{Randomness and martingales}
%
Recall that a martingale is a function $d:\{0,1\}^*\to\mathbb{R}^+$
such that
$$
\forall u\quad d(u)=\frac{d(u0)+d(u1)}{2}
$$
The intuition is that a player tries to predict the bits of a
sequence $\alpha\in\cantor$ and bets some amount of money
on the values of these bits.
If his guess is correct he doubles his stake, else he looses it.
Starting with a positive capital $d(\varepsilon)$
(where $\varepsilon$ is the empty word), $d(\alpha\segment k)$
is his capital after the $k$ first bits of $\alpha$ have been
revealed.
\\
The martingale $d$ wins on $\alpha\in\cantor$ if the capital
of the player tends to $+\infty$.
\\
The martingale $d$ is computably approximable from below if
the left cut of $d(u)$ is computably enumerable, uniformly in $u$
(i.e., $\{(u,q)\in\{0,1\}^*\times\mathbb{Q} \mid q\leq d(u)\}$ is c.e.).
\begin{theorem}[Schnorr, \cite{schnorr71a}, 1971]
A sequence $\alpha\in\cantor$ is Martin-L\"of random
if and only if
no martingale computably approximable from below wins on $\alpha$.
\end{theorem}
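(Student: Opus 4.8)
The plan is to prove both implications by contraposition, recalling (Definition~\ref{def:null}) that $\alpha$ fails to be Martin-L\"of random precisely when it lies in \emph{some} constructively null $G_\delta$ set, not necessarily the largest one. Throughout, $\mu$ denotes the uniform measure on $\cantor$, so $\mu(u\cantor)=2^{-|u|}$, and $\varepsilon$ is the empty word.

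First I would go from a winning martingale to non-randomness. Suppose $d:\words\to\mathbb{R}^+$ is a martingale whose left cut $\{(u,q)\in\words\times\mathbb{Q}\mid q<d(u)\}$ is computably enumerable and which wins on $\alpha$, so in particular $\sup_k d(\alpha\segment k)=+\infty$. For $n\in\mathbb{N}$ set $U_n=\{X\in\cantor\mid\exists k\ d(X\segment k)\geq 2^n\}$; this is the union of the cylinders $u\cantor$ with $d(u)\geq 2^n$, and since $d$ is approximable from below the set of such $u$ is c.e.\ uniformly in $n$, so $(U_n)_n$ is a uniformly c.e., decreasing sequence of open sets. The quantitative ingredient is the martingale maximal inequality (Ville--Kolmogorov): for any prefix-free set $S$ of strings, $\sum_{u\in S}d(u)2^{-|u|}\leq d(\varepsilon)$; applied to the set of $\leq_{pref}$-minimal $u$ with $d(u)\geq 2^n$ this yields $\mu(U_n)\leq d(\varepsilon)\,2^{-n}$. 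Since $d$ wins on $\alpha$ we have $d(\varepsilon)>0$; choosing any integer $N_0\geq d(\varepsilon)$ and putting $V_n=U_{n+N_0}$ gives $\mu(V_n)\leq N_0\,2^{-N_0}2^{-n}\leq 2^{-n}$, while $\bigcap_n V_n=\bigcap_n U_n\ni\alpha$. Hence $\alpha$ belongs to a constructively null $G_\delta$ set and is not Martin-L\"of random.

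For the converse I would build the winning martingale by hand. Let $\alpha\in\bigcap_n V_n$ where $V_n=\bigcup_m u_{n,m}\cantor$ is a uniformly c.e.\ open set with $\mu(V_n)\leq 2^{-n}$. For each $n$ define the density martingale $d_n(u)=2^{|u|}\mu(V_n\cap u\cantor)$: it satisfies $d_n(u)=\tfrac{1}{2}(d_n(u0)+d_n(u1))$, has $d_n(\varepsilon)=\mu(V_n)\leq 2^{-n}$ and $0\leq d_n\leq 1$ everywhere, and it is lower-semicomputable uniformly in $n$ because the measure of a c.e.\ union of cylinders is lower-semicomputable. Put $d=1+\sum_n d_n$. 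The series converges pointwise, $d(u)\leq 1+2^{|u|}\sum_n\mu(V_n)\leq 1+2^{|u|+1}$, so $d$ is a genuine martingale with $d(\varepsilon)\geq 1$, and it is approximable from below as an effective sum of a uniformly lower-semicomputable, pointwise summable family of martingales. Finally, for each $n$ some $u_{n,m}$ is a prefix of $\alpha$, so $d_n(\alpha\segment k)=1$ for every $k\geq|u_{n,m}|$; thus for each $N$ we get $d(\alpha\segment k)\geq N$ for all large $k$, that is $d(\alpha\segment k)\to+\infty$, so $d$ wins on $\alpha$.

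I expect the main obstacle to be the first direction's quantitative step: one must establish the martingale maximal inequality (a short stopping-time argument: for each length $n$, $d(\varepsilon)=\sum_{|v|=n}2^{-n}d(v)$ dominates the contribution of the truncated prefix-free set, then let $n\to\infty$) and then absorb the harmless discrepancy between the bound $d(\varepsilon)2^{-n}$ it produces and the exact bound $2^{-n}$ required of a constructively null $G_\delta$ set, which is exactly what the reindexing by $N_0$ does. Everything else --- the density-martingale construction and the bookkeeping showing that an effective infinite sum of lower-semicomputable martingales is again a lower-semicomputable martingale that is finite everywhere --- should be routine once these objects are written down.
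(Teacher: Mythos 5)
The paper states this theorem without proof, so there is no in-text argument to compare against; judged on its own, your proof is the standard and correct one. The forward direction via the Ville--Kolmogorov maximal inequality $\sum_{u\in S}2^{-|u|}d(u)\leq d(\varepsilon)$ over the prefix-free set of minimal strings where the capital crosses $2^n$, followed by the reindexing to absorb the factor $d(\varepsilon)$, is exactly right; and the converse via the density martingales $d_n(u)=2^{|u|}\mu(V_n\cap u\cantor)$ summed with a constant added is the classical construction, and your verification that $d(\alpha\segment k)\to+\infty$ (not merely $\limsup=+\infty$) correctly matches the paper's definition of ``wins.'' One small effectivity point should be repaired: since $d$ is only computably approximable from below, the set $\{u \mid d(u)\geq 2^n\}$ need not be computably enumerable --- only $\{u \mid d(u)> q\}$ for rational $q$ is, uniformly, as a section of the c.e.\ left cut. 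So define $U_n=\{X\mid \exists k\ d(X\segment k)>2^n\}$ with strict inequality; the maximal inequality gives the same bound $\mu(U_n)\leq d(\varepsilon)2^{-n}$, and $\alpha$ still lies in every $U_n$ because $\sup_k d(\alpha\segment k)=+\infty$. With that one-word fix the argument is complete.
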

%
%
\subsubsection{Randomness and compressors}
%
Recently, Bienvenu \& Merkle obtained quite remarkable
characterizations of random sequences in the vein of
Theorems~\ref{thm:SKmH} and \ref{thm:milleryu}
involving {\em computable} upper bounds of $K$ and $H$.
\begin{definition}
A compressor is any partial computable $\Gamma:\{0,1\}^*\to\{0,1\}^*$
which is one-to-one and has computable domain.
A compressor is said to be prefix-free if its range is prefix-free.
\end{definition}
\begin{proposition}
\mbox{}\\
\mbox{}\\	
1. If $\Gamma$ is a compressor (resp. a prefix-free compressor)
then
\begin{center}
$\exists c\quad\forall x\in\{0,1\}^*\quad K(x) \leq |\Gamma(x)|+c$
\qquad(resp. $H(x) \leq |\Gamma(x)|+c$)
\end{center}
2. For any computable upper bound $F$ of $K$ (resp. of $H$)
there exists a compressor (resp. a prefix-free compressor) $\Gamma$
such that
$$
\exists c\quad\forall x\in\{0,1\}^*\quad |\Gamma(x)| \leq F(x)+c
$$
\end{proposition}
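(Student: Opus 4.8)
\emph{Part~1: a compressor bounds Kolmogorov complexity from above.} The plan is to read a compressor backwards, as a decompressor. Given a compressor $\Gamma:\{0,1\}^*\to\{0,1\}^*$, let $\Delta$ be its inverse partial function: on input $y$ one searches the (computable, hence computably enumerable) domain of $\Gamma$ for the unique $x$ with $\Gamma(x)=y$ and outputs it. Since $\Gamma$ is computable and one-to-one, $\Delta$ is partial computable, hence one of the functions $\varphi:\{0,1\}^*\to\{0,1\}^*$ for which $K_\varphi$ is defined, and clearly $K_\Delta(x)=|\Gamma(x)|$ for every $x$ in the domain of $\Gamma$ (the only ``program'' that $\Delta$ maps to $x$ is the codeword $\Gamma(x)$). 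The Invariance Theorem~\ref{thm:invariance} then gives a constant $c$ with $K\le K_\Delta+c$, i.e.\ $K(x)\le|\Gamma(x)|+c$. If moreover $\Gamma$ is prefix-free, then the domain of $\Delta$, which is the range of $\Gamma$, is prefix-free, so $\Delta$ belongs to the subclass of partial computable functions with prefix-free domain over which $H$ is the least Kolmogorov complexity (\S\ref{ss:self}); hence $H(x)\le K_\Delta(x)+c=|\Gamma(x)|+c$. This half is just an invocation of the invariance theorem.

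\emph{Part~2, prefix-free case: a computable upper bound $F$ of $H$ yields a prefix-free compressor.} Here one must actually build the code. By the Coding Theorem~\ref{thm:coding} the series $\sum_x 2^{-H(x)}$ is finite; since $F(x)\ge H(x)$, the series $\sum_x 2^{-F(x)}$ is finite too, so after adding a suitable constant to $F$ (which is absorbed into the final $c$) we may assume $\sum_x 2^{-F(x)}\le 1$ and $F(x)\ge 1$ for all $x$. Re-indexing $\{0,1\}^*$ by $\mathbb{N}$ through a computable bijection and applying the Kraft--Chaitin theorem to the computable integer sequence $(F(x))_x$ produces a computable prefix-free family $(u_x)_{x\in\{0,1\}^*}$ with $|u_x|=F(x)$. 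Then $\Gamma(x)=u_x$ is total computable, one-to-one (distinct members of a prefix-free family are distinct words), has computable domain $\{0,1\}^*$ and prefix-free range --- i.e.\ it is a prefix-free compressor --- and $|\Gamma(x)|\le F(x)+c$.

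\emph{Part~2, plain case: a computable upper bound $F$ of $K$ yields a compressor.} Now $\sum_x 2^{-K(x)}$ diverges, so Kraft is unavailable, and the asymmetry with the previous case is exactly the point to get right. One uses instead a one-sided counting bound: since there are only $2^{n+1}-1$ words of length $\le n$, at most $2^{n+1}-1$ strings $x$ satisfy $K(x)\le n$, hence at most that many satisfy $F(x)\le n$. Fix a computable enumeration $x_0,x_1,\dots$ of $\{0,1\}^*$; when processing $x_i$, compute $n=F(x_i)$ and set $\Gamma(x_i)$ to be the least word of length $n+1$ not already used as a value. The counting bound guarantees that fewer than $2^{n+1}$ strings $x$ with $F(x)=n$ are ever processed, so a length-$(n+1)$ word is always free; hence $\Gamma$ is a well-defined total computable function, one-to-one (a word of length $n+1$ is used at most once, and strings with distinct $F$-values get codewords of distinct lengths), has computable domain $\{0,1\}^*$, and satisfies $|\Gamma(x)|=F(x)+1$, so $\Gamma$ is a compressor as required. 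Part~1 is immediate once one thinks of inverting $\Gamma$; the real content, and the only delicate point, is this asymmetry in Part~2 --- a convergent weight series lets one invoke Kraft--Chaitin for $H$, whereas for $K$ one must instead lean on the ``few short programs'' bound so that the greedy, length-by-length assignment never gets stuck. The remaining verifications (computability, injectivity, computable domains, exact constants) are routine.
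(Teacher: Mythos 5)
The paper states this proposition without proof (it is imported from Bienvenu \& Merkle's work), so there is no in-paper argument to compare yours against; your proof is correct and is the standard one. Part 1 rightly inverts $\Gamma$ into a partial computable decompressor $\Delta$ with $K_\Delta(x)=|\Gamma(x)|$, whose domain is the range of $\Gamma$ and hence prefix-free in the prefix case, and then invokes the invariance theorem; Part 2 rightly splits into the Kraft--Chaitin construction applied to the computable length sequence $(F(x))_x$ (legitimate since $F\geq H$ gives $\sum_x 2^{-F(x)}\leq\sum_x 2^{-H(x)}\leq 1$ up to a constant shift) versus, for plain $K$ where that series diverges, the counting bound $\sharp\{x \mid K(x)\leq n\}\leq 2^{n+1}-1$, which guarantees the greedy assignment of length-$(n+1)$ codewords never runs out. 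All the verifications you defer (computability of the inverse search and of the greedy assignment, injectivity, computable domains) are indeed routine.
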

\noindent
Now comes the surprising characterizations of randomness
in terms of {\em computable functions}.
\begin{theorem}[Bienvenu \& Merkle, \cite{bm07}, 2007]
The following conditions are equivalent:
\begin{enumerate}
\item[i.]\quad
The sequence $\alpha\in\cantor$ is Martin-L\"of random
\item[ii.]\quad
For all prefix-free compressor $\Gamma:\{0,1\}^*\to\{0,1\}^*$,
$$
\exists c\quad \forall k\quad |\Gamma(\alpha\segment k)| \geq k - c
$$
\item[iii.]\quad
For all compressor $\Gamma$,\quad
$\exists c\quad \forall k\quad
|\Gamma(\alpha\segment k)| \geq k - H(k) - c$
\end{enumerate}
Moreover, there exists a particular prefix-free compressor
$\Gamma^*$ and a particular compressor $\Gamma^\#$ such that
one can add two more equivalent conditions:
\begin{enumerate}
\item[iv.]\quad
$\exists c\quad\forall k\quad
|\Gamma^*(\alpha\segment k)| \geq k - c$
\item[v.]\quad
$\exists c\quad\forall k\quad
|\Gamma^\#(\alpha\segment k)| 
\geq k - |\Gamma^*(\alpha\segment k)| - c$
\end{enumerate}
\end{theorem}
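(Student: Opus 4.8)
The plan is to derive the five conditions from the complexity characterizations of Theorems~\ref{thm:SKmH} and~\ref{thm:milleryu}, using the preceding Proposition (relating compressors to $K$ and $H$) as a black box, and to build the two distinguished compressors $\Gamma^*,\Gamma^\#$ as \emph{time-bounded approximations from above} to $H$ and $K$ respectively. Four of the six needed implications are immediate. If $\alpha$ is Martin-L\"of random and $\Gamma$ is a prefix-free compressor, then $H(x)\leq|\Gamma(x)|+O(1)$ by part~1 of the Proposition and $H(\alpha\segment k)\geq k-O(1)$ by Theorem~\ref{thm:SKmH}, giving (i)$\Rightarrow$(ii); the same computation with $K$ and Theorem~\ref{thm:milleryu} gives (i)$\Rightarrow$(iii). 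Then (ii)$\Rightarrow$(iv) is the instance $\Gamma=\Gamma^*$, and (iii)$\Rightarrow$(v) is the instance $\Gamma=\Gamma^\#$ combined with $H(k)\leq H(\alpha\segment k)+O(1)\leq|\Gamma^*(\alpha\segment k)|+O(1)$ ($k$ being computable from $\alpha\segment k$, and then part~1 of the Proposition). So everything reduces to constructing $\Gamma^*,\Gamma^\#$ and proving (iv)$\Rightarrow$(i) and (v)$\Rightarrow$(i).

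For the construction, fix computable approximations from above $H_t$ and $K_t$ (they exist exactly as $K$ was shown to be computable from above in \S\ref{ss:Kabove}) and a fixed super-linear computable time bound, say $T(n)=n^2$. Put $F(x)=1+\min\{H_t(x)\mid t\leq T(|x|)\}$. Then $F$ is computable, $F\geq H$, and $\sum_x 2^{-F(x)}\leq\frac12\sum_x 2^{-H(x)}\leq\frac12$ because $H$ satisfies Kraft's inequality; by the (effective) Kraft--Chaitin theorem there is a computable injection $\Gamma^*:\words\to\words$ with prefix-free range and $|\Gamma^*(x)|=F(x)$ for all $x$ --- a prefix-free compressor. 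For $\Gamma^\#$ one starts from $G(x)=1+\min\{K_t(x)\mid t\leq T(|x|)\}\geq K(x)$; since $\sum_x 2^{-K(x)}$ diverges one cannot use Kraft--Chaitin, but part~2 of the Proposition applied to the computable upper bound $G$ of $K$ directly yields a (plain) compressor $\Gamma^\#$ with $|\Gamma^\#(x)|\leq G(x)+O(1)$.

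The crux is a tightness lemma: if $\alpha$ is \emph{not} Martin-L\"of random then $\liminf_k\bigl(|\Gamma^*(\alpha\segment k)|-k\bigr)=-\infty$, and likewise $\liminf_k\bigl(|\Gamma^\#(\alpha\segment k)|-|\Gamma^*(\alpha\segment k)|-k\bigr)=-\infty$; granting it, (iv)$\Rightarrow$(i) and (v)$\Rightarrow$(i) follow by contraposition. To prove it, fix $m$. By Theorem~\ref{thm:SKmH} there is $k_0$ with $H(\alpha\segment{k_0})\leq k_0-m$, witnessed by a shortest prefix-program $p$ halting at some step $t_0$. For $k>k_0$ the string $\alpha\segment k$ has the prefix-program ``run $p$, then read a self-delimiting code for $k-k_0$, then read and append the $k-k_0$ missing bits'', of length $\leq k-m+O(\log k)$ and halting within $t_0+O(k)$ steps; hence $H_{t_0+O(k)}(\alpha\segment k)\leq k-m+O(\log k)$. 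So as soon as $T(k)\geq t_0+O(k)$ and $O(\log k)\leq m/2$ --- true for all large $k$, say $k\geq k_m$ --- we get $|\Gamma^*(\alpha\segment k)|=F(\alpha\segment k)\leq k-m/2$; since necessarily $k_m\to\infty$ as $m\to\infty$, this proves the first half. The second half is the plain analogue, now started from Theorem~\ref{thm:milleryu} so that the lost $O(\log k)$ is exactly the slack carried by the correction term; the key point is that the short prefix-program found within $T(k)$ steps is also a plain program, so $K_{T(k)}(\alpha\segment k)$, hence $|\Gamma^\#(\alpha\segment k)|$, is small at the same $k$.

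I expect the main obstacle to be precisely this lemma. The naive attempt $F(x)=H_{T(|x|)}(x)$ fails because the convergence time of $H$ on $\alpha\segment k$ is not bounded by any computable function of $k$ when $\alpha$ is non-random. The device that rescues it is to \emph{not} try to catch the short program of $\alpha\segment k$ itself, but of a far shorter prefix $\alpha\segment{k_0}$, whose convergence time $t_0$ is a fixed finite number and is therefore overtaken by the polynomial budget $T(k)=k^2$ once $k$ is large; the price is an $O(\log k)$ term for re-appending the $k-k_0$ missing bits, which is harmless for $\Gamma^*$ (it is dwarfed by the gain $m$) and is absorbed by the correction term appearing in (v). The remaining verifications --- that $\Gamma^*$ obtained from Kraft--Chaitin is one-to-one with prefix-free range and computable total domain, and that all the $O(\cdot)$ bookkeeping in the lemma is uniform in $m$ --- are then routine.
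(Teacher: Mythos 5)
The paper states this result without proof (it is quoted from Bienvenu \& Merkle), so your proposal stands on its own. The easy half is sound: (i)$\Rightarrow$(ii) and (i)$\Rightarrow$(iii) follow as you say from Theorems~\ref{thm:SKmH} and~\ref{thm:milleryu} plus part~1 of the Proposition, and the specializations (ii)$\Rightarrow$(iv), (iii)$\Rightarrow$(v) are fine. The gap is exactly where you feared, and your device does not repair it. In the tightness lemma you need a single $k$ satisfying \emph{both} $T(k)\geq t_0+O(k)$, which forces $k$ to be large compared with $t_0$, and $C\log k\leq m/2$, which forces $k\leq 2^{m/2C}$. You assert both are ``true for all large $k$'', but the second holds only for \emph{small} $k$ once $m$ is fixed. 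Since $t_0$ is the halting time of a minimal prefix program for $\alpha\segment k_0$, it is not bounded by any computable function of $k_0$ or $m$; nothing prevents $t_0$ from exceeding $T(2^{m})$, in which case the window of admissible $k$ is empty. Relocating the uncomputable convergence time from $\alpha\segment k$ to $\alpha\segment k_0$ does not remove it, and the logarithmic cost of re-appending the $k-k_0$ missing bits caps how long you may wait for the budget $T$ to catch up. So (iv)$\Rightarrow$(i) is not established for your $\Gamma^*$, and I do not believe it can be: a compressor defined by truncating $H$ at a \emph{fixed} computable time bound cannot be guaranteed ever to witness the compressibility of the prefixes of a non-random sequence. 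The known constructions avoid this entirely by building $\Gamma^*$ from the enumeration of a universal Martin-L\"of test: when a cylinder $u\cantor$ enters level $m$ of the test, the string $u$ receives via Kraft--Chaitin (total weight controlled by $\sum_m 2^{m}\mu(U_m)$) a codeword of length $|u|-m+O(1)$. The short code is attached to whichever prefix the test actually enumerates, rather than to a predetermined prefix that a time-bounded search must catch.

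A separate problem concerns clause (v) as printed in this survey. Any prefix-free compressor satisfies $|\Gamma^*(x)|\geq H(x)-O(1)$ and any compressor satisfies $|\Gamma^\#(x)|\geq K(x)-O(1)$, so $|\Gamma^\#(\alpha\segment k)|+|\Gamma^*(\alpha\segment k)|\geq K(\alpha\segment k)+H(\alpha\segment k)-O(1)$. Taking a Martin-L\"of random $\beta$ and inserting zeros at the very sparse positions $2^{2^j}$ yields a non-random $\alpha$ with $H(\alpha\segment k)\geq k-O(\log k)$, hence $K(\alpha\segment k)\geq k-O(\log k)$ as well, so the sum above is $\geq 2k-O(\log k)\geq k-c$ for all $k$: such an $\alpha$ satisfies (v) while failing (i), whatever $\Gamma^*,\Gamma^\#$ are. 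The intended condition applies $\Gamma^*$ to a code for the integer $k$ (mirroring the term $H(k)$ in Miller--Yu), not to $\alpha\segment k$; your (iii)$\Rightarrow$(v) correctly derives the printed clause, but your sketched (v)$\Rightarrow$(i) cannot succeed against it, and should be redone against the corrected clause.
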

%
%
%
\subsection{Randomness: a fragile property}
%
Though the notion of Martin-L\"of randomness is robust,
with a lot of equivalent definitions,
as a property, it is quite fragile.
\\
In fact, random sequences loose their random character under
very simple computable transformation.
For instance, even if $a_0a_1a_2...$ is random, the sequence
$0 a_0 0 a_1 0 a_2 0...$ IS NOT random since it fails the
following Martin-L\"of test:
$$
\bigcap_{n\in\mathbb{N}}\{\alpha \mid \forall i<n\ \alpha(2i+1)=0\}
$$
Indeed, $\{\alpha \mid \forall i<n\ \alpha(2i+1)=0\}$
has probability $2^{-n}$ and is an open subset of $\cantor$.
%
%
%
\subsection{Randomness is not chaos}
%
In a series of  papers \cite{joan1,joan2,joan3}, 1993-1996,
Joan Rand Moschovakis introduced a 
very convincing notion of chaotic sequence $\alpha\in\cantor$.
It turns out that the set of such sequences has measure zero
and is disjoint from Martin-L\"of random sequences.
\\
This stresses that {\em randomness is not chaos}.
As mentioned in \S\ref{sss:MLrandom},
random sequences obey laws, those of probability theory.
%
%
%
\subsection{Oracular randomness}
%
\subsubsection{Relativization}
\label{sss:nrandom}
%
Replacing ``computable" by ``computable in some oracle",
all the above theory relativizes in an obvious way,
using oracular Kolmogorov complexity and the oracular variants.
\\
In particular, when the oracle is the halting problem, i.e.
the computably enumerable set $\emptyset'$,
the obtained randomness is called $2$-randomness.
\\
When the oracle is the halting problem of partial
$\emptyset'$-computable functions, i.e.
the computably enumerable set $\emptyset''$,
the obtained randomness is called $3$-randomness.
And so on.
\\
Of course, $2$-randomness implies randomness
(which is also called $1$-randomness)
and $3$-randomness implies $2$-randomness.
And so on.
%
\subsubsection{Kolmogorov randomness and $\emptyset'$}
\label{sss:nies}
A natural question following Theorem \ref{thm:osc} is to look at
the so-called {\em Kolmogorov random sequences} which satisfy
$K(\alpha\segment k)\geq k-O(1)$
for infinitely many $k$'s.
This question got a very surprising answer involving $2$-randomness.
\begin{theorem}[Nies, Stephan \& Terwijn, \cite{nies},  2005]
\label{thm:nies}
Let $\alpha\in\cantor$.
There are infinitely many $k$
such that, for a fixed $c$, $K(\alpha\segment k) \geq k-c$
(i.e., $\alpha$ is Kolmogorov random)
if and only if $\alpha$ is $2$-random.
\end{theorem}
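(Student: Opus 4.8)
The plan is to prove the two directions separately, using a mix of Martin-L\"of test arguments relative to $\emptyset'$ and a counting/compressibility argument. Throughout, write $n\mapsto\alpha\segment n$ for initial segments, and recall (Theorem~\ref{thm:osc}) that plain $K$ of prefixes always has large downward oscillations, so ``$K(\alpha\segment k)\ge k-c$ infinitely often'' is the correct analogue of randomness to study here.

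\emph{First I would prove: $2$-random $\Rightarrow$ Kolmogorov random.} Suppose $\alpha$ is not Kolmogorov random, i.e.\ for \emph{every} $c$ there are only finitely many $k$ with $K(\alpha\segment k)\ge k-c$; equivalently, for every $c$ there is $N$ such that $K(\alpha\segment k)<k-c$ for all $k\ge N$. I want to build, relative to the oracle $\emptyset'$, a constructively null $G_\delta$ set capturing such $\alpha$, thereby contradicting $2$-randomness. The key point is that $K$ is computable from above (\S\ref{ss:Kabove}): using $\emptyset'$ as oracle one can actually decide whether $K(w)<|w|-c$, since $\emptyset'$ resolves the halting problem used in the approximation and lets us pass to the limit. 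So fix $c$ and consider $U_c=\{\beta : \exists N\ \forall k\ge N\ K(\beta\segment k)<k-c\}$. For the measure bound: the set of strings $w$ of length $k$ with $K(w)<k-c$ has size $<2^{k-c}$, so the measure of $\{\beta : K(\beta\segment k)<k-c\}$ is $<2^{-c}$; and the $\beta\in U_c$ are those eventually inside these shrinking-measure events. One must be a little careful — ``eventually inside'' is a $\Sigma^0_2$-type condition, not obviously a single $\Sigma^0_1$ open set — so the cleaner route is: for each $m$, let $V_m$ be the $\emptyset'$-c.e.\ open set $\bigcup\{w\cantor : |w|\ge m,\ K(w)<|w|-c\text{ for all prefixes of }w\text{ of length}\ge m\}$, observe $V_m$ has measure $\le 2^{-c}$ uniformly, and take the intersection over a diagonal choice of $(c,m)$ to force measure $\to 0$. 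This exhibits $\alpha$ in a $\emptyset'$-constructively-null $G_\delta$ set, so $\alpha$ is not $2$-random.

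\emph{Then I would prove the converse: Kolmogorov random $\Rightarrow$ $2$-random.} Suppose $\alpha$ is not $2$-random. Then $\alpha$ lies in some $\emptyset'$-constructively null $G_\delta$ set $\bigcap_n G_n$ with $G_n=\bigcup_m u^n_m\cantor$, $Proba(G_n)\le 2^{-n}$, and the sequence $(u^n_m)$ $\emptyset'$-c.e. The idea is that membership of $\alpha\segment k$ in $G_n$ lets us compress $\alpha\segment k$: within $G_n$ there are at most $2^{k-n}$ cylinders of ``length $k$'', so $\alpha\segment k$ is determined by $n$ plus its rank ($\le k-n$ bits) among length-$k$ strings in $G_n$ — giving roughly $K(\alpha\segment k)\le k-n+2\log n+O(1)$, i.e.\ compressibility by $\approx n$. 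The obstacle is that $G_n$ is only $\emptyset'$-c.e., not c.e., so this ranking is not plainly computable. The standard fix (this is the technical heart, and the step I expect to fight with) is to exploit the $\Pi^0_2$-approximability of $\emptyset'$: one replaces $G_n$ by a c.e.\ ``guess'' that is correct with high probability, or uses the fact that $\emptyset'$-c.e.\ open sets of small measure admit a c.e.\ approximation with only slightly larger measure, so that for infinitely many $k$ the true $\alpha\segment k$ does get compressed by a plain (unrelativized) program, with the overhead absorbed by letting $n\to\infty$ slowly. Carrying this out carefully yields: for every $c$ there are cofinitely many $k$ (coming from large enough levels $n>c$) with $K(\alpha\segment k)<k-c$, so $\alpha$ is not Kolmogorov random. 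Combining the two contrapositives gives the equivalence.

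The main obstacle, as indicated, is the converse direction: extracting an \emph{unrelativized} short program for $\alpha\segment k$ out of a $\emptyset'$-relativized null cover. The trick is precisely that $2$-randomness quantifies over $\emptyset'$-tests, which — via the limit-computability of $\emptyset'$ and the fact that $K$ is approximable from above — can be ``caught up with'' by plain Kolmogorov complexity along infinitely many lengths; this interplay between plain $K$ and the jump is exactly the content of the theorem and the reason the answer is ``$2$-random'' rather than ``$1$-random''.
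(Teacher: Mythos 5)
The paper itself gives no proof of this theorem (it is quoted from Nies, Stephan \& Terwijn), so your proposal has to stand on its own; as written it is a plan that correctly locates the two hard points but closes neither. In the direction ``$2$-random $\Rightarrow$ Kolmogorov random'', the quantifier problem you flag is fatal to your repair. Write $A_k=\{\beta: K(\beta\segment k)<k-c\}$; this is indeed $\emptyset'$-decidable of measure $<2^{-c}$, and the set of eventually-always-$c$-compressible reals is $\bigcup_N\bigcap_{k\ge N}A_k$, which does have measure $\le 2^{-c}$ but is $\Sigma^0_2(\emptyset')$, one level above the $\Sigma^0_1(\emptyset')$ sets allowed as levels of a $\emptyset'$-Martin-L\"of test. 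Your $V_m$ is exactly $A_m$ (take the witness $w$ of length $m$), so $\bigcap_m V_m$ is the set of reals compressible at \emph{every} length; a non-Kolmogorov-random $\alpha$ need not lie in it, because the threshold $N=N(c)$ past which its prefixes become compressible cannot be removed, and no ``diagonal choice of $(c,m)$'' eliminates that existential quantifier. Nor can one simply cover each closed piece $\bigcap_{k\ge N}A_k$ by a $\Sigma^0_1(\emptyset')$ open set and take the union: the individual covers can be kept of measure $\le 2^{-c+1}$, but their union need not be small, and the exact measures of these $\Pi^0_1(\emptyset')$ classes are only $\emptyset''$-computable. This mismatch is precisely why the known proof of this direction (due to Miller) is not a direct test construction but goes through an auxiliary compression function selected from a $\Pi^0_1$ class by a basis theorem.

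In the converse direction, the step you explicitly defer --- extracting \emph{plain} short programs for $\alpha\segment k$ from a $\Sigma^0_1(\emptyset')$ cover of measure $\le 2^{-n}$ --- is the entire content of the Nies--Stephan--Terwijn argument, not a routine technicality. The obstruction is that a computable approximation to the $\emptyset'$-enumeration of $G_n$ may transiently enumerate cylinders of total measure far exceeding $2^{-n}$ before the oracle settles, so the naive ``index of $\alpha\segment k$ among length-$k$ strings seen so far in $G_n$'' is not bounded by $k-n$ bits; saying one should use ``a c.e.\ guess that is correct with high probability'' names this problem rather than solving it. There is also a logical slip at the end: your mechanism would yield compression at the lengths $|u^n_m|$ of the covering cylinders, i.e.\ at infinitely many $k$, whereas refuting Kolmogorov randomness requires $K(\alpha\segment k)<k-c$ for \emph{all sufficiently large} $k$. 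That last gap is repairable (compression propagates to longer prefixes at logarithmic cost, as in the proof of Theorem~\ref{thm:osc}), but the de-relativization step is not, at least not by the means sketched here.
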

%
%
\subsection{Randomness: a new foundation for probability theory?}
%
Now that there is a sound mathematical notion of randomness,
is it possible/reasonable to use it as a new foundation for
probability theory?
\\
Kolmogorov has been ambiguous on this question.
In his first paper on the subject,
see  p. 35--36 of \cite{kolmo65}, 1965,
he briefly evoked that possibility :
\begin{quote}\em
\dots to consider the use of the
[Algorithmic Information Theory] constructions
in providing a new basis for Probability Theory.
\end{quote}
However, later, see p. 35--36 of \cite{kolmo83}, 1983,
he separated both topics:
\begin{quote}\em
``there is no need whatsoever to change the
established construction of the mathematical
probability theory on the basis of the general theory
of measure.
I am not enclined to attribute the significance of
necessary foundations of probability theory to the
investigations [about Kolmogorov complexity] that
I am now going to survey.
But they are most interesting in themselves.
\end{quote}
though stressing the role of his new theory of random
objects for {\em mathematics as a whole}
in \cite{kolmo83}, p. 39:
\begin{quote}\em
The concepts of information theory as applied
to infinite sequences give rise to very interesting
investigations, which, without being indispensable
as a basis of probability theory, can acquire a
certain value in the investigation of the
algorithmic side of mathematics as a whole.
\end{quote}
%
%

%
%
%
\end{document}